\documentclass[submission, Phys]{SciPost}

\usepackage[english]{babel}

\usepackage{amsmath,amssymb,amsthm,bm}
\usepackage{graphicx}
\usepackage[colorinlistoftodos]{todonotes}
\usepackage{hyperref}
\usepackage{diagbox}
\usepackage{epigraph}
\usepackage{float}
\usepackage{multirow}
\usepackage{tikz}
\usepackage{braket}
\usepackage[all]{xy}
\usepackage{bbm}
\usepackage{makecell}
\usepackage{dsfont}
\usepackage{ulem}
\usepackage{comment}
\usepackage{rotating}
\usepackage{mathtools}

\newcommand{\xc}[1]{{\color{brown}{[(XC) #1]}}}

\newcommand{\Rep}{{\rm Rep}}
\newcommand{\VEC}{{\rm Vec}}

\newcommand\IC{\mathbb{C}}

\newcommand\IZ{\mathbb{Z}}

\newcommand\CA{\mathcal{A}}

\newcommand\CC{\mathcal{C}}
\newcommand\CD{\mathcal{D}}

\newcommand\CM{\mathcal{M}}
\newcommand\CN{\mathcal{N}}

\newcommand\CP{\mathcal{P}}

\newcommand\CZ{\mathcal{Z}}

\newcommand{\tabref}[1]{Tab.\,\ref{#1}}

\newcommand{\appref}[1]{Appendix\,\ref{#1}}
\newcommand{\grp}[1]{\langle  #1 \rangle}
\newcommand{\TY}{\mathrm{TY}}
\newcommand{\od}{\text{od}}
\newcommand{\dd}{\text{d}}
\newcommand{\kk}{\mathds{C}}
\newcommand{\ii}{\mathsf{i}}

\newcommand{\Noe}{\mathcal{N}_{oe}}
\newcommand{\hcenter}{{\CZ(\Rep(H_8))}}

\newtheorem{physicstheorem}{Physics Theorem}

\newcommand{\Rom}[1]{\uppercase\expandafter{\romannumeral#1}}

\newcommand{\ex}[1]{\left\langle #1 \right\rangle}

\newcommand{\mc}{\mathcal}

\newtheorem{lemma}{Lemma}
\newtheorem{corollary}{Corollary}

\usepackage[comma,numbers,sort&compress]{natbib}
\newcommand{\onlinecite}[1]{\nocite{#1}\citenum{#1}}

\begin{document}

\begin{center}{\Large \textbf{
Spontaneous breaking of non-invertible symmetries and
duality to beyond-Landau transitions
}}\end{center}

\begin{center}
Xie Chen\textsuperscript{1,2},
Shang Liu\textsuperscript{3,2},
Da-Chuan Lu\textsuperscript{4,5*},
Nathanan Tantivasadakarn\textsuperscript{6}
\end{center}

\begin{center}
{\bf 1.} Institute for Quantum Information and Matter, \\
California Institute of Technology, Pasadena, CA, 91125, USA
\\
{\bf 2.} Department of Physics,
California Institute of Technology, Pasadena, CA, 91125, USA
\\
{\bf 3.} Institute of Physics, Chinese Academy of Sciences, Beijing, 100190, China
\\
{\bf 4.} Department of Physics, Harvard University, Cambridge, MA, 02138, USA
\\
{\bf 5.} Department of Physics and Center for Theory of Quantum Matter,\\
University of Colorado, Boulder, CO, 80309, USA
\\
{\bf 6.} C.N. Yang Institute for Theoretical Physics,\\
Stony Brook University, Stony Brook, New York, 11794, USA
\\
* dclu137@gmail.com
\end{center}

\begin{center}
\today
\end{center}


\section*{Abstract}
{Spontaneous symmetry breaking is a well-understood mechanism for generating distinct phases of matter. Recently, the notion of symmetry has been broadened to include operations without inverses, leading to the concept of non-invertible symmetries. How do symmetry-breaking phases associated with non-invertible symmetries differ from those arising from conventional invertible symmetries? We address this question using concrete lattice models of the gapped phases with non-invertible Rep($H_8$) symmetry as an example. We find that, despite the symmetry being non-invertible, the symmetry-breaking phases can still be characterized by the long-range correlation of local order parameters, which obey a more general algebraic structure than in the invertible setting. Furthermore, via generalized gauging, certain non-invertible symmetry-breaking transitions can be mapped to deconfined quantum critical points of invertible symmetries, and vice versa. We establish precise conditions under which this duality holds and illustrate them with several families of examples, providing a systematic route to studying beyond-Landau phase transitions.
}

\tableofcontents
\section{Introduction}

The study of phase and phase transitions is the central topic in condensed matter physics. Landau’s paradigm of symmetry breaking provides a systematic approach to describe transitions when the symmetry of the system is spontaneously broken\cite{Landau1937}. It is now well understood that the applicability of this paradigm is limited, and a variety of ‘beyond-Landau’ transitions have been discovered and analyzed in recent decades. Is there a more general framework that encompasses both within-Landau and beyond-Landau types of transitions? 

Recently, a generalized Landau paradigm has emerged\cite{Gaiotto2015, Hofman2019, Delacretaz2020, Iqbal2020, Moradi2022topological,Bhardwaj2024,McGreevy2023,Chen2025}, which maps beyond-Landau transitions to symmetry-breaking transitions by making use of the notions of generalized symmetry\cite{Gaiotto2015,McGreevy2023,Shao2024TASI,SchaferNameki2024} and generalized gauging\cite{Fuchs2002,Carqueville2016,aasen2016topological,yuji2017gauging,Bhardwaj2018,Thorngren2019Fusion,aasen2020topological,Lootens2023,Lootens2024dualities,Haegeman2015,Kong2020,Gaiotto2021,Chatterjee2023holographic,Huang2023topological,kong2025higher,Bhardwaj2025,seifnashri2025gauging}. In particular, it has been shown that in $1+1$-dimensions, various beyond-Landau transitions, such as the deconfined critical point between incompatible symmetry-breaking phases and the transition between different symmetry-protected topological phases, can all be mapped to symmetry-breaking transitions, but with potentially non-invertible symmetries. Such mappings generalize the well-known examples of the Kramers-Wannier duality\cite{Kramers1941}, the Jordan-Wigner transformation\cite{Jordan1928} and the Kennedy-Tasaki transformation\cite{Kennedy1992a,Kennedy1992b} and gives rise to the notion of generalized gauging. Therefore, if we can understand symmetry-breaking phases and symmetry-breaking phase transitions of non-invertible symmetries, we can achieve a more systematic understanding of transitions between gapped phases in $1+1$D. 

Noninvertible symmetries in $1+1$D contain many interesting examples: 1. The Kramers-Wannier duality\cite{Kramers1941} and the Kennedy-Tasaki transformation\cite{Kennedy1992a,Kennedy1992b} generate the simplest representatives of the Tambara-Yamagami fusion category symmetries\cite{Tambara1998}, which are non-invertible $\IZ_2$-extensions of abelian invertible symmetries. 2. The $\Rep(G)$ symmetry, where $G$ is a non-abelian finite group, can be obtained by gauging the non-abelian symmetry group in a spin chain \cite{Bhardwaj2018,Thorngren2019Fusion,Fechisin:2023odt,Chatterjee:2024ych,bhardwaj2025lattice,Chung:2025ulc}. 3. The Fibonacci fusion category and the Ising fusion category play an important role in the minimal models of CFT \cite{Oshikawa1997,Petkova2001,Frohlich2004,Chang2019} and find explicit realization in anyon chains\cite{Feiguin2007,Buican:2017rxc,Ning:2023qbv}. 4. More generally, the non-invertible symmetry generated by the topological defect lines in rational CFTs \cite{Fuchs2002,Frohlich2004,frohlichDefectLinesDualities2010,Chang2019}. Ref. \cite{Thorngren2019Fusion, Thorngren:2021yso, Fechisin:2023odt,Bhardwaj:2024qrf,Bhardwaj2025, Inamura:2021szw} studied the gapped phases under non-invertible symmetries. 

In this paper, we study in more detail how non-invertible symmetries are spontaneously broken in $1+1$d systems. Symmetry-breaking of noninvertible symmetries is similar to that of invertible symmetries in many ways, but also differs in many others. Invertible symmetries form groups, which are broken down to subgroups when an order parameter, labeled by an irreducible representation of the group, fluctuates and attains a nonzero expectation value. The symmetry broken phase is characterized by the long-range correlation of the order parameter and a degenerate ground space, where the short range correlated symmetry broken states correspond to the cosets of the subgroup and are mapped into each other by the broken symmetries. Noninvertible symmetries, on the other hand, form a fusion category. When a non-invertible symmetry is broken, to what extent does our previous understanding of the invertible case generalize? Do non-invertible symmetries break into sub-categories? How to label the symmetry broken states and how do they transform under the broken symmetries? Can we still identify the symmetry breaking pattern using the correlation function of local order parameters? It turns out the notion of `target manifold' or the number of order parameter configurations can be straight-forwardly generalized while the notion of remaining symmetry is more tricky to define. 

Using the Symmetry Topological Field Theory (SymTFT) formalism\cite{Kong2015,Ji2020,Kong2020,Bhardwaj2020,Pulmann2021,Gaiotto2021,Lichtman2021,kong2020mathematical,Kong2020classification,Apruzzi2023,Chatterjee2023symmetry,Moradi2022topological,Freed2023topological,Lin2023,Kong2018,Kong2021,Kong2022one,Kong2022categories,Kong2024categories,Xu2024,Bhardwaj2025}, the gapped phases with non-invertible symmetry can be understood as a quasi-$1+1$D sandwich structure with a $2+1$D topological order in the bulk and gapped boundaries on the two sides. The algebra of the ground states in the sandwich structure and the operators acting on them is given in Ref.~\cite{Cong2017}. In this paper, we illustrate the structure of this algebra using the example of symmetry breaking phases with Rep($H_8$) symmetry, where $H_8$ is the smallest non-commutative and non-cocommutative self-dual Hopf algebra. Through explicit lattice model, we demonstrate that
\begin{itemize}
\item The long-range correlation of local order parameters can still be used to identify the symmetry breaking order even though the order parameters can transform in a non-local way under the symmetry.
\item Local order parameters of a symmetry breaking phase correspond to possible `tunneling channels' between the two boundaries in the sandwich structure and satisfy the algebra of the `tunneling' operators.
\item When the symmetry is fully broken, the short-range correlated symmetry broken states can be labeled by the simple objects in the fusion category of the symmetry. Their transformations under the symmetry satisfy the same fusion rule as the symmetry operators.
\item When the symmetry is partially broken, there is no obvious labeling for the short-range correlated symmetry broken states. But it is always possible to find a `cat-state' basis as a superposition of the symmetry-broken states, such that the `cat-states' are labeled by the tunneling channels and transform accordingly.
\item It is not always obvious what the remaining symmetry is. In partial symmetry broken phases, it is possible that certain linear combination of the symmetries keeps all the symmetry-broken states invariant, while individual symmetries in the combination do not.
\end{itemize}

Note that in this paper, when we discuss order parameters, we consider only \textit{local} order parameters. This corresponds to the `untwisted' order parameters discussed in, for example, Ref.~\cite{Bhardwaj2024}.

$\Rep(H_8)$ is an interesting non-invertible symmetry: it is equivalent to a particular $\IZ_2 \times \IZ_2$ Kramers-Wannier symmetry, but unlike the ordinary $\IZ_2$ Kramers-Wannier symmetry, it is anomaly-free and hence admits a symmetric gapped phase. There are six gapped phases with Rep($H_8$) symmetry, and we can study the transitions between them. Not all transitions between the pairs of the gapped phases can be mapped to `within Landau'
symmetry-breaking transitions of regular invertible symmetries through generalized gauging. This feature makes it more interesting than other simple non-invertible symmetries like Rep($S_3$) and Rep($D_8$).
Instead, the Rep($H_8$) symmetry can be partially gauged to yield a dual anomalous $D_8$ symmetry and the transitions between some of the Rep($H_8$) gapped phases can be mapped to Deconfined-Quantum-Critical-Point (DQCP) transitions between symmetry-breaking phases of the $D_8$ symmetry. Gapped phases and critical theories with the Rep($H_8$) symmetry have also been studied in Ref.~\cite{Thorngren2019Fusion,Thorngren:2021yso,Choi2024,wang2024gauging,Perez-Lona:2023djo,Meng:2024nxx,Chen:2025ivo,Li:2025wes,lu2026generalized}. Besides the $\Rep(H_8)$ example, we generalize this correspondence and show that for all group-theoretical Hopf algebras, and spontaneous symmetry breaking transitions of such anomaly-free non-invertible symmetries can be gauged into transitions of ordinary group symmetries \footnote{In particular all the anomaly-free non-invertible symmetries with Frobenius-Perron dimension less than $36$ are group-theoretical.}, which become DQCPs when the associated $3$-cocycle is non-trivial. 

This paper is organized as follows. In Section~\ref{sec:SymTFT}, we discuss the SymTFT realization of the non-invertible symmetry $\Rep(H_8)$ and describe the bulk topological order of the SymTFT sandwich $\CZ(\Rep(H_8))$ using two copies of the doubled Ising after the anyon condensation $1 \oplus \psi_1\overline\psi_1 \psi_2\overline\psi_2$. The bulk topological order admits several equivalent descriptions including the twisted quantum double $D(D_8)^\gamma$. In Section~\ref{sec:Lattice}, we provide explicit lattice realizations of all the gapped phases with $\Rep(H_8)$ symmetry and analyze their spontaneous symmetry breaking patterns, ground-state degeneracies, and their characterization using local order parameters. In Section~\ref{sec:Dual}, we gauge an anomaly-free subgroup of $\Rep(H_8)$ and obtain the dual lattice model with $D_8$ symmetry carrying a ``mixed'' ’t~Hooft anomaly $\gamma$, thereby establishing the duality between the order-to-disorder transition of $\Rep(H_8)$ and a deconfined quantum critical point of $(D_8,\gamma)$. In Section~\ref{sec:General}, we use the duality framework to discuss the general relation between 1+1D DQCPs with anomalous group symmetry $(G,\omega)$ and order-to-disorder transitions of anomaly-free (possibly non-invertible) group-theoretical symmetries, and give various examples. The appendices collect the mathematical details of the twisted bicrossed product construction of Hopf algebra (Appendix~\ref{app:twbiprod}), the full categorical data for Rep($H_8$) (Appendix~\ref{app:grpthryH8}), the Drinfeld center related to the $\Rep(H_8)$ symmetry (Appendix~\ref{app:Drinfeld_center}), the Lagrangian Algebra describing the gapped boundaries of the bulk topological order $\CZ(\text{Rep}(H_8))$ (Appendix~\ref{app:LagrangianAlgebra}), and the rigorous derivation of the dual lattice model symmetry (Appendix~\ref{app:LatticeDualSymmetry}). 

\section{Symmetry TFT Description of Rep($H_8$) symmetry}
\label{sec:SymTFT}

In this section, we describe the realization of the Rep($H_8$) symmetry in the Symmetry TFT formalism. In the symmetry TFT formalism, a $1+1$D system is realized as a sandwich structure with a $2+1$D topological bulk, as shown in Fig.~\ref{fig:symTFT}. The top boundary is set to be in a gapped state through the condensation of certain bulk anyons. The bottom boundary is left open to host the dynamics of the system. With a finite distance between the top and bottom boundary, the sandwich reduces to a $1+1$D system. The advantage of having the sandwich structure is that the dynamics at the bottom boundary is spatially separated from the action of the symmetry. The string operators that run parallel to the boundary ($W_{\alpha}$ associated with the anyon $\alpha$) become the ($0$-form) symmetry of the $1+1$D system. The ones that correspond to anyons condensed on the top boundary take fixed values while the others become true symmetries of the $1+1$D systems that act non-trivially. Tunneling of condensed anyons out of the top boundary toggles the symmetry sectors. Therefore, the vertical string operators ($V_{\beta}$ associated with the anyon $\beta$) corresponds to local charged operators under the symmetry. 

\begin{figure}[th]
\begin{center}
\includegraphics[width= 0.6\textwidth]{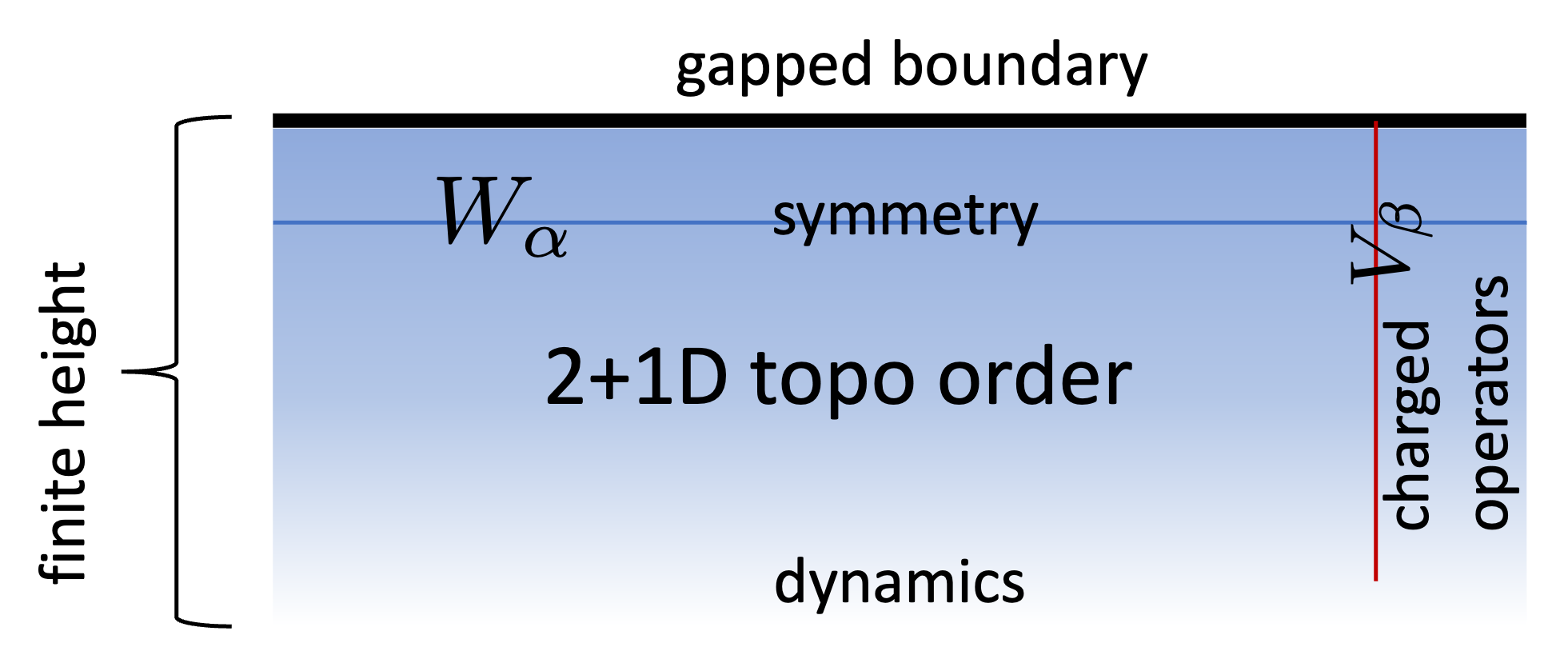}
\caption{In the Symmetry Topological Field Theory, a $1+1$D system is realized in a sandwich structure with a $2+1$D topological order in the bulk. The top boundary is gapped, fixed, and determines the symmetry of the system. The bottom boundary contains all the dynamics of the system and is at a finite distance from the top boundary. The horizontal string operators $W_{\alpha}$ become the symmetry operators of the $1+1$D system. The vertical string operators $V_{\beta}$ become local charged operators if $\beta$ is condensed on the top boundary.} 
\label{fig:symTFT}
\end{center}
\end{figure}

Using the Symmetry TFT formalism, we can specify the Rep($H_8$) symmetry without writing down all the categorical data, but instead as a descendant of a $2+1$D topological order. This realization also naturally leads to the 1D lattice formulation used in section~\ref{sec:Lattice}.

\subsection{$2+1$D Topological Bulk}

The $2+1$D topological bulk can be obtained from two copies of the doubled Ising topological order through a simple condensation. A single copy of the chiral Ising topological order contains two nontrivial anyons $\psi$ and $\sigma$ satisfying the fusion rule 
$$
\psi\otimes \psi = 1, \ \psi\otimes\sigma = \sigma, \ \sigma\otimes \sigma = 1 \oplus \psi
$$ 
The anyon content of the doubled Ising topological order has a tensor product structure $\{1,\psi,\sigma\}\otimes \{1,\bar{\psi},\bar{\sigma}\}$, where $\bar{\psi}$ and $\bar{\sigma}$ are the time reversal copies of $\psi$ and $\sigma$. The anyons in two copies of the doubled Ising topological order come from the tensor product of four parts
$$
\{1,\psi_1,\sigma_1\}\otimes \{1,\bar{\psi}_1,\bar{\sigma}_1\}\otimes \{1,\psi_2,\sigma_2\}\otimes \{1,\bar{\psi}_2,\bar{\sigma}_2\}
$$
The topological order that leads to the Rep($H_8$) symmetry can be obtained from two copies of the doubled Ising topological order by condensing $\psi_1\bar{\psi}_1\psi_2\bar{\psi}_2$. Since all the $\psi$ anyons are fermions, the composite $\psi_1\bar{\psi}_1\psi_2\bar{\psi}_2$ is a boson and can be condensed. After the condensation, the topological state contains 22 types of deconfined anyons. 8 of them are abelian and come from the composition of the $\psi$ anyons
$$
1, \psi_1, \bar{\psi}_1, \psi_2, \bar{\psi}_2, \psi_1\bar{\psi_1}, \psi_1\psi_2, \psi_1\bar{\psi}_2
$$
Single $\sigma$ anyons are confined due to the condensation of $\psi_1\bar{\psi}_1\psi_2\bar{\psi}_2$, while the composite of two $\sigma$ anyons survive as dimension 2 nonabelian excitations in the new topological state. There are 12 of them.
$$
\sigma_1\sigma_2, \sigma_1\sigma_2\bar{\psi}_1, \bar{\sigma}_1\bar{\sigma}_2, 
\bar{\sigma}_1\bar{\sigma}_2\psi_1,
\sigma_i\bar{\sigma}_i,
\sigma_i\bar{\sigma}_i\psi_j,
\sigma_i\bar{\sigma}_j, \sigma_i\bar{\sigma}_j\bar{\psi}_i\  (i,j=1,2, i\neq j) 
$$

Finally, the composite of four $\sigma$ anyons splits into two nonabelian anyons $A$ and $B$, each with dimension 2 and satisfying the fusion rule
$$
A \otimes A = B \otimes B = 1 \oplus \psi_1\bar{\psi}_1 \oplus \psi_1\psi_2 \oplus \psi_1\bar{\psi}_2, \ A \otimes B = \psi_1 \oplus \bar{\psi}_1 \oplus \psi_2 \oplus \bar{\psi}_2
$$
It is helpful to notice several automorphisms of the anyon set. The fusion and braiding data of the anyons remain invariant if $\{1, \psi_1, \sigma_1\}$ is exchanged with $\{1, \psi_2, \sigma_2\}$, if $\{1, \bar{\psi}_1, \bar{\sigma}_1\}$ is exchanged with $\{1, \bar{\psi}_2, \bar{\sigma}_2\}$, and if $A$ is exchanged with $B$. This forms the automorphism group $\mathbb Z_2^3$ of the bulk topological order. In the category language, the bulk is the `center' of the boundary which gives rise to the Rep($H_8$) symmetry. Therefore, the bulk topological order can be labeled as $\CZ(\Rep(H_8))$.

\subsection{Gapped boundaries and associated symmetries}
\label{sec:boundary_sym}

Among all the anyons in the topological state, the following ones are bosonic
$$
1, \psi_1\bar{\psi_1}, \psi_1\psi_2, \psi_1\bar{\psi}_2, \sigma_1\bar{\sigma}_1, \sigma_1\bar{\sigma}_2, \sigma_2\bar{\sigma}_1, \sigma_2\bar{\sigma}_2, A, B
$$
and can be potentially condensed and result in a gapped boundary. One gapped boundary can be obtained by condensing the Lagrangian algebra of
$$
\mathcal{A}_1: 1 \oplus \psi_1\bar{\psi}_1 \oplus \sigma_1\bar{\sigma}_1 \oplus \sigma_2\bar{\sigma}_2 \oplus A
$$
Due to the condensation, the horizontal string operator of $\psi_1\bar{\psi}_1$, $ \sigma_1\bar{\sigma}_1$, $\sigma_2\bar{\sigma}_2$, $A$ are fixed to specific values. The remaining (independent) string operators of $\psi_1$, $\psi_2$, $\psi_1\psi_2$, $\sigma_1\sigma_2$ (together with that of $1$) give rise to the Rep($H_8$) symmetry. The horizontal string operators $\{W_{1}, W_{\psi_1}, W_{\psi_2}, W_{\psi_1\psi_2}\}$ are invertible and form a $\mathbb Z_2\times \mathbb Z_2$ group. $W_{\sigma_1\sigma_2}$ is a non-invertible symmetry operator with the nonabelian fusion rule
$$
W_{\psi_i}\times W_{\sigma_1\sigma_2} = W_{\sigma_1\sigma_2}, \ \ W_{\sigma_1\sigma_2}\times W_{\sigma_1\sigma_2} = W_1+W_{\psi_1}+W_{\psi_2}+W_{\psi_1\psi_2}
$$

Using the automorphism among the bulk anyons, we can find three more Lagrangian algebras with the same structure:
$$
\begin{array}{c}
\mathcal{A}_2: 1 \oplus \psi_1\bar{\psi}_1 \oplus \sigma_1\bar{\sigma}_1 \oplus \sigma_2\bar{\sigma}_2 \oplus B; \\
\mathcal{A}_3: 1 \oplus \psi_1\bar{\psi}_2 \oplus \sigma_1\bar{\sigma}_2 \oplus \sigma_2\bar{\sigma}_1 \oplus A; \\
\mathcal{A}_4: 1 \oplus \psi_1\bar{\psi}_2 \oplus \sigma_1\bar{\sigma}_2 \oplus \sigma_2\bar{\sigma}_1 \oplus B
\end{array}
$$
Boundaries described by $\mathcal{A}_1$ to $\mathcal{A}_4$
all result in a non-invertible Rep($H_8$) symmetry. 

The bulk topological order has a different type of gapped boundary described by Lagrangian algebra
$$
\begin{array}{c}
\mathcal{A}_5: 1\oplus\psi_1\bar{\psi}_1 \oplus \psi_1\bar{\psi}_2 \oplus \psi_1\psi_2 \oplus 2A \\ 
\mathcal{A}_6: 1\oplus\psi_1\bar{\psi}_1 \oplus \psi_1\bar{\psi}_2 \oplus \psi_1\psi_2 \oplus 2B
\end{array}
$$
These boundaries correspond to an invertible $D_8$ symmetry but with an anomaly. To see this, we analyze the independent string operators that correspond to nontrivial symmetries for boundary $\mathcal{A}_5$ are $W_{\psi_1}$, $W_{\sigma_1\bar{\sigma}_1}$, $W_{\sigma_1\sigma_2}$, and $W_{\bar{\sigma}_1\sigma_2}$ (the analysis for $\mathcal{A}_6$ is analogous.). Since all fermion pairs are condensed on the boundary, there is a single representative $W_{\psi_1} \sim W_{\psi_2}\sim W_{\bar \psi_1}\sim W_{\bar\psi_2} \equiv W_\psi$, which is an order $2$ invertible symmetry. The others turn out to be non-simple and hence split near the boundary. For example,
$$
W_{\sigma_1\bar{\sigma}_1} \times W_{\sigma_1\bar{\sigma}_1} = W_1 + W_{\psi_1} + W_{\bar{\psi}_1} + W_{\psi_1\bar{\psi}_1} 
$$
Due to the condensation of $\psi_1\bar{\psi}_1$, there are two copies of identity in the fusion of $W_{\sigma_1\bar{\sigma}_1}$ with itself. Therefore, $W_{\sigma_1\bar{\sigma}_1}$ splits into two parts $W_{e}$ and $W_{m}$. Analogous to anyons of the toric code. Each is invertible with order two and the pair fuse into $W_{\psi_1}$.
\begin{equation}
W_{e} \times W_{e} = W_{m} \times W_{m} = 1, \ W_{e} \times W_{m} = W_{\psi}
\label{eq:W11bar}
\end{equation}
Similarly, $W_{\bar \sigma_1\sigma_2}$ splits into $W_{e'}$ and $W_{m'}$. Now, let us consider the splitting of  $\sigma_1\sigma_2$, which has topological spin $e^{i\pi/4}$. Let us call the splitting $W_d$ and $W_{d'}$. These operators are not order two because they square to the fermion $\psi$:

\begin{equation}
W_{d}\times W_{d} = W_{d'}\times W_{d'} = W_{\psi}, W_{d} \times W_{d'} = 1
\label{eq:W12}
\end{equation}
Thus, we see that $W_{d}$ generates a $\mathbb Z_4$ subgroup. These anyons behave like the anyons of $U(1)_4$. However, note that the toric code and $U(1)_4$ share the same fermion.

We have identified eight invertible lines, so our symmetry must be a group of order eight. To see the group structure, we notice that the product of $e$ and $d$ must come from the splitting from $\bar{\sigma}_1\sigma_2$, which we choose to be $e'$. However, $e'$ also has order two. Thus, we conclude that the symmetry group must be $D_8$.

To show that $D_8$ is anomalous, we analyze the gauging of these symmetries. Here, we focus on the analysis of the self-anomaly of Abelian subgroups. The non-trivial spin does not immediately signify an anomaly. This is because in the sandwich, we are only gauging the symmetry along a codimension-1 manifold rather than the full spacetime. Such higher gauging of a $\mathbb Z_N$ 1-form symmetry, is only possible if $\theta^N=1$ where $\theta$ is its corresponding topological spin~\cite{Roumpedakis:2022aik}.  The $\mathbb Z_4$ subgroup generated by $d$ is not gaugeable because $\theta_d^4 = -1$. Thus, this $\mathbb Z_4$ subgroup is anomalous. However, its $\mathbb Z_2$ subgroup generated by $\psi$ is gaugeable. In fact, all the $\mathbb Z_2$ subgroups (generated by $\psi,e,m,e',m'$) are gaugeable.

\begin{table}[]
    \centering
    \begin{tabular}{|c|c|c|c|}
    \hline
    Before splitting & after splitting & $D_8$ & $\theta$\\
    \hline
    1 & 1 & 1 & 1\\
    \hline
    $\psi_1\sim \psi_2$ & $\psi$ & $r^2$ & $-1$\\
        \hline
       \multirow{2}{*}{$\sigma_1\bar \sigma_{1}$}& $e$ & $s$ & 1 \\
       \cline{2-4}
       & $m$ & $sr^2$ & 1\\
       \hline
    \multirow{2}{*}{$\bar{\sigma}_1\sigma_2$}& $e'$ & $sr$ & 1\\
    \cline{2-4}
       & $m'$ & $sr^3$ & 1 \\
       \hline
        \multirow{2}{*}{$\sigma_1\sigma_2$}& $d$ & $r$ & $e^{\pi i/4}$\\
        \cline{2-4}
       & $d'$ & $r^3$ & $e^{\pi i/4}$\\
         \hline
    \end{tabular}
    \caption{The anomalous $D_8$ symmetry can be thought of as the fusion category  $\frac{\text{TC}\boxtimes U(1)_4}{ 1 \oplus\psi\psi'}$, where the fermion $\psi$ of the toric code and the fermion $\psi'$ in $U(1)_4$ is identified.}
    \label{tab:placeholder}
\end{table}

\subsection{Equivalent twisted quantum double description}
Since the boundaries $\CA_5$ and $\CA_6$ realize an anomalous $D_8$ symmetry, the bulk topological order $\hcenter$ must admit a description as a twisted quantum double. Indeed, $\hcenter \cong \mathcal{D}(D_8)^\gamma$, where $\gamma \in H^3(D_8, U(1))$. We parameterize $D_8$ using its standard generators and relations:
\begin{equation}
    D_8=\langle r,s| r^4=s^2=1,srs=r^{-1} \rangle
\end{equation}
The specific 3-cocycle twist is explicitly given as \cite{ostrik2002module,natale2017equivalence,Lu:2025gpt}:
\begin{equation}
    \gamma(r^{i_1}s^{j_1},r^{i_2}s^{j_2},r^{i_3}s^{j_3}) = \exp\left(\frac{4\pi i}{4^2}(-1)^{j_1}i_1(i_2+(-1)^{j_2}i_3-[i_2+(-1)^{j_2}i_3]_4)\right)
\end{equation}
where $[\cdots]_4$ denotes the expression modulo 4. Indeed, setting $j_1=j_2=j_3=0$, the cocycle reduces to the standard 3-cocycle representative for the $\mathbb Z_4$ subgroup $\nu=2$.

The anyons in the twisted quantum double are labeled by $[a, \chi^{C_G(a)}]$, where $a$ is a representative of a conjugacy class, and $\chi^{C_G(a)}$ is an irreducible projective representation of the centralizer $C_G(a)$ corresponding to the following 2-cocycle called the slant product of $\gamma$:
$$\beta_a(h, g) = \gamma(a, h, g) \gamma(h, h^{-1}ah, g)^{-1} \gamma(h, g, (hg)^{-1}ahg)$$
However, in this case, the cocycle turns out to be a coboundary \cite{Coste:2000tq}. This means that for each $a \in D_8$, there exists a 1-cochain $\epsilon_a$ on $C_G(a)$ such that $\beta_a(h, g)$ can be expressed as:
$$\beta_a(h, g) = (\delta\epsilon_a)(h, g) = \epsilon_a(h) \epsilon_a(g) \epsilon_a(hg)^{-1}$$
Thus, all the resulting representations are actually linear representations. Consequently, while the 3-cocycle twist $\gamma$ leaves the fusion rules of the anyons invariant, it modifies their spins and braiding statistics, manifesting as changes in the $S$ and $T$ matrices \cite{Coste:2000tq}. It is straightforward to calculate the anyon data. Up to anyon permutation symmetry, the identifications are listed in Table \ref{tab:anyon_id}. In particular, the Lagrangian algebras in the last section can be expressed as
\begin{align*}
\CA_1 &= [1,\,\chi_{1}^{D_8}]
\oplus [1,\,\chi_{3}^{D_8}]
\oplus [s,\,\chi_{0,0}^{\mathbb{Z}_2\times\mathbb{Z}_2}]
\oplus [s,\,\chi_{0,1}^{\mathbb{Z}_2\times\mathbb{Z}_2}]
\oplus [r^2,\,\chi_{5}^{D_8}] \\[6pt]
\CA_2 &= [1,\,\chi_{1}^{D_8}]
\oplus [1,\,\chi_{3}^{D_8}]
\oplus [s,\,\chi_{0,0}^{\mathbb{Z}_2\times\mathbb{Z}_2}]
\oplus [s,\,\chi_{0,1}^{\mathbb{Z}_2\times\mathbb{Z}_2}]
\oplus [1,\,\chi_{5}^{D_8}] \\[6pt]
\CA_3 &= [1,\,\chi_{1}^{D_8}]
\oplus [1,\,\chi_{4}^{D_8}]
\oplus [rs,\,\chi_{1,0}^{\mathbb{Z}_2\times\mathbb{Z}_2}]
\oplus [rs,\,\chi_{1,1}^{\mathbb{Z}_2\times\mathbb{Z}_2}]
\oplus [r^2,\,\chi_{5}^{D_8}] \\[6pt]
\CA_4 &= [1,\,\chi_{1}^{D_8}]
\oplus [1,\,\chi_{4}^{D_8}]
\oplus [rs,\,\chi_{1,0}^{\mathbb{Z}_2\times\mathbb{Z}_2}]
\oplus [rs,\,\chi_{1,1}^{\mathbb{Z}_2\times\mathbb{Z}_2}]
\oplus [1,\,\chi_{5}^{D_8}] \\[6pt]
\CA_5 &= [1,\,\chi_{1}^{D_8}]
\oplus [1,\,\chi_{2}^{D_8}]
\oplus [1,\,\chi_{3}^{D_8}]
\oplus [1,\,\chi_{4}^{D_8}]
\oplus 2\,[r^2,\,\chi_{5}^{D_8}] \\[6pt]
\CA_6 &= [1,\,\chi_{1}^{D_8}]
\oplus [1,\,\chi_{2}^{D_8}]
\oplus [1,\,\chi_{3}^{D_8}]
\oplus [1,\,\chi_{4}^{D_8}]
\oplus 2\,[1,\,\chi_{5}^{D_8}]
\end{align*}
where $\chi^G_i$ is the $i$-th character of the group $G$. For $D_8$, it contains four 1-dimensional irreducible representations labeled by $\chi^{D_8}_i, i=1\sim4$, and $\chi^{D_8}_5$ is the character of the 2-dimensional irrep. For $\IZ_2\times \IZ_2$, the character is defined as $\chi_{r,s}^{\mathbb{Z}_2\times\mathbb{Z}_2}(a^i b^j)=(-1)^{i r+j s}$ where $a,b$ are the generators for each $\IZ_2$ and $r,s\in \{0,1\}$ label the different $\IZ_2\times \IZ_2$ characters. It is also clear that $\CA_5,\CA_6$ form $\Rep(D_8)$ subcategory, and condensing which on the symmetry boundary will lead to $(D_8,\gamma)$ symmetry.

We conclude this section by noting that the TFT bulk of $\Rep(H_8)$ can also be equivalently described by the center of $\IZ_2$-graded fusion category \cite{Gelaki:2009blp,Lu:2025gpt} as $\Rep(H_8)$ is equivalent to the Tambara-Yamagami fusion category which has a natural $\IZ_2$-grading. Using this description, it is straightforward to view $\hcenter$ as the result of simultaneously gauging the $e$-$m$ permutation symmetry in two copies of the $\IZ_2$ toric code \cite{Zhang:2023wlu}. To sum up, the $\hcenter$ is described as,
\begin{enumerate}
    \item Anyon condensation of $(\mathrm{Ising}\boxtimes \overline{\mathrm{Ising}})^{\boxtimes2}$ by the condensible algebra $1\oplus \psi_1 \overline\psi_1\psi_2\overline\psi_2$.
    \item Twisted quantum double of $D_8$ with 3-cocycle $\gamma$, $D(D_8)^\gamma$. 
    \item Center of $\IZ_2$-graded fusion category, $\CZ(\TY(\IZ_2\times \IZ_2,\chi_\text{diag},+1))$ \cite{Gelaki:2009blp,Lu:2025gpt}.
    \item Or directly from the Hopf algebra $H_8$, the category of Yetter–Drinfeld modules over $H_8$ \cite{hu2007beta,shi2016finite}.
\end{enumerate}
The non-invertible symmetry $\mathrm{Rep}(H_8)$ can be realized by placing the bulk in any aforementioned descriptions of $\hcenter$ and choosing one of the Lagrangian algebras $\CA_{1}$–$\CA_{4}$ as the symmetry boundary.

\section{\texorpdfstring{$\Rep(H_8)$}{RepH8} Symmetry in a Qubit Lattice Model}
\label{sec:Lattice}

Since we want to study systems with Rep($H_8$) symmetry, we will fix the top boundary to be in the $\mathcal{A}_1$ state. Different gapped boundaries at the bottom corresponds to different gapped phases under Rep($H_8$) symmetry. We see that there are six gapped phases in total. The overlap between the Lagrangian algebra $\mathcal{A}_1$ at the top and $\mathcal{A}_i$ at the bottom gives the ground state degeneracy of the gapped phase on a 1D ring. We will study these gapped phases in detail in this section.

The Symmetry TFT construction naturally leads to the 1D lattice realization discussed in this section. It is well known that the Kramers-Wannier symmetry $\{1,\eta, \CN\}$ with fusion rule
$$
\eta \times \eta = 1, \eta \times \CN = \CN, \CN \times \CN = 1 + \eta
\label{eq:Ising}
$$
can be realized in a 1D qubit Ising chain where the invertible $\IZ_2$ symmetry $\eta$ is the product of spin flip $\eta = \prod_i X_i$ and the non-invertible Kramers-Wannier symmetry $\CN$ can be realized as a sequential circuit (together with a projection) that maps between the $\IZ_2$ symmetric state $|++...+\rangle$ and the $\IZ_2$ symmetry breaking state $|00...0\rangle, |11...1\rangle$.
\begin{align}
\CN|00...0\rangle = \CN|11...1\rangle = |++...+\rangle, \ \CN|++...+\rangle = |00...0\rangle + |11...1\rangle
\end{align}
On the other hand, the Kramers-Wannier symmetry can be realized in a sandwich structure with a single copy of the doubled Ising topological order in the bulk. Therefore, it is natural to expect that Rep($H_8$), whose symmetry TFT bulk comes from two copies of the doubled Ising topological order, can be realized in 1D with two qubit chains. We will give explicit lattice formulation of the Hamiltonian of each gapped phase in the double qubit chain and label them $H_{1i}$, for $i = 1,...,6$. The Hamiltonian $H_{1i}$ are constructed to be frustration free. That is, their ground states can be found exactly. 

\begin{figure}[th]
\begin{center}
\includegraphics[width= 0.8\textwidth]{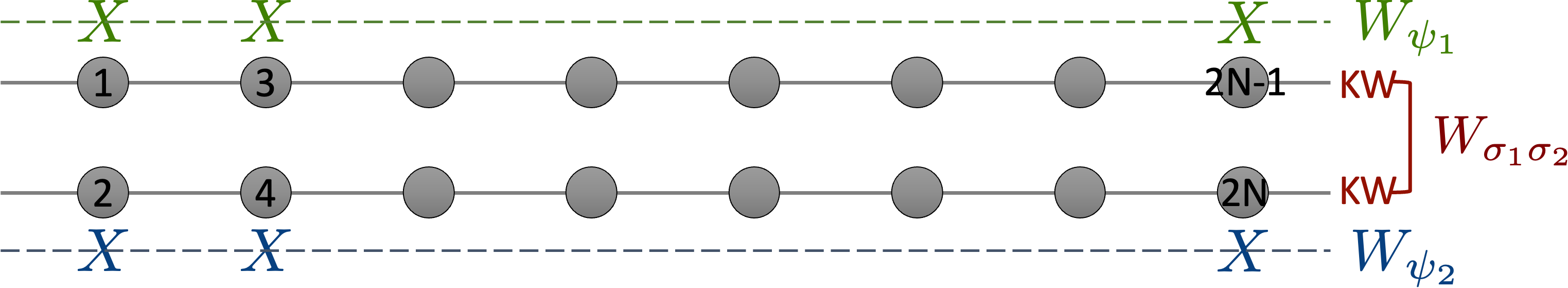}
\caption{Rep($H_8$) symmetry realized in two qubits chains. $W_{\psi_1}$, the spin flip operation in the first chain, and $W_{\psi_2}$, the spin flip operation in the qubit chain, generate the invertible part of the symmetry. $W_{\sigma_1\sigma_2}$, the joint Kramers-Wannier transformation in both chains, gives the non-invertible symmetry operation.} 
\label{fig:lattice}
\end{center}
\end{figure}

Consider two qubit chains each with $N$ qubits, as shown in Fig.~\ref{fig:lattice}. We assign odd number labels to qubits in the first chain and even number labels to the ones in the second chain. 
A lattice version of $\Rep(H_8)$ symmetry can be generated by $W_{\psi_1} = \eta_o = \prod_k X_{2k-1}$, $W_{\psi_2} = \eta_e = \prod_k X_{2k}$, and $W_{\sigma_1\sigma_2}=\CN_o\CN_e$, where $\CN_o$ ($\CN_e$) is the Kramers-Wannier duality on odd (even) sites. The overall action of $\CN_o\CN_e$ on local operators symmetric under $W_{\psi_1}$ and $W_{\psi_2}$ is then
\begin{align}
    \CN_o\CN_e (X_n,Z_nZ_{n+2})=(Z_n Z_{n+2},X_{n+2})\CN_o\CN_e. 
\end{align}
If we assume periodic boundary condition, $\CN_o^2\CN_e^2=T^2(1+\eta_o+\eta_e+\eta_o\eta_e)$ where the translation symmetry generator $T$ acts as $T(X_n,Z_n)T^{-1}=(X_{n+1},Z_{n+1})$. 
The Kramers-Wannier transformation on one of the qubit chains is represented by for example the $W_{\sigma_1}$ string operator in the topological bulk of the sandwich structure. It is confined due to the $\psi_1\bar{\psi}_1\psi_2\bar{\psi}_2$ condensation, but if we were able to apply $W_{\sigma_1}$, we would be able to exchange $A$ and $B$. Therefore, $W_{\sigma_1}$ represents the $\IZ_2$ automorphism that exchanges $A$ and $B$ and as a result also exchanges boundaries $\mathcal{A}_{2i-1}$ and $\mathcal{A}_{2i}$, for $i = 1,2,3$. In the lattice model, we can apply the Kramers-Wannier transformation on one of the qubit chains and map between  
\[
\xymatrix{(H_{11},H_{13},H_{15})\ar@/^/[r]^{\CN_o} & (H_{12},H_{14},H_{16})\ar@/^/[l]^{\CN_e}}. 
\]

We use this property to construct some of the Hamiltonians below. 
The Hamiltonian are all constructed to be frustration free. Some of the $\Rep(H_8)$ symmetric gapped phases are constructed on the qubit lattice in \cite{Chen:2025ivo}.

One main feature we want to illustrate using the lattice model is the algebra of local order parameters. In the sandwich structure, local order parameters are string operators that tunnel between the top and bottom boundaries. Their fusion algebra is given in Ref.~\cite{Cong2017} and illustrated in Fig.~\ref{fig:OrderPara}. A gapped boundary with Lagrangian Algebra $\mathcal{A}$ is characterized by a set of $M$-$3j$ symbols. When two gapped boundaries are put together to form a sandwich, the fusion algebra of two order parameter $V$'s is given by two set of $M$-$3j$ symbols, one from each gapped boundary $\mathcal{A}_1$ and $\mathcal{A}_2$, as shown in Fig.~\ref{fig:OrderPara}. Moreover, acting the order parameters on the `no-tunneling' state gives a basis for the degenerate ground space of the symmetry breaking phase. This corresponds to what is usually called the `cat' state basis of global superpositions of short range correlated symmetry broken states. Using this basis, Fig.~\ref{fig:OrderPara} can also be interpreted as the action of the order parameters on each basis `cat' state.

\begin{figure}[th]
\begin{center}
\includegraphics[width= 0.6\textwidth]{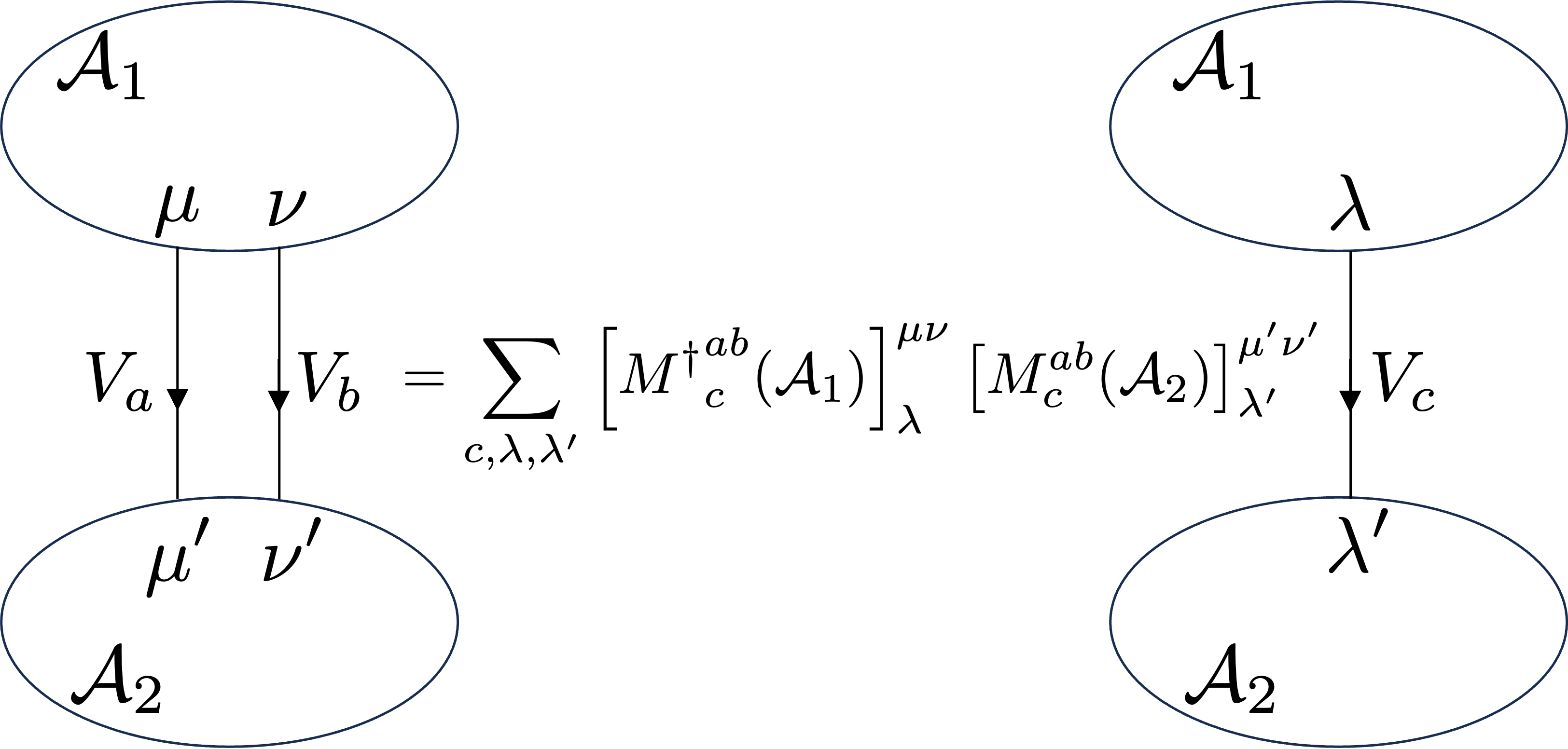}
\caption{Fusion algebra of two local order parameters $V_a$ and $V_b$ into $V_c$ in the SymTFT formalism. An equivalent interpretation is the action of a local order parameter $V_a$ on a `cat' state labeled by $b$. The $M$-$3j$ symbols come from the Lagrangian algebra $\mathcal{A}$'s describing the top and bottom gapped boundaries in the sandwich structure.} 
\label{fig:OrderPara}
\end{center}
\end{figure}

We provide a `physical' derivation of the $M$-$3j$ symbols of the Lagrangian algebra $\mathcal{A}_1$ to $\mathcal{A}_6$ in appendix~\ref{app:LagrangianAlgebra}, making use of the notion of `condensable dimensions' of the anyons condensed on the boundaries.

\subsection{\texorpdfstring{$\Rep(H_8)$}{RepH8} fully symmetry breaking phase $H_{11}$}

When the top and bottom boundary in the sandwich construction are both described by the Lagrangian algebra $\mathcal{A}_1 = 1 \oplus \psi_1\bar{\psi}_1 \oplus \sigma_1\bar{\sigma}_1 \oplus \sigma_2\bar{\sigma}_2 \oplus A$, all the anyons condensed on the top boundary can tunnel to the bottom boundary. Therefore, the Rep($H_8$) symmetry is fully broken, resulting in a five-fold ground state degeneracy. A Hamiltonian realization of this phase in the double qubit chain is given by 
\begin{align}
    H_{11}&=\sum_k\bigg[ \left(\frac{1-Z_{2k-1}Z_{2k+1}}{2}\right)\left( 
\frac{1-X_{2k}}{2}+\frac{1-X_{2k+2}}{2} \right)\nonumber\\
&\quad\quad  + \left(\frac{1-Z_{2k}Z_{2k+2}}{2}\right)\left( 
\frac{1-X_{2k-1}}{2}+\frac{1-X_{2k+1}}{2} \right)\bigg]. 
\end{align}

The Hamiltonian is gapped with a five-fold degenerate ground state. The five short-range correlated symmetry broken ground states take a very simple form
\begin{align}
\ket{++++\cdots},\quad\ket{xyxy\cdots}
\end{align}
with $x,y\in\{0,1\}$. Note that these states are orthogonal in the thermodynamic limit.

It is straight-forward to verify that these five states are degenerate ground states of $H_{11}$. $H_{11}$ is a sum over projectors, so it's lowest possible energy is zero. The five states listed above are all zero energy eigenstates, hence the ground states of $H_{11}$. We checked numerically that $H_{11}$ does not have other ground states. Moreover, as the system size $2N$ increases, the energy gap saturates to a nonzero value as shown in Fig.\,\ref{fig:H11Gap}.
\begin{figure}
    \centering
    \includegraphics[width=0.5\textwidth]{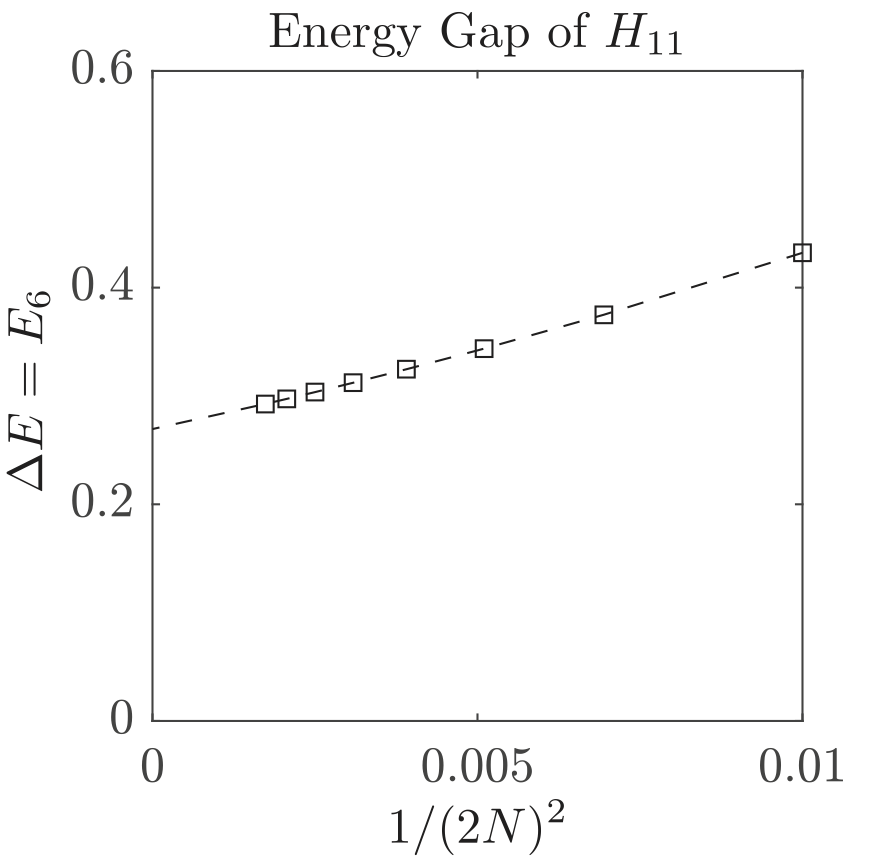}
    \caption{The energy gap $\Delta E$ of $H_{11}$ as a function of $1/(2N)^2$. $\Delta E$ is the same as the 6-th lowest energy eigenvalue. The squares are numerical data from exact diagonalization, with $2N$ goes from $10$ to $24$. The dashed line is a quadratic fitting, i.e. $\Delta E_{\rm fit}=a + b/(2N) + c/(2N)^2$. The intercept $a=\Delta E_{\rm fit}(N\rightarrow \infty)$ is clearly nonzero. }
    \label{fig:H11Gap}
\end{figure}

The five short-range correlated symmetry-broken ground states can be labeled by the different symmetry operators in Rep($H_8$). We assign the labeling as
\begin{equation}
\begin{array}{l}
|\phi_1\rangle =  |0000...00\rangle, \ |\phi_{\eta_o}\rangle =  |1010...10\rangle, \ |\phi_{\eta_e}\rangle =  |0101...01\rangle, \\ 
|\phi_{\eta_o\eta_e}\rangle = |1111...11\rangle, \ |\phi_{\CN}\rangle = |++++...++\rangle 
\end{array}
\end{equation}

The labeling is assigned such that when the symmetry operators act on these states, the result is the same as the symmetry fusion rule. For example,
\begin{equation}
\begin{array}{l}
\eta_o |\phi_1\rangle = \prod_{k} X_{2k-1} |0000...00\rangle = |1010...10\rangle = |\phi_{\eta_o}\rangle \\
\CN_o\CN_e |\phi_{\eta_o}\rangle = \CN_o\CN_e |1010...10\rangle = |++++...++\rangle = |\phi_{\CN}\rangle \\
\CN_o\CN_e |\phi_{\CN}\rangle = \CN_o\CN_e |++++...++\rangle = \sum_{x,y=0,1}|xyxy...xy\rangle = |\phi_1\rangle + |\phi_{\eta_o}\rangle + |\phi_{\eta_e}\rangle + |\phi_{\eta_o\eta_e}\rangle
\end{array}
\end{equation}
Note that the same pattern holds for the symmetry breaking ground states of the Kramers-Wannier symmetry in Eq.~\ref{eq:Ising} if we assign labels to the symmetry broken ground states as 
$$
|\phi_1\rangle = |00...0\rangle, \ |\phi_{\eta}\rangle = |11...1\rangle, \ |\phi_{\CN}\rangle = |++...+\rangle
$$
This is a generic feature of the short range correlated ground states when a categorical symmetry is completely broken \cite{kitaev2012models,Thorngren2019Fusion}. When conventional invertible symmetries are completely broken, the symmetry broken states are labeled by different group elements. Applying a nontrivial symmetry transformation to a symmetry broken state maps it to a different symmetry broken state. When a categorical symmetry is completely broken, the symmetry broken states are labeled by simple objects in the category, but applying a symmetry does not necessarily map one symmetry broken state to a different symmetry broken state. This relation still holds for at least one symmetry broken state -- the one labeled by the identity operator in the category. That is, applying a symmetry operation to the short range correlated symmetry broken state labeled by the identity symmetry operator always maps it to another symmetry broken state (labeled by the applied symmetry operator). In this sense, we can still say without ambiguity that the symmetry is completely broken. For the other symmetry broken states, the action of a symmetry could take them to a superposition of several symmetry broken states.

What are the local order parameters of this symmetry breaking phase? Since the symmetry broken states are product states supported on (approximately) orthogonal local dimensions, local operators that takes different values on these dimensions can be used as local order parameters. One may notice that while the local dimensions $|xy\rangle$, $x,y=0,1$ are strictly orthogonal to each other, they are not orthogonal to $|++\rangle$. This problem can be mitigated by taking local dimensions on a longer segment. For example, on a segment of length four, the local dimensions become $\ket{xyxy}$, $x,y = 0,1$ and $\ket{++++}$. The orthogonality gets exponentially better with the length of the segment. Let's denote these local projectors by $P_{xy}$, $x,y=0,1$ and $P_{++}$ with the implicit assumption that the projectors acts on a segment long enough that the overlap between the projectors is effectively zero. 

Taking linear combinations of these projectors, we can define local order parameters labeled by the tunneling channels
\begin{equation}
\begin{aligned}
V_1 &= P_{00} + P_{01} + P_{10} + P_{11} + P_{++} \\
V_{\psi_1\bar{\psi}_1} &= P_{00} + P_{01} + P_{10} + P_{11} - P_{++} \\
V_{\sigma_1\bar{\sigma}_1} &= P_{00} + P_{01} - P_{10} - P_{11}\\
V_{\sigma_2\bar{\sigma}_2} &= P_{00} - P_{01} + P_{10} - P_{11} \\
V_{A} &= P_{00} - P_{01} - P_{10} + P_{11}
\end{aligned}
\end{equation}
It can be verified through explicit calculation that the correlation length measured by the (nontrivial) local order parameters on the degenerate ground space is infinity. 

The labeling by the tunneling channels indicates how the order parameters transform under the Rep($H_8$) symmetry. For example, $\sigma_1\bar{\sigma}_1$ braids with a $-1$ phase factor with $\psi_1$ and a $+1$ phase factor with $\psi_2$. Therefore, $V_{\sigma_1\bar{\sigma}_1}$ carries a charge under the $\eta_o$ symmetry but no charge under the $\eta_e$ symmetry, which can be explicitly verified since $P_{00}$, $P_{01}$ map to $P_{10}$, $P_{11}$ under the $\eta_o$ symmetry while $P_{00}$, $P_{10}$ map to $P_{01}$, $P_{11}$ under the $\eta_e$ symmetry. Similarly, we can see that $V_{\sigma_2\bar{\sigma}_2}$ is charged under $\eta_e$ but not $\eta_o$ and $V_A$ is charged under both $\eta_o$ and $\eta_e$. Under the non-invertible $\mathcal{N}_o\mathcal{N}_e$ symmetry, all three operators become non-local operators. There is no way to construct two point operators that are symmetric under the full Rep($H_8$) using $V_{\sigma_1\bar{\sigma}_1}$,$V_{\sigma_2\bar{\sigma}_2}$ and $V_{A}$, which is usually what we expect to have so that we can measure the long-range correlation in the symmetry broken state. However, the lack of symmetric two-point operator based on certain order parameters does not affect our ability to detect long-range correlation, as can be explicitly checked using the $V$ operators given above. $V_{\psi_1\bar{\psi}_1}$, on the other hand, remains invariant under $\eta_o$ and $\eta_e$ and gets a minus sign under $\mathcal{N}_o\mathcal{N}_e$, as expected from the braiding statistics between $\psi_1\bar{\psi}_1$ and $\psi_1$, $\psi_2$, $\sigma_1\sigma_2$. 

While the order parameters carry anyon-like labels, their fusion rules are different from the fusion rules of the anyons in the bulk. The order parameters are labeled by objects in the Lagrangian Algebra describing the gapped boundaries and their fusion follows the multiplication of objects in the Lagrangian Algebra, the data of which is presented in appendix~\ref{app:LagrangianAlgebra}. In particular, the order parameters fuse in the following way, which is consistent with the $M$-$3j$ symbol derived for the $\mathcal{A}_1$ boundary in appendix~\ref{app:LagrangianAlgebra}.
\begin{equation}
\begin{array}{l}
V_1 \times V_{x} = V_{x}, \ \ 
V_{\psi_1\bar{\psi}_1} \times V_{\psi_1\bar{\psi}_1} = V_1, \ \ V_{\psi_1\bar{\psi}_1} \times V_{\sigma_i\bar{\sigma}_i} = V_{\sigma_i\bar{\sigma}_i}, \ \  V_{\psi_1\bar{\psi}_1} \times V_A = V_A\\
V_{\sigma_i\bar{\sigma}_i} \times V_{\sigma_i\bar{\sigma}_i} = V_A \times V_A = \frac{1}{2}\left(V_1 + V_{\psi_1\bar{\psi}_1}\right), \ \ 
V_{\sigma_1\bar{\sigma}_1} \times V_{\sigma_2\bar{\sigma}_2} = V_A
\end{array}
\label{eq:Vfuse11}
\end{equation}

We can also define `cat' states labeled by the tunneling channels from the superposition of the short range correlated symmetry broken states.
\begin{equation}
\begin{aligned}
\ket{\varphi_{1}} =&  \ket{\phi_1} + \ket{\phi_{\eta_o}} + \ket{\phi_{\eta_e}} + \ket{\phi_{\eta_o\eta_e}} + 2\ket{\phi_{\mathcal{N}_o\mathcal{N}_e}} \\
\ket{\varphi_{\psi_1\bar{\psi}_1}} =& \ket{\phi_1} + \ket{\phi_{\eta_o}} + \ket{\phi_{\eta_e}} + \ket{\phi_{\eta_o\eta_e}} - 2\ket{\phi_{\mathcal{N}_o\mathcal{N}_e}}\\
\ket{\varphi_{\sigma_1\bar{\sigma}_1}} =& \ket{\phi_1} - \ket{\phi_{\eta_o}} + \ket{\phi_{\eta_e}} - \ket{\phi_{\eta_o\eta_e}}\\
\ket{\varphi_{\sigma_2\bar{\sigma}_2}} =& \ket{\phi_1} + \ket{\phi_{\eta_o}} - \ket{\phi_{\eta_e}} - \ket{\phi_{\eta_o\eta_e}}\\
\ket{\varphi_{A}} =&  \ket{\phi_1} - \ket{\phi_{\eta_o}} - \ket{\phi_{\eta_e}} + \ket{\phi_{\eta_o\eta_e}} 
\end{aligned}
\end{equation}
The superposition is chosen such that the `cat' states transform under the Rep($H_8$) symmetry as the labeling channel. That is, $\ket{\varphi_{\psi_1\bar{\psi}_1}}$ is invariant under the full symmetry (up to a normalization). $\ket{\varphi_{\sigma_1\bar{\sigma}_1}}$ is odd under $\eta_o$, $\ket{\varphi_{\sigma_2\bar{\sigma}_2}}$ is odd under $\eta_e$ and $\ket{\varphi_{A}}$ is odd under both. All three vanish under $\mathcal{N}_o\mathcal{N}_e$ while $\ket{\varphi_{\psi_1\bar{\psi}_1}}$ gets a minus sign under $\mathcal{N}_o\mathcal{N}_e$ while remaining invariant under both $\eta_o$ and $\eta_e$. 

The action of the $V$ order parameters on the cat states follows the same fusion rule as that between pairs of $V$ operators. For example,
\begin{equation}
\begin{array}{l}
V_1 |\varphi_x\rangle = |\varphi_x\rangle, \ \ 
V_{\psi_1\bar{\psi}_1}|\varphi_{\psi_1\bar{\psi}_1}\rangle = |\varphi_1\rangle, \ \ 
V_{\psi_1\bar{\psi}_1}|\varphi_{\sigma_i\bar{\sigma}_i}\rangle = |\varphi_{\sigma_i\bar{\sigma}_i}\rangle, \ \ V_{\psi_1\bar{\psi}_1}|\varphi_{A}\rangle = |\varphi_{A}\rangle\\
V_{\sigma_i\bar{\sigma}_i}|\varphi_{\sigma_i\bar{\sigma}_i}\rangle = V_{A}|\varphi_{A}\rangle = \frac{1}{2}\left(|\varphi_1\rangle + |\varphi_{\psi_1\bar{\psi}_1}\rangle\right), \ \ V_{\sigma_1\bar{\sigma}_1}|\varphi_{\sigma_2\bar{\sigma}_2}\rangle = |\varphi_A\rangle
\end{array}
\end{equation}
corresponding to the fusion rule listed in Eq.~\ref{eq:Vfuse11}.

\subsection{Rep($H_8$) to $\IZ_2$ symmetry-breaking phase $H_{12}$}

When the top boundary in the sandwich construction is described by the Lagrangian Algebra $\mathcal{A}_1 = 1 \oplus\psi_1\bar{\psi}_1 \oplus \sigma_1\bar{\sigma}_1 \oplus \sigma_2\bar{\sigma}_2 \oplus A$ and the bottom boundary described by $\mathcal{A}_2 = 1 \oplus \psi_1\bar{\psi}_1 \oplus \sigma_1\bar{\sigma}_1 \oplus \sigma_2\bar{\sigma}_2 \oplus B$, the Rep($H_8$) symmetry of system is again broken, but only partially. 

A representative Hamiltonian of this phase is given by
$$
H_{12} =\sum_n\left[ \left( \frac{1-X_n}{2} \right)\left( \frac{1-X_{n+1}}{2} \right) +\left( \frac{1-Z_nZ_{n+2}}{2} \right)\left( \frac{1-Z_{n+1}Z_{n+3}}{2} \right)\right]
$$
$H_{12}$ can be derived from $H_{11}$ by applying the even-site Kramers-Wannier transformation $\mathcal{N}_e$ (or the odd-site transformation $\mathcal{N}_o$). It has four degenerate ground states
\begin{align}
\ket{\mathord{+}x\mathord{+}x\cdots},\quad\ket{x\mathord{+}x\mathord{+}\cdots}
\end{align}
with $x\in\{0,1\}$.

$\mathcal{A}_1$ overlaps with $\mathcal{A}_2$ in four of the five condensed anyons $1$, $\psi_1\bar{\psi}_1$, $\sigma_1\bar{\sigma}_1$, and $\sigma_2\bar{\sigma}_2$, resulting in four tunneling channels between the two boundaries. The four-fold degenerate ground states correspond directly to the four tunneling channels. In particular, we can define cat-states labeled by the tunneling channels as
\begin{equation}
\begin{aligned} |\varphi_1\rangle =& \ket{+0\mathord{+}0\cdots} + \ket{+1\mathord{+}1\cdots} + \ket{0\mathord{+}0\mathord{+}\cdots} + \ket{1\mathord{+}1\mathord{+}\cdots}\\
|\varphi_{\psi_1\bar{\psi}_1}\rangle =& \ket{+0\mathord{+}0\cdots} + \ket{+1\mathord{+}1\cdots} - \ket{0\mathord{+}0\mathord{+}\cdots} - \ket{1\mathord{+}1\mathord{+}\cdots}\\
|\varphi_{\sigma_1\bar{\sigma}_1}\rangle =& \ket{0\mathord{+}0\mathord{+}\cdots} - \ket{1\mathord{+}1\mathord{+}\cdots}\\
|\varphi_{\sigma_2\bar{\sigma}_2}\rangle =& \ket{+0\mathord{+}0\cdots} - \ket{+1\mathord{+}1\cdots}\\
\end{aligned}
\end{equation}
and local order parameters labeled by the tunneling channels as
$$
\begin{aligned}
V_1 =& P_{+0} + P_{+1} + P_{0+} + P_{1+} \\
V_{\psi_1\bar{\psi}_1} =& P_{+0} + P_{+1} - P_{0+} - P_{1+} \\
V_{\sigma_1\bar{\sigma}_1} =& P_{0+} - P_{1+} \\
V_{\sigma_2\bar{\sigma}_2} =& P_{+0} - P_{+1}
\end{aligned}
$$
Again, the projectors $P_{+i}$ and $P_{i+}$ are onto a finite segment of the double qubit chains whose length is long enough that we can ignore the overlap between the projectors.

The cat-states and the order parameters transform as their labeling tunneling channels. $\ket{\varphi_{1}}$ and $V_1$ are invariant under the full Rep($H_8$) symmetry. $\ket{\varphi_{\psi_1\bar{\psi}_1}}$ and $V_{\psi_1\bar{\psi}_1}$ are invariant under both $\eta_o$ and $\eta_e$ and gets a minus sign under $\mathcal{N}_o\mathcal{N}_e$. $\ket{\varphi_{\sigma_1\bar{\sigma}_1}}$ ($\ket{\varphi_{\sigma_2\bar{\sigma}_2}}$) and $V_{\sigma_1\bar{\sigma}_1}$ ($V_{\sigma_2\bar{\sigma}_2}$) is odd under $\eta_o$ ($\eta_e$), even under $\eta_e$ ($\eta_o$) and vanishes (or becomes nonlocal) under $\mathcal{N}_o\mathcal{N}_e$.

The fusion of the order parameters and their action on the cat-states satisfy the Lagrangian Algebra given in Ref.~\onlinecite{Cong2017} and appendix~\ref{app:LagrangianAlgebra}. It is similar to the algebra of order parameters in $H_{11}$ with one major difference being $V_{\sigma_1\bar{\sigma}_1} \times V_{\sigma_2\bar{\sigma}_2} = 0$ since the $A$ tunneling channel does not exist anymore. Other than that, $V_{\sigma_1\bar{\sigma}_1} \times V_{\sigma_1\bar{\sigma}_1} = \frac{1}{2}\left(V_1 - V_{\psi_1\bar{\psi}_1}\right)$ differs from the corresponding fusion rule in $H_{11}$ by a minus sign in front of $V_{\psi_1\bar{\psi}_1}$, which follows from the change in the $M$-$3j$ symbol from $\CA_1$ and $\CA_2$ as explained in Appendix~\ref{app:LagrangianAlgebra}. Otherwise, the fusion rule of the order parameters is similar to that in $H_{11}$.

What is the remaining symmetry of this phase? Each symmetry broken state is invariant under a $\IZ_2$ subgroup of the symmetry. $\ket{x\mathord{+}x\mathord{+}\cdots}$ are invariant under $\eta_e$ and $\ket{+x\mathord{+}x\cdots}$ are invariant under $\eta_o$. $\eta_o\eta_e$ and $\mathcal{N}_o\mathcal{N}_e$ on the other hand, do not keep any symmetry-broken state invariant.   $\ket{+0\mathord{+}0\cdots}$ and $\ket{+1\mathord{+}1\cdots}$ are mapped to each other under $\eta_o\eta_e$. So are $\ket{0+0\mathord{+}\cdots}$ and $\ket{1\mathord{+}1\mathord{+}\cdots}$. Under $\mathcal{N}_o\mathcal{N}_e$, the states are mapped as
\begin{equation}
\begin{array}{l}
\mathcal{N}_o\mathcal{N}_e\ket{+x\mathord{+}x\cdots} = \ket{0\mathord{+}0\mathord{+}\cdots} + \ket{1\mathord{+}1\mathord{+}\cdots} \\
\mathcal{N}_o\mathcal{N}_e\ket{x\mathord{+}x\mathord{+}\cdots} = \ket{+0\mathord{+}0\cdots} + \ket{+1\mathord{+}1\cdots} 
\end{array}
\end{equation}
That is, the symmetry broken states are mapped into their linear combinations under $\mathcal{N}_o\mathcal{N}_e$. This is a phenomenon that is unique to non-invertible symmetries. With invertible symmetries, a symmetry broken state either remains invariant or is mapped to a different symmetry broken state. 

Therefore, in the phase represented by $H_{12}$, a $\IZ_2$ invertible symmetry remains. Note that the dependence of the remaining symmetry on the symmetry broken state is not a surprising feature. It shows up already in simple ferromagnet where the $SO(3)$ rotation symmetry breaks down to a $U(1)$ subgroup. Which $U(1)$ subgroup remains depends on the direction of the magnetic domain. 

\subsection{Rep($H_8$) to diagonal $\IZ_2$ symmetry-breaking phase $H_{13}$}

When the top boundary in the sandwich construction is described by the Lagrangian Algebra $\mathcal{A}_1 = 1 \oplus \psi_1\bar{\psi}_1 \oplus \sigma_1\bar{\sigma}_1 \oplus \sigma_2\bar{\sigma}_2 \oplus A$ and the bottom boundary described by $\mathcal{A}_3 = 1 \oplus \psi_1\bar{\psi}_2 \oplus \sigma_1\bar{\sigma}_2 \oplus \sigma_2\bar{\sigma}_1 \oplus A$, the Rep($H_8$) symmetry is broken down to the diagonal $\IZ_2$ subgroup. A representative Hamiltonian of this phase is given by
$$
H_{13}= -\sum_k (X_{2k-1}X_{2k}+Z_{2k-1}Z_{2k}Z_{2k+1}Z_{2k+2})
$$
$H_{13}$ is exactly solvable with two degenerate ground states
\begin{align}
    \ket{\phi_{00+11}} = \bigotimes_k \frac{1}{\sqrt{2}}(\ket{00}+\ket{11})_{2k-1,2k},\quad \ket{\phi_{01+10}} = \bigotimes_k \frac{1}{\sqrt{2}}(\ket{01}+\ket{10})_{2k-1,2k}. 
\end{align}
Both are eigenvalue $+1$ eigenstates of $X_{2k-1}X_{2k}$  while $\ket{\phi_{00+11}}$ has eigenvalue $+1$ under $Z_{2k}Z_{2k+1}$ and $\ket{\phi_{01+10}}$ has eigenvalue $-1$.

What symmetries remain after the symmetry breaking? Both symmetry-broken states remain invariant under the diagonal $\IZ_2$ symmetry $\eta_o\eta_e$, so there is definitely a $\IZ_2$ symmetry that remains. 
$$
\eta_o\eta_e \ket{\phi_{00+11}} = \ket{\phi_{00+11}}, \quad \eta_o\eta_e \ket{\phi_{01+10}} = \ket{\phi_{01+10}}
$$
On the other hand, no state remains invariant under the action of $\eta_o$, $\eta_e$ or $\mathcal{N}_o\mathcal{N}_e$.

Compared to the case of $H_{12}$, it seems that $H_{13}$ should have more remaining symmetry because the ground state degeneracy is smaller. In fact, if we look closer at the action of $\eta_o$, $\eta_e$ or $\mathcal{N}_o\mathcal{N}_e$, we see that certain linear combinations of these operators actually do keep both symmetry-broken states invariant. In particular, 
$$
\begin{array}{l}
\eta_o \ket{\phi_{00+11}} = \ket{\phi_{01+10}}, \quad \eta_o \ket{\phi_{01+10}} = \ket{\phi_{00+11}} \\
\eta_e \ket{\phi_{00+11}} = \ket{\phi_{01+10}}, \quad \eta_e \ket{\phi_{01+10}} = \ket{\phi_{00+11}} \\
\mathcal{N}_o\mathcal{N}_e \ket{\phi_{00+11}} = \mathcal{N}_o\mathcal{N}_e \ket{\phi_{01+10}} = \ket{\phi_{00+11}} + \ket{\phi_{01+10}}
\end{array}
$$
The relation in the last row can be derived by noticing that the action of $\mathcal{N}_o\mathcal{N}_e$ involves a projection onto the $\IZ_2\times \IZ_2$ invariant subspace and $\ket{\phi_{00+11}} + \ket{\phi_{01+10}}$ is the only state in the ground space invariant under the full $\IZ_2 \times \IZ_2$ symmetry. It is interesting to notice that $\mathcal{N}-\eta_o$ and $\mathcal{N}-\eta_e$ keep both symmetry-broken states invariant. However, due to the minus sign in the linear combination, $\mathcal{N}-\eta_o$ and $\mathcal{N}-\eta_e$ are not generalized symmetry operators and do not have a corresponding topological defect.

These two ground states correspond to the two tunneling channels $1$ and $A$ between the top and bottom boundary. In particular, we can define cat-states labeled by $1$ and $A$ as
$$
\ket{\varphi_1} = \ket{\phi_{00+11}} + \ket{\phi_{01+10}}, \quad
\ket{\varphi_A} = \ket{\phi_{00+11}} - \ket{\phi_{01+10}}
$$
and local order parameters labeled by $1$ and $A$
$$
V_1 = P_{00+11} + P_{01+10}, \ \ V_A = P_{00+11} - P_{01+10}
$$
$P_{00+11}$ is the projector onto the $\frac{1}{\sqrt{2}}(\ket{00}+\ket{11})$ state on a pair of qubits $2k-1$ and $2k$, while $P_{01+10}$ is the projector onto the $\frac{1}{\sqrt{2}}(\ket{01}+\ket{10})$ state. $V_A$ is an $\mathbb Z_2$ order parameter in this case. Note that while in both $\mathcal{A}_1$ and $\mathcal{A}_3$, the fusion of two copies of $A$ results in more than just the identity, the extra fusion channel does not match between $\mathcal{A}_1$ and $\mathcal{A}_3$. Therefore, as an order parameter, the fusion of $V_A$ is simply $\mathbb Z_2$.
$$
V_A \times V_A = V_1
$$
The action of $V_A$ on the $\ket{\varphi_1}$ and $\ket{\varphi_A}$ state also reflects this $\mathbb Z_2$ structure.
$$
V_A \ket{\varphi_1} = \ket{\varphi_A}, \ \ V_A \ket{\varphi_A} = \ket{\varphi_1}
$$

It can be checked that $V_1$ and $\ket{\varphi_1}$ remain invariant under the full Rep($H_8$) symmetry, while $V_A$ and $\ket{\varphi_A}$ are charged under either $\eta_o$ or $\eta_e$ and becomes nonlocal (vanishes) under $\mathcal{N}_o\mathcal{N}_e$. Therefore, the order parameter and the cat state transforms as their labeling tunneling channel.

\subsection{Rep($H_8$) symmetric phase $H_{14}$}

When the top boundary in the sandwich construction is described by the Lagrangian Algebra $\mathcal{A}_1 = 1 \oplus \psi_1\bar{\psi}_1 \oplus \sigma_1\bar{\sigma}_1 \oplus \sigma_2\bar{\sigma}_2 \oplus A$ and the bottom boundary described by $\mathcal{A}_4 = 1 \oplus \psi_1\bar{\psi}_2 \oplus \sigma_1\bar{\sigma}_2 \oplus \sigma_2\bar{\sigma}_1 \oplus B$, their only overlap is the trivial anyon. Therefore, the sandwich represents a symmetric phase with Rep($H_8$) symmetry. 

A representative Hamiltonian in this phase is given by the cluster state Hamiltonian
$$
H_{14} = -\sum_n Z_{n-1}X_nZ_{n+1}
$$
which has a unique gapped ground state. This is the only symmetric phase under Rep($H_8$). $H_{14}$ can be obtained from $H_{13}$ by applying the even-site (or odd-site) Kramers-Wannier transformation $\mathcal{N}_e$ (or $\mathcal{N}_o$). The fact that the cluster state is invariant by this realization of the Rep($H_8$) symmetry has been pointed out in~\cite{Seifnashri24}

\subsection{Rep($H_8$) to diagonal $\IZ_2$ symmetry-breaking phase $H_{15}$}

When the top boundary in the sandwich construction is described by the Lagrangian Algebra $\mathcal{A}_1 = 1 \oplus \psi_1\bar{\psi}_1 \oplus \sigma_1\bar{\sigma}_1 \oplus \sigma_2\bar{\sigma}_2 \oplus A$ and the bottom boundary described by $\mathcal{A}_5 = 1 \oplus \psi_1\psi_2 \oplus \psi_1\bar{\psi}_2 \oplus \psi_1\bar{\psi}_1 \oplus 2A$, there are four tunneling channels: one for $1$, one for $\psi_1\bar{\psi}_1$ and two for $A$. There is a four-fold ground state degeneracy.

A representative Hamiltonian for this phase is
$$
H_{15} = -\sum_k(Y_{2k-1}Y_{2k}Z_{2k+1}Z_{2k+2}+Z_{2k-1}Z_{2k}Y_{2k+1}Y_{2k+2})
$$
WLOG, we assume that each qubit chain has an even number of qubits and the total number of qubits in the double chain is a multiple of four. For convenience, denote the four Bell states by $\ket{\Psi_{ab}}$ with $a,b\in\{0,1\}$, such that the eigenvalues of $XX$ and $ZZ$ are $(-1)^a$ and $(-1)^b$, respectively. More explicitly, $\ket{\Psi_{ab}}=(1/\sqrt{2})\sum_{x\in\{0,1\}}(-1)^{ax}\ket{x,x\oplus b}$ where $\oplus$ denotes addition modulo 2. Hamiltonian $H_{15}$ is exactly solvable and the 4 degenerate ground states are given by
\begin{align}
  \ket{\phi_{ab}} =   \bigotimes_k \ket{\Psi_{ab}}_{4k-3,4k-2}\ket{\Psi_{a(1+a+b)}}_{4k-1,4k}, \ a,b=0,1
\end{align}
Each wavefunction is a product of Bell states. The Bell states in the same wavefunction have the same eigenvalue under $XX$ ($\pm 1$).  When $XX=1$, the Bell states have alternating eigenvalues under $ZZ$. When $XX=-1$, the Bell states have the same eigenvalue under $ZZ$.

Under the action of $\eta_o$ or $\eta_e$, the eigenvalue of $X_{2k-1}X_{2k}$ remains the same while the eigenvalue of $Z_{2k-1}Z_{2k}$ changes. Therefore, under the $\IZ_2\times \IZ_2$ symmetry, the $\ket{\phi_{ab}}$ states are mapped as
$$
\eta_o\ket{\phi_{ab}} = \eta_e\ket{\phi_{ab}} = \ket{\phi_{a\bar{b}}}, \ \ \eta_o\eta_e\ket{\phi_{ab}} = \ket{\phi_{ab}}
$$
Under the action of $\CN_o\CN_e$, the eigenvalue of $X_{2k-1}X_{2k}$ is mapped to the eigenvalue of $Z_{2k-1}Z_{2k}Z_{2k+1}Z_{2k+2}$ and vice versa. Therefore, $\ket{\phi_{ab}}$ is mapped into the subspace spanned by $\ket{\phi_{\bar{a}b}}$ and $\ket{\phi_{\bar{a}\bar{b}}}$. Moreover, since $\CN_o\CN_e$ involves a projection onto the $\IZ_2\times \IZ_2$ invariant subspace, its action on $\ket{\phi_{ab}}$ is given by
$$
\CN_o\CN_e\ket{\phi_{ab}} = \ket{\phi_{\bar{a}b}} + \ket{\phi_{\bar{a}\bar{b}}}
$$

The $\ket{\phi_{ab}}$ states can be put into linear superpositions that correspond to the four tunneling channels. Define the cat states labeled by $1$, $\psi_1\bar{\psi}_1$, $A_1$, $A_2$ as
\begin{equation}
    \begin{aligned}
        \ket{\varphi_1} &= \ket{\phi_{00}} + \ket{\phi_{01}} + \ket{\phi_{10}} + \ket{\phi_{11}} \\
\ket{\varphi_{\psi_1\bar{\psi}_1}} &= \ket{\phi_{00}} + \ket{\phi_{01}} - \ket{\phi_{10}} - \ket{\phi_{11}} \\
\ket{\varphi_{A_1}} &= \ket{\phi_{00}} - \ket{\phi_{01}} \\
\ket{\varphi_{A_2}} &= \ket{\phi_{10}} - \ket{\phi_{11}}
    \end{aligned}
\end{equation}
and the corresponding local order parameters as
\begin{equation}
    \begin{aligned}
V_1 &= P_{00}+P_{01}+P_{10}+P_{11} \\
V_{\psi_1\bar{\psi}_1} &= P_{00}+P_{01}-P_{10}-P_{11} \\
V_{A_1} &= P_{00}-P_{01}\\
V_{A_2} &= P_{10}-P_{11} \\
    \end{aligned}
\end{equation}
We can check that the cat-states and the local order parameters transform under the Rep($H_8$) symmetry as indicated by their labeling tunneling channel. $V_1$ and $\ket{\varphi_1}$ are invariant under the full symmetry. $V_{\psi_1\bar{\psi}_1}$ and $\ket{\varphi_{\psi_1\bar{\psi}_1}}$ are invariant under the $\IZ_2\times \IZ_2$ symmetry but changes sign under $\CN_o\CN_e$. $V_{A_1}$, $V_{A_2}$, $\ket{\varphi_{A_1}}$ and $\ket{\varphi_{A_2}}$ carry a charge under $\eta_o$ and $\eta_e$ but are not charged under $\eta_o\eta_e$. Under $\CN_o\CN_e$, they vanish (or become nonlocal). 

The fusion of the order parameters again follow from the Lagrangian Algebra of the top and bottom boundaries. 
\begin{equation}
\begin{array}{l}
V_1 \times V_{x} = V_{x}, \ \ 
V_{\psi_1\bar{\psi}_1} \times V_{\psi_1\bar{\psi}_1} = V_1, \ \ V_{\psi_1\bar{\psi}_1} \times V_{A_1} = V_{A_1}, \ \  V_{\psi_1\bar{\psi}_1} \times V_{A_2} = -V_{A_2}, \\ 
V_{A_1} \times V_{A_1} = \frac{1}{2}\left(V_1 + V_{\psi_1\bar{\psi}_1}\right), \ \ 
V_{A_2} \times V_{A_2} = \frac{1}{2}\left(V_1 - V_{\psi_1\bar{\psi}_1}\right), \ \ 
V_{A_1} \times V_{A_2} = 0
\end{array}
\label{eq:Vfuse15}
\end{equation}
Their corresponding actions on the cat-states are given by
\begin{equation}
\begin{array}{l}
V_1\ket{\varphi_{x}} = \ket{\varphi_{x}}, \ \ 
V_{\psi_1\bar{\psi}_1} \ket{\varphi_{\psi_1\bar{\psi}_1}} = \ket{\varphi_1}, \ \ V_{\psi_1\bar{\psi}_1} \ket{\varphi_{A_1}} = \ket{\varphi_{A_1}}, \ \  V_{\psi_1\bar{\psi}_1} \ket{\varphi_{A_2}} = -\ket{\varphi_{A_2}}, \\ 
V_{A_1}\ket{\varphi_{A_1}} = \frac{1}{2}\left(\ket{\varphi_1} + \ket{\varphi_{\psi_1\bar{\psi}_1}}\right), \ \ 
V_{A_2} \ket{\varphi_{A_2}} = \frac{1}{2}\left(\ket{\varphi_1} - \ket{\varphi_{\psi_1\bar{\psi}_1}}\right), \ \ 
V_{A_1} \ket{\varphi_{A_2}} = 0
\end{array}
\label{eq:Vphi15}
\end{equation}

What symmetry remains? This case is similar to the case of $H_{13}$ because all ground states remain invariant under the action of $\eta_o\eta_e$, so at least the diagonal $\IZ_2$ symmetry remains. On the other hand, no state remains invariant under $\eta_o$, $\eta_e$ or $\CN_o\CN_e$. Moreover, no linear combination of $\eta_o$, $\eta_e$ and $\CN_o\CN_e$ keeps the $\ket{\phi_{ab}}$ states invariant. In this sense, the remaining symmetry is `smaller' than that of $H_{13}$, which is consistent with the larger ground state degeneracy than $H_{13}$.

\subsection{Rep($H_8$) to $\IZ_2\times \IZ_2$ symmetry-breaking phase $H_{16}$}

When the top boundary in the sandwich construction is described by the Lagrangian Algebra $\mathcal{A}_1 = 1 \oplus \psi_1\bar{\psi}_1 \oplus \sigma_1\bar{\sigma}_1 \oplus \sigma_2\bar{\sigma}_2 \oplus A$ and the bottom boundary described by $\mathcal{A}_6 = 1 \oplus \psi_1\psi_2 \oplus \psi_1\bar{\psi}_2 \oplus \psi_1\bar{\psi}_1 \oplus 2B$, there are two tunneling channels: one for $1$ and one for $\psi_1\bar{\psi}_1$. There is a two-fold ground state degeneracy. 

A representative Hamiltonian for this phase is
$$
H_{16}=\sum_n Z_nY_{n+1}Y_{n+2}Z_{n+3}. 
$$
Each term in $H_{16}$ is the product of two neighboring cluster state Hamiltonian terms $Z_nX_{n+1}Z_{n+2}$ and $Z_{n+1}X_{n+2}Z_{n+3}$. With a positive sign in front, the ground states spontaneously breaks the translation symmetry and form the `anti-ferromagnetic' order of cluster state. That is, one of the ground states is the $+1$ ($-1$) eigenstate of the cluster state Hamiltonian terms centered on odd sites (even sites). The other one can be obtained by translating the first state by one lattice site. The two ground states are
    \begin{align}
        \ket{\phi_{+-}} = \left(\prod_n {\rm CZ}_{n,n+1}\right)\ket{\mathord{+}\mathord{-}\mathord{+}\mathord{-}\cdots},\quad \ket{\phi_{-+}} = \left(\prod_n {\rm CZ}_{n,n+1}\right)\ket{\mathord{-}\mathord{+}\mathord{-}\mathord{+}\cdots}. 
    \end{align}
The two states are mapped to each other by $\CN_o\CN_e$. Therefore, the two cat-states corresponding to the two tunneling channels $1$ and $\psi_1\bar{\psi}_1$ are
$$
\ket{\varphi_1} = \ket{\phi_{+-}} + \ket{\phi_{-+}}, \quad \ket{\varphi_{\psi_1\bar{\psi}_1}} = \ket{\phi_{+-}} - \ket{\phi_{-+}}
$$
and the two corresponding local order parameters are
$$
V_1 = P_{+-} + P_{-+}, \quad V_{\psi_1\bar{\psi}_1} = P_{+-} - P_{-+}
$$
where $P_{+-}$ ($P_{-+}$) is a projector onto the subspace where $Z_nX_{n+1}Z_{n+2} = 1$ when $n$ is even (odd) and $Z_nX_{n+1}Z_{n+2} = -1$ when $n$ is odd (even).  $V_{\psi_1\bar{\psi}_1}$ is a $Z_2$ order parameter that distinguishes between the two symmetry-broken states. This $Z_2$ structure follows naturally from the $\psi_1\bar{\psi}_1 \times \psi_1\bar{\psi}_1 = 1$ fusion rule in both $\mathcal{A}_1$ and $\mathcal{A}_6$.

The two symmetry broken states $\ket{\phi_{+-}}$ and $\ket{\phi_{-+}}$ are both invariant under the invertible $\IZ_2\times \IZ_2$ symmetry generated by $\eta_o$ and $\eta_e$. They are mapped into each other under $\CN_o\CN_e$. Therefore, this phase can be straight-forwardly described as symmetry broken from Rep($H_8$) to $\IZ_2\times \IZ_2$. 

\subsection{Comparison with Hopf algebra construction}
The gapped Rep($H_8$) phases also admits a $1+1$d lattice realization formulated directly in terms of a Hopf algebra and its comodule algebra. In the construction presented above, the non-invertible symmetry line $\CN_o\CN_e$ is implemented via the Kramers--Wannier duality on both qubit chains, whose lattice realization necessarily mixes with lattice translation. However, in the infrared this symmetry flows to a genuine $\Rep(H_8)$ symmetry under which translation acts trivially. Since $\Rep(H_8)$ is anomaly-free, it admits an alternative lattice realization in which the non-invertible line $\CN_o\CN_e$ is implemented in an ``on-site'' manner and does not mix with translation. Such a formulation is described in \cite{Inamura:2021szw,lu2026generalized}, where the construction of $\Rep(H_8)$-symmetric gapped phases is given in terms of the Hopf algebra $H_8$ and an $H_8$-comodule algebra $V$.

To be specific, for $\Rep(H_8)$ symmetry, there are 6 symmetric gapped phases given by the $H_8$-comodule algebra. We refer the readers to \cite{lu2026generalized} for details. All the symmetric gapped phases can be realized on tensor product Hilbert space with two-body commuting projector Hamiltonian. We list the $H_8$-comodule algebra $V$, the ground state degeneracy (GSD) and the symmetry action on the short-range entangled ground states as follows.
\begin{itemize}
    \item $V=\IC$ with GSD=1, identified with $H_{14}$,
    \begin{equation}
	\begin{array}{c|c|c|c|c}
		1 & \eta_o & \eta_e &\eta_o \eta_e& \CN_o\CN_e \\ \hline
		1 & 1& 1& 1& 2
	\end{array}
\end{equation}
 \item $V=\IC[\grp{1,x}] \cong \IC[\grp{1,y}]$ with GSD=2, identified with $H_{13}$,
 \begin{equation}
	\begin{array}{c|c|c|c|c}
		1 & \eta_o & \eta_e &\eta_o \eta_e& \CN_o\CN_e \\ \hline
		\left(
		\begin{array}{cc}
			1 & 0 \\
			0 & 1 \\
		\end{array}
		\right)& \left(
		\begin{array}{cc}
			0 & 1 \\
			1 & 0 \\
		\end{array}
		\right)&\left(
		\begin{array}{cc}
			0 & 1 \\
			1 & 0 \\
		\end{array}
		\right)&\left(
		\begin{array}{cc}
			1 & 0 \\
			0 & 1 \\
		\end{array}
		\right)&\left(
		\begin{array}{cc}
			1 & 1 \\
			1 & 1 \\
		\end{array}
		\right)
	\end{array}
\end{equation}
\item $V=\IC[\grp{1,xy}]$ with GSD=2, identified with $H_{16}$,
\begin{equation}\label{eq:H13act}
	\begin{array}{c|c|c|c|c}
		1 & \eta_o & \eta_e &\eta_o \eta_e& \CN_o\CN_e \\ \hline
		\left(
		\begin{array}{cc}
			1 & 0 \\
			0 & 1 \\
		\end{array}
		\right)& \left(
		\begin{array}{cc}
			1 & 0 \\
			0 & 1 \\
		\end{array}
		\right)&\left(
		\begin{array}{cc}
			1 & 0 \\
			0 & 1 \\
		\end{array}
		\right)&\left(
		\begin{array}{cc}
			1 & 0 \\
			0 & 1 \\
		\end{array}
		\right)&\left(
		\begin{array}{cc}
			0 & 2 \\
			2 & 0 \\
		\end{array}
		\right)
	\end{array}
\end{equation}
\item $V=\IC[\grp{1,x,y,xy}]$ with GSD=4, identified with $H_{15}$,
\begin{equation}
	\begin{array}{c|c|c|c|c}
		1 & \eta_o & \eta_e &\eta_o \eta_e& \CN_o\CN_e \\\hline
		\left(
		\begin{smallmatrix}
			1 & 0 & 0 & 0 \\
			0 & 1 & 0 & 0 \\
			0 & 0 & 1 & 0 \\
			0 & 0 & 0 & 1
		\end{smallmatrix}
		\right)&
		\left(
		\begin{smallmatrix}
			0 & 1 & 0 & 0 \\
			1 & 0 & 0 & 0 \\
			0 & 0 & 0 & 1 \\
			0 & 0 & 1 & 0 \\
		\end{smallmatrix}
		\right)&
		\left(
		\begin{smallmatrix}
			0 & 1 & 0 & 0 \\
			1 & 0 & 0 & 0 \\
			0 & 0 & 0 & 1 \\
			0 & 0 & 1 & 0 \\
		\end{smallmatrix}
		\right)&
		\left(
		\begin{smallmatrix}
			1 & 0 & 0 & 0 \\
			0 & 1 & 0 & 0 \\
			0 & 0 & 1 & 0 \\
			0 & 0 & 0 & 1
		\end{smallmatrix}
		\right)&
		\left(
		\begin{smallmatrix}
 0 & 0 & 1 & 1 \\
0 & 0 & 1 & 1 \\
1 & 1 & 0 & 0 \\
1 & 1 & 0 & 0 
		\end{smallmatrix}
		\right)
	\end{array}
\end{equation}
\item $V=A_{xy}^q$ with GSD=4, identified with $H_{12}$,
\begin{equation}
	\begin{array}{c|c|c|c|c}
		1 & \eta_o & \eta_e &\eta_o \eta_e& \CN_o\CN_e \\ \hline
		\left(
		\begin{smallmatrix}
			1 & 0 & 0 & 0 \\
			0 & 1 & 0 & 0 \\
			0 & 0 & 1 & 0 \\
			0 & 0 & 0 & 1
		\end{smallmatrix}
		\right)&\left(
		\begin{smallmatrix}
				1 & 0 & 0 & 0 \\
				0 & 1 & 0 & 0 \\
				0 & 0 & 0 & 1 \\
				0 & 0 & 1 & 0 
		\end{smallmatrix}
		\right)&\left(
		\begin{smallmatrix}
				0 & 1 & 0 & 0 \\
				1 & 0 & 0 & 0 \\
				0 & 0 & 1 & 0 \\
				0 & 0 & 0 & 1 
		\end{smallmatrix}
		\right)&\left(
		\begin{smallmatrix}
	0 & 1 & 0 & 0 \\
	1 & 0 & 0 & 0 \\
	0 & 0 & 0 & 1 \\
	0 & 0 & 1 & 0 
		\end{smallmatrix}
		\right)&\left(
		\begin{smallmatrix}
	0 & 0 & 1 & 1 \\
	0 & 0 & 1 & 1 \\
	1 & 1 & 0 & 0 \\
	1 & 1 & 0 & 0 
		\end{smallmatrix}
		\right)
	\end{array}
\end{equation}
where $A_{xy}^q$ is a 4-dimension $H_8$-comodule algebra, but is neither a subalgebra of $H_8$ nor a (twisted) group algebra. The full data is presented in \cite{van2025h,lu2026generalized}.
\item $V=\IC[H_8]$ with GSD=5, identified with $H_{11}$,
\begin{align}
	\begin{array}{c|c|c|c|c}
		1 & \eta_o & \eta_e &\eta_o \eta_e& \CN_o\CN_e \\ \hline
		\left(
		\begin{smallmatrix}
			1 & 0 & 0 & 0 & 0 \\
			0 & 1 & 0 & 0 & 0 \\
			0 & 0 & 1 & 0 & 0 \\
			0 & 0 & 0 & 1 & 0 \\
			0 & 0 & 0 & 0 & 1
		\end{smallmatrix}
		\right)
		&
		\left(
		\begin{smallmatrix}
			0 & 0 & 1 & 0 & 0 \\
			0 & 0 & 0 & 1 & 0 \\
			1 & 0 & 0 & 0 & 0 \\
			0 & 1 & 0 & 0 & 0 \\
			0 & 0 & 0 & 0 & 1
		\end{smallmatrix}
		\right)
		&
		\left(
		\begin{smallmatrix}
			0 & 1 & 0 & 0 & 0 \\
			1 & 0 & 0 & 0 & 0 \\
			0 & 0 & 0 & 1 & 0 \\
			0 & 0 & 1 & 0 & 0 \\
			0 & 0 & 0 & 0 & 1
		\end{smallmatrix}
		\right)
		&
		\left(
		\begin{smallmatrix}
			0 & 0 & 0 & 1 & 0 \\
			0 & 0 & 1 & 0 & 0 \\
			0 & 1 & 0 & 0 & 0 \\
			1 & 0 & 0 & 0 & 0 \\
			0 & 0 & 0 & 0 & 1
		\end{smallmatrix}
		\right)
		&
		\left(
		\begin{smallmatrix}
			0 & 0 & 0 & 0 & 1 \\
			0 & 0 & 0 & 0 & 1 \\
			0 & 0 & 0 & 0 & 1 \\
			0 & 0 & 0 & 0 & 1 \\
			1 & 1 & 1 & 1 & 0
		\end{smallmatrix}
		\right)
	\end{array}
\end{align}
\end{itemize}
One might expect the twisted group algebra $\IC[\IZ_2 \times \IZ_2]^\psi$ to yield a distinct $H_8$-comodule algebra, but whose gapped phase is in fact equivalent to the $\Rep(H_8)$ symmetric phase $H_{14}$ constructed from the trivial algebra $\IC$. The phase $H_{12}$ is instead realized by the comodule algebra $A_{xy}^q$, which is neither a subalgebra of $H_8$ nor a (twisted) group algebra. Such a partially symmetry-breaking phase has no analogue with the symmetries of the form $G$ or $\Rep(G)$.

\section{Dual $(D_8,\gamma)$ Lattice 
Model}
\label{sec:Dual}

In this section, we will gauge the diagonal $\IZ_2^{oe}$ symmetry on the lattice to get the dual $(D_8,\gamma)$ model. Upon gauging the $\IZ_2^{oe}$ in $\Rep(H_8)$, the flat connection will lead to dual quantum symmetry which is given by invertible symmetry $D_8$ with a 't Hooft anomaly $\gamma\in H^3(D_8,U(1))$. Some subtlety related to the lattice translation symmetry occurs in this process as we will see. 

\subsection{The Symmetry after Gauging}
In this subsection, we will identify the symmetry after gauging. Gauging the diagonal $\mathbb{Z}_2^{oe}$ symmetry generated by $\eta_o\eta_e=\prod_n X_n$ (and then integrating out the matter fields using the gauge constraints) is equivalent to implementing the Kramers-Wannier transformation: $(X_n,Z_nZ_{n+1})\mapsto (Z_nZ_{n+1},X_{n+1})$. As a result, the theory after gauging will automatically have a dual symmetry $\mathbb{Z}_2^X$ generated by $\eta_X:=\prod_n X_n$. We claim that the Hamiltonian after gauging commutes with two more symmetry generators: $\eta_Z:=\prod_n Z_n$ and $V:=T\prod_n \mc H_n$ where $T$ as defined previously is the translation generator and $\mc H_n$ is the Hadamard gate acting on site $n$. If we restrict to the subspace where $T^2=1$, these operators effectively generate a symmetry group $D_8=\langle r,s| r^4=s^2=1,srs=r^{-1} \rangle$. 
More precisely, we may identify the dual $\mathbb{Z}_2^X$ symmetry generator $\eta_X$ as $s$. The two order-4 operators $\eta_X V$ and $\eta_Z V$ may be identified with $r$ and $r^3$, respectively. We note that there is no fundamental distinction between $\eta_X V$ and $\eta_Z V$, as they are related by an onsite unitary transformation $\eta_X$ that preserves the action of the full symmetry group. 
It is not hard to see that this dual symmetry has a Lieb-Schultz-Mattis type anomaly: Although $\eta_X$ and $\eta_Z$ commute given the even number of total sites, their actions on each single site anti-commutes. Moreover, the single-site translation in $V$ can detect this projective representation. 

We will first derive the above claim by a non-rigorous but relatively simple calculation, ignoring all subtleties associated with boundary conditions or symmetry charge sectors. We simply push the Rep($H_8$) symmetry generators through the $\mathbb{Z}_2^{oe}$ gauging operator which we denote by $\Noe$ \footnote{Note that it implements gauging $\mathbb{Z}_2^{oe}$, which is different from $\CN = \CN_o\CN_e$.}, and see what we get. Using $\Noe(X_n,Z_nZ_{n+1})= (Z_nZ_{n+1},X_{n+1})\Noe$, it is straightforward to check that $\Noe \eta_o=\Noe \eta_e=\eta_Z\Noe$, hence $\eta_Z$ should be a symmetry generator after gauging. Furthermore, by presuming that an unknown operator $V$ satisfies $\Noe\mc N=V\Noe$, we can compute
\begin{align}
    VX_n\Noe&=V\Noe Z_{n-1}Z_n=\Noe \mc NZ_{n-1}Z_n=\Noe (\cdots X_{n-2}X_{n-1}X_n)\mc N\nonumber\\
    &=Z_{n+1}\Noe \mc N=Z_{n+1}V\Noe , 
\end{align}
implying $(VX_n-Z_{n+1}V)\Noe =0$. We also have
\begin{align}
    VZ_n\Noe &=V\Noe (\cdots X_{n-3}X_{n-2}X_{n-1})=\Noe \mc N(\cdots X_{n-3}X_{n-2}X_{n-1})\nonumber\\
    &=\Noe  Z_nZ_{n+1}\mc N=X_{n+1}\Noe \mc N=X_{n+1}V\Noe , 
\end{align}
implying $(VZ_n-X_{n+1}V)\Noe =0$. We can then take $V$ to be $T\prod_n\mc H_n$ up to a normalization factor. This derivation is intuitive but not totally rigorous. For example, the relation $\Noe(\cdots X_{n-2}X_{n-1}X_n)=Z_{n+1}\Noe $ is not strictly correct on a finite closed chain. Moreover, since the duality operator $\Noe$ is not invertible, we can not directly conclude $VX_n=Z_{n+1}V$ from $(VX_n-Z_{n+1}V)\Noe =0$. Nonetheless, the result we have derived is correct and we refer interested readers to Appendix \ref{app:LatticeDualSymmetry} for a more rigorous proof. 

\subsection{Dual Hamiltonians}
The dual Hamiltonians $\tilde H_{1j}$ are listed as follows. 
\begin{align}
    \tilde H_{11}&=\sum_k\left[ \left( \frac{1-X_{2k}X_{2k+1}}{2} \right)\left( \frac{1-Z_{2k}Z_{2k+1}}{2}+\frac{1-Z_{2k+2}Z_{2k+3}}{2} \right)\right.
    \nonumber\\
    &~~~+\left.\left( \frac{1-X_{2k+1}X_{2k+2}}{2} \right)\left( \frac{1-Z_{2k-1}Z_{2k}}{2}+\frac{1-Z_{2k+1}Z_{2k+2}}{2} \right)
    \right]. \\
    \tilde H_{12}&=\sum_n\left[ 
    \left(\frac{1-Z_nZ_{n+1}}{2}\right)\left(\frac{1-Z_{n+1}Z_{n+2}}{2}\right)
    +\left(\frac{1-X_{n+1}X_{n+2}}{2}\right)\left(\frac{1-X_{n+2}X_{n+3}}{2}\right)
    \right]. \\
    \tilde H_{13}&= -\sum_k(Z_{2k-1}Z_{2k+1}+X_{2k}X_{2k+2}). \\
    \tilde H_{14}&=\sum_n Y_nY_{n+1}. \\
    \tilde H_{15}&= \sum_k(Z_{2k-1}X_{2k}Z_{2k+1}X_{2k+2}+X_{2k}Z_{2k+1}X_{2k+2}Z_{2k+3}). \\
    \tilde H_{16}&=\sum_n Y_nY_{n+2}. 
\end{align}
Ground states of those Hamiltonians are given below. 
\begin{enumerate}
    \item $\tilde H_{11}$ has $4$ degenerate ground states: 
    \begin{align}
        \ket{++\cdots +},\quad
        \ket{--\cdots -},\quad
        \ket{00\cdots 0},\quad \ket{11\cdots 1}. 
    \end{align}
    If we take the ground state $\ket{++\cdots +}$, the remaining symmetry is a $\mathbb{Z}_2$ subgroup generated by $\eta_X$, or $\ex{s}$ in terms of the $D_8$ notation. The remaining symmetry subgroups of the other three ground states are all conjugate to $\ex{s}$. We emphasize that in the case of nonabelian symmetry breaking, different vacuum states\footnote{A ``vacuum state'' is a ground state that satisfies the cluster decomposition property, i.e. a non-cat ground state. } generically preserve distinct but conjugate symmetry subgroups. 
    
    \item $\tilde H_{12}$ has $2$ ground states: 
    \begin{align}
        \bigotimes_k(\ket{00}+\ket{11})_{2k-1,2k},\quad \bigotimes_k(\ket{00}+\ket{11})_{2k,2k+1}. 
    \end{align}
    Both ground states preserve $\ex{s,r^2}$, which is a central $\mathbb{Z}_2\times \mathbb{Z}_2$ subgroup. 
    
    \item $\tilde H_{13}$ has $4$ ground states: $\prod_k\ket{x\alpha}_{2k-1,2k}$ where $x\in\{0,1\}$ and $\alpha\in\{+,-\}$. 
    Considering the ground state $\prod_k\ket{0+}_{2k-1,2k}$, the remaining symmetry is a $\mathbb{Z}_2$ subgroup $\ex{sr}$. 
    
    \item $\tilde H_{14}$ has $2$ ground states. If we denote by $\ket{\tilde x}$ with $x\in\{0,1\}$ the eigenstates of Pauli-$Y$ operator with eigenvalues $(-1)^x$, then these ground states are given by
    \begin{align}
        \ket{\tilde 0\tilde 1 \tilde0 \tilde1\cdots},\quad \ket{\tilde1 \tilde0 \tilde1 \tilde0 \cdots}. 
    \end{align}
    Both ground states preserve the central $\mathbb{Z}_2\times \mathbb{Z}_2$ subgroup $\ex{sr,r^2}$. 
    
    \item If we assume $2N\in 4\mathbb{Z}$, then $\tilde H_{15}$ has $8$ degenerate ground states:  
    \begin{align}
        \bigotimes_k\ket{x\alpha y\beta}_{4k-3,4k-2,4k-1,4k},\quad{\rm s.t.}\quad (-1)^{x+y}\alpha\beta=-1,  
    \end{align}
    where $x,y\in\{0,1\}$ and $\alpha,\beta\in\{+,-\}$. Note that $T^2$, the two-site translation symmetry, is spontaneously broken, hence the effective symmetry group in the ground state subspace is no longer $D_8$. Considering the ground state $\prod_k\ket{0+0-}_{4k-3,4k-2,4k-1,4k}$, the remaining symmetry subgroup is generated by $\eta_Z T^2$. 
    
    \item If we assume $2N\in4\mathbb{Z}$, then $\tilde H_{16}$ has $4$ degenerate ground states: 
    \begin{align}
        \ket{\tilde0\tilde0\tilde1\tilde1}^{\otimes N/2},\quad \ket{\tilde0\tilde1\tilde1\tilde0}^{\otimes N/2},\quad \ket{\tilde1\tilde0\tilde0\tilde1}^{\otimes N/2},\quad \ket{\tilde1\tilde1\tilde0\tilde0}^{\otimes N/2}. 
    \end{align}
    Note that $T^2$ is also spontaneously broken and the  symmetry group can not be regarded as $D_8$. Considering the first ground state $\ket{\tilde0\tilde0\tilde1\tilde1}^{\otimes N/2}$, the remaining symmetry is $\ex{\eta_X\eta_Z,\eta_X T^2}$. 
\end{enumerate}

In particular, the transition between $H_{11}$ and $H_{14}$, which realizes the order-to-disorder transition of $\Rep(H_8)$, is dual to the transition between $\tilde{H}_{11}$ and $\tilde{H}_{14}$, which realizes a $(D_8,\gamma)$ DQCP between two partially SSB phases with incompatible unbroken symmetries $\langle s \rangle$ and $\langle sr, r^2 \rangle$.

\section{DQCPs are dual to generalized LG transitions}
\label{sec:General}

In the previous sections, we showed explicitly that the order-to-disorder transition of the non-invertible symmetry $\mathrm{Rep}(H_8)$ is dual to a transition between symmetry-breaking phases of a $D_8$-symmetric model carrying a nontrivial ’t Hooft anomaly $\gamma \in H^3(D_8, U(1))$. In the $\mathrm{Rep}(H_8)$ frame, the transition is an ordinary symmetry-breaking transition. In the dual $D_8$ frame, however, no trivially symmetric gapped phase exists due to the anomaly, and the direct transition between two incompatible symmetry-breaking phases is naturally interpreted as a deconfined quantum critical point (DQCP).

In this section, we describe the general kinematic structure underlying such DQCPs in $1+1$ dimensions and the generalization of the duality mapping.

\subsection{Kinematic structure of deconfined quantum critical points}
A deconfined quantum critical point (DQCP) is a direct continuous transition between two symmetry-breaking phases with incompatible unbroken symmetries. A necessary condition is a mixed anomaly between the two broken symmetry groups, which forces the symmetry defects of one broken symmetry to carry fractionalized charges of the other \cite{levin2004dqcp,Gaiotto2015,Senthil:2023vqd}. In the following, we consider a DQCP as a gapless theory with symmetry $G$ and 't Hooft anomaly $\omega$, such that $G$ contains at least two incompatible subgroups on each of which $\omega$ trivializes. This definition provides the desired features of the existing DQCP examples, namely 
\begin{enumerate}
    \item there exist partially SSB phases with incompatible unbroken symmetries;
    \item due to the anomaly, the symmetry defects carry fractionalized charges of the other symmetry and no trivially symmetric gapped phase exists;
    \item a gapless theory can serve as the direct transition between the partially SSB phases. 
\end{enumerate}
To be specific, let's focus on 1+1d and consider a symmetry group $G$ with 't Hooft anomaly given by the non-trivial element $\omega \in H^{3}(G,U(1))$, and we label the symmetry by $(G,\omega)$. $G$ contains incompatible subgroups $K_i$ with $K_i \nleq K_j \ (i \neq j)$, and the union of $K_i$s generates $G$. The restriction $\omega|_{K_i}$ is cohomologically trivial. Each $K_i$ may have its own twist $\psi_i\in H^2(K_i,U(1))$. Therefore, we denote the symmetry-breaking phases by $\CM_{K_i,\psi_i}$, where $K_i$ is the unbroken symmetry and $\psi_i \in H^2(K_i, U(1))$ specifies a possible SPT order. When more than two incompatible subgroups $K_i$ are present, the gapless theory describes a deconfined quantum multicritical point (DQMCP). The DQCP reduces to the case with $K_1,K_2$ subgroups in $(G,\omega)$ and $G=\grp{K_1,K_2}$: the low-energy gapped phases are primarily (a) SSB with $K_1$ unbroken and some possible $\psi_1$ SPT order, $\CM_{K_1,\psi_1}$ (b) SSB with $(K_2,\psi_2)$ unbroken, $\CM_{K_2,\psi_2}$ and (c) gapless theory with $G$ and anomaly $\omega$. (c) can be viewed as the DQCP between (a) and (b). In the SSB phase $\CM_{K_1,\psi_1}$, the symmetry defects of $K_2$ carry the fractionalized charges of $K_1$ due to the mixed anomaly, similarly for the SSB with $K_2$ unbroken. 

Gauging an anomaly-free subgroup $L \subset G$ with a twist $\psi \in H^2(L, U(1))$ yields a dual theory with group-theoretical fusion category symmetry $\CD = \CC(G,\omega;L,\psi)$ \cite{etingof2005fusion,Bhardwaj2018},
\begin{equation}
   (G,\omega) 
\overset{\xleftarrow{\text{gauge dual symmetry}}}
{\xrightarrow[\text{gauge }L\text{ with twist } \psi]{}}
\CD=\CC(G,\omega;L,\psi)
\end{equation}
This generalized gauging maps gapped phases of $(G,\omega)$ to gapped phases of the generally non-invertible dual symmetry $\CD$ \cite{yuji2017gauging} and see \appref{app:gappedphase} for the concrete example of $\Rep(H_8)$; the fusion rules of $\CD$ are given in \cite{natale2013fusion} and reviewed in \appref{app:gaugesub}. Physically, the phase $\CM_{K_i,\psi_i}$ is first stacked with the SPT $\psi$ and then gauged with respect to $L$. In particular, choosing $L = K_i$ and $\psi = \psi_i^{-1}$ maps $\CM_{K_i,\psi_i}$ to the fully symmetry-broken phase of $\CD$. Alternatively, if $L$ is the broken part of the symmetry in $\CM_{K_i,\psi_i}$ and $\omega$ trivializes on $L$, then there is no obstruction to gauging $L$ and it maps $\CM_{K_i,\psi_i}$ to a trivially symmetric gapped phase of $\CD$.

More generally, a DQCP of $(G,\omega)$ maps to an order-to-disorder transition of the dual symmetry $\CD$ when $\omega$ trivializes on the broken part of the symmetry in the gapped phase $\CM_{K_i,\psi_i}$. The key questions are then: what is the broken part $L$, and when does the anomaly trivialize on $L$? A natural sufficient condition is that $G$ admits an \textit{exact} factorization $G = K_1 K_2$, meaning every $g \in G$ decomposes \textit{uniquely} as $g = k_1 k_2$ with $k_i \in K_i$ (the direct product $G = K_1 \times K_2$ being a special case), such that $K_1\cap K_2 = \{e\}$. In this case, the broken symmetry in $\CM_{K_1,\psi_1}$ is $K_2$, and the trivialization of $\omega$ on $K_2$ follows from the definition of DQCP. Gauging $K_2$ then maps $\CM_{K_1,\psi_1}$ to a trivially symmetric gapped phase of $\CD$, and vice versa. The dual symmetry $\CD$ is therefore anomaly-free and takes the form $\CD = \Rep(H)$ for some Hopf algebra $H$, which is group-theoretical whenever $\CD$ is.

The exact factorization condition is quite restrictive. Can it be relaxed to product factorization and allow $K_1 \cap K_2 = K_{12}$ for some nontrivial subgroup $K_{12}$? A naive attempt fails: the broken symmetry in $\CM_{K_1,\psi_1}$ is the coset $K_2 \setminus K_{12}$, but gauging all of $K_2$ also breaks $K_{12}$, so the phase after gauging is not trivially symmetric. The remedy is to compensate by stacking a nontrivial SPT on $K_{12}$, or equivalently, by performing a twisted gauging \cite{Lu:2024ytl}. This is because the suitable SPT is invariant under gauging, e.g. cluster state of $\IZ_2\times \IZ_2$ is invariant under gauging $\IZ_2\times \IZ_2$ \cite{Seifnashri24}. 

To illustrate, consider $G = \IZ_2^a \times \IZ_2^b \times \IZ_2^c \times \IZ_2^d$ with type-III anomaly $\omega = a \cup b \cup d$, and take $K_1 = \IZ_2^a \times \IZ_2^b \times \IZ_2^c$, $K_2 = \IZ_2^b \times \IZ_2^c \times \IZ_2^d$, so that $K_{12} = \IZ_2^b \times \IZ_2^c$. Gauging $K_2$ in the phase $\CM_{K_1, 1}$ yields $\CM_{\IZ_2^a \times \IZ_2^d, 1}$, which still has $K_{12}$ broken. Concretely, the two equivalent resolutions are as follows. One can stack an SPT with $\psi = b \cup c$ on the $K_{12}$ part of $\CM_{K_1, 1}$; gauging $K_2$ then maps $\CM_{K_1, \psi}$ to the SPT phase $\CM_{G, \psi}$. Alternatively, one can perform a twisted gauging of $K_2$ with twist $\psi = b \cup c$, which directly maps $\CM_{K_1, 1}$ to $\CM_{G, 1}$.

We now summarize the conditions under which a DQCP of $(G,\omega)$ can be mapped to an order-to-disorder transition of an anomaly-free (possibly non-invertible) symmetry via generalized gauging. Suppose $G$ has two incompatible subgroups $K_1, K_2$ such that every element $g \in G$ can be written as $g = k_1 k_2$ with $k_i \in K_i$. Then the duality to an order-to-disorder transition holds if one of the following conditions is satisfied:
\begin{enumerate}
    \item $K_1 \cap K_2 = \{e\}$ (exact factorization). In this case, gauging $K_2$ directly maps $\CM_{K_1,\psi_1}$ to a trivially symmetric phase of the dual symmetry.
    \item $K_1 \cap K_2 = K_{12}$ for some nontrivial subgroup $K_{12}$, and there exists a nontrivial 2-cocycle $\psi \in H^2(K_{12}, U(1))$ (more precisely non-degenerate 2-cocycle) that can be used to perform a twisted gauging of $K_2$, preventing the spontaneous breaking $K_{12}$.
\end{enumerate}
We note that the above conditions are precisely the conditions under which the group-theoretical fusion category $\CC(G,\omega;L,\psi)$ admits a fiber functor. The physical analysis above therefore provides an alternative, constructive proof of the mathematical classification of fiber functors for group-theoretical fusion categories \cite{ostrik2002module,Lu:2022ver}.

This condition generalizes to DQMCPs with multiple subgroups $K_i$, though determining when $\omega$ trivializes on the broken part of the symmetry becomes more involved. For instance, consider $G = \IZ_2^a \times \IZ_2^b \times \IZ_2^c$ with anomaly $\omega = a \cup b \cup c + a \cup b \cup b$. Although $\omega$ trivializes on each individual $K_i = \IZ_2$, the broken symmetry $\IZ_2^a \times \IZ_2^b$ of the gapped phase $\CM_{\IZ_2^c,1}$ is anomalous and therefore cannot be gauged.

Conversely, one can ask: when does an order-to-disorder transition of a non-invertible symmetry $\Rep(H)$ admit a dual description as a DQCP? The generalized gauging discussed above does not guarantee that the dual symmetry takes the form $\Rep(H)$ for a non-group Hopf algebra $H$. It is therefore useful to reverse-engineer DQCPs by starting from a theory with $\Rep(H)$ symmetry and ungauging an appropriate \textit{proper} subgroup to recover an anomalous group symmetry $(G,\omega)$. A broad class of Hopf algebras for which this reverse construction is guaranteed are those built from the twisted bicrossed product (reviewed in \appref{app:twbiprod}): for these, ungauging a suitable \textit{proper} subgroup always yields a group symmetry carrying a mixed anomaly \cite{natale2003grouphopf}.

It is known that all semisimple Hopf algebras of dimension less than $36$ are group-theoretical \cite{nikshych2008non,Gelaki:2009blp,natale2010hopf,cuadra2017orders,Lu2025nongrp}. Consequently, every anomaly-free non-invertible symmetry with Frobenius-Perron dimension less than $36$ is related to some invertible symmetry $(G,\omega)$ by gauging a proper subgroup $L$ \footnote{If a finite group $G$ has no proper nontrivial subgroup, then its order must be prime, therefore it is abelian, whose possible dual symmetry cannot be non-invertible.}, possibly after stacking an SPT $\psi \in H^2(L, U(1))$. Combined with the preceding analysis, this yields the following result.
\begin{physicstheorem}
\begin{enumerate}
    \item DQCP of $(G,\omega)$ whose incompatible subgroups $K_1, K_2$ satisfy $G = K_1 K_2$ (i.e., every $g \in G$ decomposes as $g = k_1 k_2$ with $k_i \in K_i$) is dual to an order-to-disorder transition of an anomaly-free, possibly non-invertible, group-theoretical fusion category symmetry, provided that either (a) $K_1 \cap K_2 = \{e\}$, i.e. exact factorization, or (b) $K_1 \cap K_2 = K_{12}$ for some nontrivial subgroup $K_{12}$ admitting a nontrivial 2-cocycle $\psi \in H^2(K_{12}, U(1))$ which defines a nondegenerate projective class.
    \item Conversely, any order-to-disorder continuous transition of an anomaly-free group-theoretical non-invertible symmetry is dual to a transition between two partially symmetry-breaking phases of an ordinary group symmetry $G$ with a possible 't Hooft anomaly $\omega$. In particular, this applies to all anomaly-free non-invertible symmetries with $\mathrm{dim} H=\mathrm{FPdim}(\CC)<36$, since these are necessarily group-theoretical. When the 3-cocycle $\omega \in H^3(G, U(1))$ is nontrivial, the dual transition is a DQCP.
\end{enumerate}
\end{physicstheorem}

\begin{proof}
    (1) Under the stated conditions, there exists a generalized gauging that maps the partially SSB phase $\CM_{K_1,\psi_1}$ to a trivially symmetric gapped phase of the dual symmetry and some $\CM_{K_2,\psi_2}$ to fully SSB phase, as shown in the preceding discussion.

    (2) The existence of a trivially symmetric gapped phase on one side of the order-to-disorder transition implies the non-invertible symmetry is anomaly-free and hence of the form $\Rep(H)$ for some semisimple Hopf algebra $H$. Since non-invertible, it excludes all the abelian cases, in particular $\IZ_p$s. Since $H$ is group-theoretical, we have $\Rep(H) \cong \CC(G,\omega;L,\psi)$ for some finite group $G$, 3-cocycle $\omega$, and subgroup $L < G$ . For $\CC(G,\omega;L,\psi)$ to be non-invertible with nontrivial $\omega$, the subgroup $L$ must be a proper nontrivial subgroup of $G$. Ungauging $L$ maps the trivially symmetric phase of $\CC(G,\omega;L,\psi)$ to a partially SSB phase of $(G,\omega)$ in which $L$ is broken. The anomaly $\omega$ must trivialize on the unbroken part, indicating a mixed anomaly between $L$ and the remaining unbroken subgroup. The fully SSB phase of $\CC(G,\omega;L,\psi)$ maps to a different partially SSB phase with an incompatible unbroken symmetry, completing the DQCP structure. When $\dim H=\mathrm{FPdim}(\CC) < 36$, the Hopf algebra is automatically group-theoretical, and the same argument applies.
\end{proof}

The \(\operatorname{Rep}(H_8)\) example realizes the exact-factorization case of Physics Theorem 1: in the dual \((D_8,\gamma)\) frame, take \(K_1=\langle s\rangle\) and \(K_2=\langle sr,r^2\rangle\).
These subgroups obey \(K_1\cap K_2=\{1\}\) and \(D_8=K_1K_2\), with \(\gamma|_{K_1}=\gamma|_{K_2}=1\), so both define admissible partially SSB phases.
Gauging \(L=K_1=\langle s\rangle\) gives the dual symmetry \(\CC(D_8,\gamma;\langle s\rangle,1)\simeq \operatorname{Rep}(H_8)\).
Under this gauging, \(\mathcal M_{\langle s\rangle,1}\) maps to the fully SSB phase of \(\operatorname{Rep}(H_8)\), while \(\mathcal M_{\langle sr,r^2\rangle,1}\) maps to the trivially symmetric phase.
Thus the \((D_8,\gamma)\) DQCP between \(\mathcal M_{\langle s\rangle,1}\) and \(\mathcal M_{\langle sr,r^2\rangle,1}\) is dual to the \(\operatorname{Rep}(H_8)\) order-to-disorder transition between \(H_{11}\) and \(H_{14}\).
Equivalently, the two descriptions share the same SymTFT bulk \(\CD(D_8)^\gamma\simeq \hcenter\), but use different symmetry boundaries.

From the SymTFT perspective, the duality in Physics Theorem~1 corresponds to changing the Lagrangian algebra on the symmetry boundary of the twisted quantum double $D(G)^{\omega}$: the original $\Rep(G)$ boundary realizes the group symmetry $(G,\omega)$, while a different Lagrangian algebra on the same bulk realizes the dual $\Rep(H)$ symmetry. The shared bulk $D(G)^{\omega}$ encodes the equivalence between the two descriptions.

\subsection{Examples}
In this subsection, we present representative examples of DQCPs that admit dual descriptions as order-to-disorder transitions of (possibly non-invertible) symmetries. Starting from an anomalous group symmetry $(G,\omega)$, we gauge suitable anomaly-free subgroups and determine the resulting dual fusion categories $\CC(G,\omega;L,\psi)$, together with the induced mapping between gapped phases. Gapped phases correspond to module categories, which for group-theoretical fusion categories admit an explicit description in terms of subgroup data and double-coset constructions \cite{ostrik2002module,natale2017equivalence,bartsch2024non,bartsch2024non2}; a brief review is given in \appref{app:grpthryH8}.

Our primary focus is on cases where the dual symmetry takes the form $\Rep(H)$ for a non-group Hopf algebra $H$. Most examples below fall into the exact factorization case $K_1 \cap K_2 = \{e\}$; an example realizing the more general condition (2) of Physics Theorem~1, with nontrivial $K_{12}$, was already given in the discussion of $G = \IZ_2^a \times \IZ_2^b \times \IZ_2^c \times \IZ_2^d$ above. The examples are organized as follows:
\begin{enumerate}
    \item Abelian symmetries with nontrivial $3$-cocycle twists;
    \item Tambara-Yamagami categories arising from self-duality under gauging;
    \item Non-abelian symmetries with mixed anomalies;
    \item More general constructions based on twisted bicrossed products.
\end{enumerate}

\subsubsection{Abelian group symmetries with nontrivial twists}
In this subsection, we focus on two classes of abelian examples: (1) $\mathbb{Z}_p^a \times \mathbb{Z}_p^b$ with a type-II twist $a \cup \beta  b$, where $\beta$ is the Bockstein, which reduces to $a\cup b \cup b$ when $p=2$, and (2) $\mathbb{Z}_p^a \times \mathbb{Z}_p^b \times \mathbb{Z}_p^c$ with a type-III twist $a \cup b \cup c$. Here $p$ is a prime number, and $a,b,c$ denote $\mathbb{Z}_p$-valued $1$-cochains appearing in the cup product. We illustrate that gauging the subgroup of an abelian symmetry with appropriate 't Hooft anomaly may lead to interesting non-invertible symmetry.

\paragraph{$\IZ_p^a\times \IZ_p^b$ with type-II twist}
In the context of DQCPs, this class of symmetries was studied in \cite{carolyn2023dqcp} for the case $p=2$. The twist implies that the symmetry defect of $\IZ_p^a$ will carry the fractionalized charge of $\IZ_p^b$. More generally, gauging the $\mathbb{Z}_p^b$ subgroup maps the theory to one with $\mathbb{Z}_{p^2}$ symmetry. The possible gauging procedures and the resulting mappings between gapped phases are summarized below.
\begin{equation}
\begin{array}{c|c}
\hline\hline
\CC=(\IZ^a_p\times \IZ^b_p,\omega_{II}) 
& \CC/\IZ_p^b\cong(\IZ_{p^2},1) 
\\\hline
\CM_{\IZ_p^a,1} 
& \CM_{\IZ_{p^2},1} 
\\ \hline
\CM_{\IZ_p^b,1} 
& \CM_{1,1}
\\ \hline
\CM_{1,1} 
& \CM_{\IZ_p,1}
\\ \hline\hline
\end{array}
\end{equation}
where $\CM_{K,\psi}$ denotes the gapped phase with unbroken symmetry $K$ with a possible 2-cocycle twist $\psi\in H^2(K,U(1))$. The symmetry of the original DQCP is $\CC=(\IZ^a_p\times \IZ^b_p,\omega_{II})$ and the dual symmetry under gauging $H=\IZ_p^b$ is $\CC/H\cong(\IZ_{p^2},1) $. From the first and third columns, the DQCP of $\IZ_p^b$ SSB to $\IZ_p^a$ SSB is mapped $\IZ_{p^2}$ disorder to order transition.

The above construction can be generalized to arbitrarily many $\IZ_{p_i}$ groups with type-II anomaly. For $(\IZ_p\times \IZ_p\times \IZ_p,\omega_{II})$, up to basis transformation, it is equivalent to the case of $(\IZ_p\times \IZ_p,\omega_{II})$, but with an additional $\IZ_p$ factor. 

\paragraph{$\IZ^a_p\times \IZ^b_p\times \IZ^c_p$ with $\omega_{III}$ twist}
This case is less studied in the DQCP context, one may construct the lattice model and study the deconfined multicritical point. From the symmetry perspective, all the proper subgroups of $\IZ^a_p\times \IZ^b_p\times \IZ^c_p$ used below are anomaly free, we can gauge any of them. Here we list some interesting ones:
\begin{equation}
    \begin{array}{c|c|c}
    \hline\hline
         (\IZ_p^a\times \IZ^b_p\times \IZ^c_p,\omega_{III})
         & \CC/\grp{a}\cong \mathrm{H}_3(\mathbb{Z}_p)  
         & \CC/\grp{b,c}\cong \Rep(\mathrm{H}_3(\mathbb{Z}_p))
         \\ \hline
         \CM_{1,1}
         & \CM_{\IZ_p,1}
         & \CM_{\IZ_p^2,1}
         \\ \hline
         \CM_{\grp{a},1}
         & \CM_{1,1}
         & \text{Sym}
         \\ \hline
         \CM_{\grp{b},1}
         & \CM_{\IZ_p^2,1}
         & \CM_{\IZ_p^{\hat{c}},1}
         \\ \hline
         \CM_{\grp{a b},1}
         & \CM_{\IZ_p^2,\psi}
         & \text{Sym}
         \\ \hline
         \CM_{\grp{a,b},1}
         & \CM_{\IZ_p,1}
         & \CM_{\IZ_p^{\hat{c}},1}
         \\ \hline
         \CM_{\grp{b,c},1}
         & \CM_{\IZ_p^3,\psi}
         & \text{Fully SSB}
         \\ \hline
         \CM_{\grp{a,c},1}
         & \CM_{\IZ_p,1}
         & \CM_{\IZ_p^{\hat{b}},1}
         \\ \hline
         \CM_{\grp{ab^2,c},1}
         & \CM_{\IZ_p^3,\psi}
         & \CM_{\IZ_p^{\hat{b}},1}
         \\ \hline\hline
    \end{array}
\end{equation}
where Sym and Fully SSB refer to the fully symmetric trivial state and the fully SSB state, $\mathrm{H}_3(\mathbb{Z}_p)$ is the Heisenberg group over the field $\IZ_p$\footnote{The group $\mathrm{H}_3(\IZ_p)$ has the relation that is given by $\mathrm{H}_3(\mathbb{Z}_p)
=
\langle\, x, y, z
\;\mid\;
x^p = y^p = z^p = 1,\;
xz = zx,\;
yz = zy,\;
xy = z\, yx
\rangle$. An example is $\mathrm{H}_3(\mathbb{Z}_2) \cong D_8$.} \cite{Lanzetta:2022lze}. The $\Rep(\mathrm{H}_3(\mathbb{Z}_p))$ has the fusion rule \cite{Lu:2024ytl,Lu:2024lzf},
\begin{equation}
    \CP_i \otimes \CP_j = \begin{cases}
        \bigoplus_{g\in\IZ_p\times \IZ_p}g&i+j=0\mod p\\ p \CP_{i+j}
    \end{cases},\quad \CP_i\otimes g = g\otimes \CP_i = \CP_i 
\end{equation}
where $\Rep(\mathrm{H}_3(\mathbb{Z}_p))$ has $\IZ_p$ grading, with the trivial sector $\CC_0 = \IZ_p \times \IZ_p$ and non-trivial sectors containing only one simple object $\CP_i$ of quantum dimension $p$. 
Hence, $\Rep(\mathrm{H}_3(\mathbb{Z}_p)) \cong \CC(\IZ_p^3,\omega_{III},\IZ_p\times \IZ_p,1)$. In particular, the $\Rep(\mathrm{H}_3(\mathbb{Z}_p))$ invariant SPTs correspond to the SSB phase with unbroken symmetry $\grp{a^ib^jc^k}$, $\CM_{\grp{a^ib^jc^k}}$ in the original frame, where $i\ne0$. This is because $\IZ_p^3$ admits the exact factorization of $\grp{b,c},\grp{a^ib^jc^k}$ with $i \ne 0$. In particular, from the first and last column, the DQCP between unbroken $\grp{a}$ SSB and unbroken $\grp{b,c}$ SSB is mapped to $\Rep(\mathrm{H}_3(\mathbb{Z}_p))$ disorder-to-order transition.

\subsubsection{Tambara-Yamagami fusion category}
The Tambara-Yamagami fusion category $\TY(A,\chi,\epsilon)$ provides a categorical description of self-duality symmetries arising from gauging an abelian group $A$. The simple objects are the invertible elements $g\in A$ and a non-invertible self-duality line $\CN$, with the fusion rule,
\begin{equation}
    g\otimes h = gh,\quad \CN\otimes g=g\otimes \CN = \CN, \quad \CN\otimes \CN = \sum_{g\in A} g
\end{equation}
Additional data $\chi,\epsilon$ classify the $F$-symbols. Here, $\chi: A \times A \to U(1)$ is a symmetric bicharacter specifying the identification between the original symmetry and its dual quantum symmetry, while $\epsilon=\pm1 \in H^3(\mathbb{Z}_2,U(1))$ is the Frobenius--Schur indicator associated with the duality defect line. See \cite{Thorngren2019Fusion,Thorngren:2021yso,Zhang:2023wlu,Antinucci:2023ezl,Lu:2024ytl} for more applications of Tambara-Yamagami fusion category in physics. For example, the Kramers-Wannier duality in the transverse field Ising model corresponds to $\TY(\IZ_2,\chi,+1)$, where $\chi(g,h)=(-1)^{gh}$ and it is equivalent to the Ising fusion category.

A Tambara-Yamagami fusion category is group-theoretical if it admits a Lagrangian subgroup $L \subset A$ \cite{Gelaki:2009blp,Sun2023TYanomaly}. Moreover, if a Tambara-Yamagami category admits a fiber functor, namely if the associated self-duality symmetry is anomaly free, then the category is necessarily group-theoretical \cite{natale2003grouphopf}. Recall that the necessary condition for a fusion category to be anomaly-free is that the quantum dimensions of all the line operators should be integers. The anomaly of Tambara-Yamagami category was studied in \cite{Zhang:2023wlu,Antinucci:2023ezl}. Then $|A|$ the order of $A$ should be a perfect square is a necessary condition. We list the group-theoretical construction of the Tambara-Yamagami category as follows,
\begin{equation}
	\begin{array}{c|c}
		\hline\hline
		\text{TY fusion category} & \text{Group-theoretical construction} \\ \hline
		\TY(\IZ_2\times\IZ_2,\chi_\od,+1)& \makecell{\CC(D_8,1,\IZ_2^s,1), \,\CC(D_8,1,D_8,1)\cong \Rep(D_8), \\ \CC(\IZ_2^3,\omega_{III},\IZ_2^2,1) \text{\cite{goff_gauge_2007,yuji2017gauging}}}  \\ \hline
		\TY(\IZ_2\times\IZ_2,\chi_\dd,+1)& \Rep(H_8) \cong \CC(D_8,\gamma,\IZ_2^s,1) \text{\cite{natale2017equivalence,etingof2021tensor,wang2024gauging,Lu:2025gpt}} \\\hline
		\TY(\IZ_2\times\IZ_2,\chi_\od,-1)& \makecell{\CC(Q_8,1,Q_8,1) \cong \Rep(Q_8), \\\cong\CC(\IZ_2^3,\omega_{III+II+II},\IZ_2^2,1) \text{\cite{goff_gauge_2007,Li:2024fhy}}} \\ \hline
		\TY(\IZ_N\times \IZ_N,\chi_\od,+1)   & \makecell{\CC((\IZ_N\times \IZ_N)\rtimes \IZ_2^s,1,\IZ_N,1) \\\CC((\IZ_N\times \IZ_N)\rtimes \IZ_2^s,1,\IZ_N\times \IZ_N,\psi)} \text{\cite{Lu2025nongrp}} \\ \hline\hline
	\end{array}
\end{equation}
where $(\IZ_N\times \IZ_N)\rtimes \IZ_2^s = \grp{a,b,s|a^N=b^N=s^2=1,ab=ba,a s = s b}$ and $\psi \in H^2(\IZ_N\times \IZ_N,U(1))$. Generally $\Rep(G) \cong \CC(G,1,G,1)$, we refer the interested readers to \cite{goff_gauge_2007} for more equivalent descriptions of order $8$ symmetries through their twisted quantum doubles. $\Rep(Q_8)$ is also equivalent via gauging to $(D_8, \omega)$ with anomaly $\omega = b$ or $b + c$, where $b$ and $c$ denote the generators of the two nontrivial $\IZ_2$ factors in $H^3(D_8, U(1))$, with $b$ corresponding to the mixed-type \cite{goff_gauge_2007,uribe2017classification}. We note that $ \TY(\IZ_N\times \IZ_N,\chi_\od,1) \cong \CC((\IZ_N\times \IZ_N)\rtimes \IZ_2^s,1,\IZ_N\times \IZ_N,\psi)$ is notable, as it corresponds to a twisted gauging of a normal subgroup, producing a non-invertible symmetry described by the representation category of a Hopf algebra. We will also discuss the group-theoretical construction of a family of Hopf algebra that was constructed by Masuoka \cite{masuoka1997calculations} and relates to the Tambara-Yamagami fusion category in later subsection.

\subsubsection{Some $\IZ_2$-extension of Tambara-Yamagami fusion category}
In this subsection, we list interesting examples involving non-abelian groups, and its dual fusion category which describes self-duality under gauging a non-invertible symmetry. 
The TY category of $\IZ_2\times \IZ_2$ can be isomorphic to $\Rep(H_8)$ or $\Rep(D_8)$ depending on the data $\chi,\epsilon$ as listed in the previous subsection. One can further consider gauging the above non-invertible symmetries, some gauging may lead to self-duality, which is described by $\IZ_2$-extension of $\Rep(H_8)$ or $\Rep(D_8)$. In particular, the self-duality fusion category $\CC$ under gauging the algebra object $\CA = 1+ab+\CN$ in $\Rep(H_8)$ or $\Rep(D_8)$ contains two sectors $\CC_0,\CC_1$,
\begin{equation}
    \CC_0 = \Rep(D_8)\text{ or } \Rep(H_8), \quad \CC_1 = \{\CD_1,\CD_2\}
\end{equation}
with the fusion rule,
\begin{equation}
    \CD_i \otimes\CD_i = 1\oplus ab\oplus\CN,\quad \CD_i\otimes \CN = \CD_1\oplus\CD_2,\quad g\otimes D_i  = \CD_i
\end{equation}
where $i=1,2$ and another version is $\CD_i \otimes \CD_{\bar{i}} = 1\oplus ab \oplus \CN$, where $\bar{1}=2,\bar{2}=1$. Such fusion categories are multiplicity free and classified in \cite{vercleyen2024low}. Some of them have the group-theoretical construction. To be specific, according to \cite{Lu:2025gpt} and following their notations,
\begin{equation}
\renewcommand{\arraystretch}{1.6}
\begin{array}{c|c}
\hline\hline
\textbf{$\IZ_2$-extension of }\mathrm{Rep}\,D_8
& \textbf{$\IZ_2$-extension of }\mathrm{Rep}\,H_8
\\ \hline\hline
\mathcal{E}^{(1,+)}_{\mathbb{Z}_2,I}\,\mathrm{Rep}\,D_8
= \mathcal{C}(D_{16}, \omega_{0,0}; \widetilde{\mathbb{Z}}_2^{\,s}, 1)
&
\mathcal{E}^{(1,+)}_{\mathbb{Z}_2,I}\,\mathrm{Rep}\,H_8
= \mathcal{C}(D_{16}, \omega_{2,0}; \widetilde{\mathbb{Z}}_2^{\,s}, 1)
\\
\mathcal{E}^{(2,+)}_{\mathbb{Z}_2,I}\,\mathrm{Rep}\,D_8
= \mathcal{C}(D_{16}, \omega_{4,0}; \widetilde{\mathbb{Z}}_2^{\,s}, 1)
&
\mathcal{E}^{(2,+)}_{\mathbb{Z}_2,I}\,\mathrm{Rep}\,H_8
= \mathcal{C}(D_{16}, \omega_{6,0}; \widetilde{\mathbb{Z}}_2^{\,s}, 1)
\\
\mathcal{E}^{(1,-)}_{\mathbb{Z}_2,I}\,\mathrm{Rep}\,D_8
= \mathcal{C}(D_{16}, \omega_{0,1}; \widetilde{\mathbb{Z}}_2^{\,s}, 1)
&
\mathcal{E}^{(1,-)}_{\mathbb{Z}_2,I}\,\mathrm{Rep}\,H_8
= \mathcal{C}(D_{16}, \omega_{2,1}; \widetilde{\mathbb{Z}}_2^{\,s}, 1)
\\
\mathcal{E}^{(2,-)}_{\mathbb{Z}_2,I}\,\mathrm{Rep}\,D_8
= \mathcal{C}(D_{16}, \omega_{4,1}; \widetilde{\mathbb{Z}}_2^{\,s}, 1)
&
\mathcal{E}^{(2,-)}_{\mathbb{Z}_2,I}\,\mathrm{Rep}\,H_8
= \mathcal{C}(D_{16}, \omega_{6,1}; \widetilde{\mathbb{Z}}_2^{\,s}, 1)
\\ \hline\hline
\mathcal{E}^{1}_{\mathbb{Z}_2,II}\,\mathrm{Rep}\,D_8
= \mathcal{C}(SD_{16}, \omega_{0}; \widetilde{\mathbb{Z}}_2^{\,s}, 1)
&
\mathcal{E}^{1}_{\mathbb{Z}_2,II}\,\mathrm{Rep}\,H_8
= \mathcal{C}(SD_{16}, \omega_{2}; \widetilde{\mathbb{Z}}_2^{\,s}, 1)
\\
\mathcal{E}^{2}_{\mathbb{Z}_2,II}\,\mathrm{Rep}\,D_8
= \mathcal{C}(SD_{16}, \omega_{4}; \widetilde{\mathbb{Z}}_2^{\,s}, 1)
&
\mathcal{E}^{2}_{\mathbb{Z}_2,II}\,\mathrm{Rep}\,H_8
= \mathcal{C}(SD_{16}, \omega_{6}; \widetilde{\mathbb{Z}}_2^{\,s}, 1)\\
\hline\hline
\end{array}
\end{equation}
where $D_{16},\, SD_{16}$ are parameterized as,
\begin{equation}
D_{16}=\langle r, s \mid r^{8} = s^{2} = 1,\;
s r s = r^{-1} \rangle ,\quad SD_{16}=\langle r, s \mid r^{8} = s^{2} = 1,\;
s r s = r^{3} \rangle , 
\end{equation}
and the anomaly of $D_{16}$ is given by $\omega_{m,n}\in H^3(D_{16},U(1))$,
\begin{equation}
\begin{aligned}
\omega_{m,n}(g_1,g_2,g_3)
=\exp\!\Big[
\frac{2\pi m\ii}{8^2}(-1)^{j_1}i_1\big(i_2+(-1)^{j_2}i_3-[i_2+(-1)^{j_2}i_3]_8\big)
\\+\frac{2\pi n \ii}{4}i_1\big(j_2+j_3-[j_2+j_3]_2\big)
\Big],
\quad m\in\mathbb Z_8,\ n\in\mathbb Z_2 ,
\end{aligned}
\end{equation}
where the group elements $g\in D_{16}$ (and also $SD_{16}$) is parameterized as $g=r^is^j$ and $[\cdots]_n$ denotes mod $n$. The anomaly of $SD_{16}$ is,
\begin{equation}
\begin{aligned}
\omega_{2n}(g_1,g_2,g_3)
=\exp\Bigg\{
&\frac{2\pi n \ii}{8^{2}}\, i_1
\Big([3^{j_1} i_2]_8 + [3^{j_1+j_2} i_3]_8- [3^{j_1} i_2 + 3^{j_1+j_2} i_3]_8\Big)\\
&\quad + \frac{2\pi n \ii}{8^{2}}\,[3 i_1]_8
\Big([3^{j_1+1} i_2]_8 + [3^{j_1+j_2+1} i_3]_8- [3^{j_1+1} i_2 + 3^{j_1+j_2+1} i_3]_8\Big)\Bigg\}.
\end{aligned}
\end{equation}
Moreover, some of the above fusion categories are anomaly-free. In particular, anomaly-free $\mathcal{E}^{(2,+)}_{\mathbb{Z}_2,I}\,\mathrm{Rep}\,D_8
= \mathcal{C}(D_{16}, \omega_{4,0}; \widetilde{\mathbb{Z}}_2^{\,s}, 1),\,\mathcal{E}^{(1,-)}_{\mathbb{Z}_2,I}\,\mathrm{Rep}\,D_8
= \mathcal{C}(D_{16}, \omega_{0,1}; \widetilde{\mathbb{Z}}_2^{\,s}, 1)$ have non-trivial 3-cocycles on $D_{16}$. Such fusion categories can be dual to DQCP with $D_{16}$ symmetry with corresponding anomaly.

\subsubsection{General construction from twisted bicrossed product}
According to \cite{natale2003grouphopf}, a broad class of group-theoretical Hopf algebras is obtained from \textit{twisted bicrossed products}. We refer to \cite{natale2003grouphopf} and Appendix~\ref{app:twbiprod} for the construction. In this framework,
\begin{equation}
    \Rep(H)=\CC(G,\omega(\tau,\sigma),K,1),
\end{equation}
where $G=KL$ is an exact factorization, i.e. every $g\in G$ decomposes uniquely as $g=k\ell$ with well-defined projection maps $G\xrightarrow{\pi}K$ and $G\xrightarrow{p}L$. $K \cap L = \{e\}$ and falls in our case (a) in Theorem 1. The $3$-cocycle $\omega\in H^3(G,U(1))$ is determined by compatible $2$-cocycles $\tau$ and $\sigma$ which determines the Hopf algebra extension (reviewed in Appendix~\ref{app:twbiprod}).

Physically, the exact factorization $G=KL$ implies that $(G,\omega(\tau,\sigma))$ describes a $1+1$d system with a mixed anomaly trivialized on both $K$ and $L$, allowing a DQCP between phases with unbroken $K$ and $L$. Gauging $K$ produces the non-invertible symmetry.

Such (quasi-)Hopf algebra extension have recently been used to understand anomaly resolution of non-invertible symmetries, namely how anomalous categorical symmetries can be embedded into anomaly-free ones \cite{Perez-Lona:2025ncg,Lu2025nongrp}.

The interesting examples are,
\begin{enumerate}
    \item A family of Hopf algebras of order $2N^2$. \begin{equation}
    \Rep(H_{2N^2}(\xi)) = \CC((\IZ_N\times \IZ_N)\rtimes\IZ_2,\omega_\xi,\IZ_2,1)
\end{equation}
where \cite{shimizu2012some}
\begin{equation}
\omega_\xi\bigl((i_1,j_1,a_1),(i_2,j_2,a_2),(i_3,j_3,a_3)\bigr)
=
\begin{cases}
1, & \text{if } a_3 = 0, \\[4pt]
\xi^{\,j_1 i_2}, & \text{if } a_3 \neq 0 \text{ and } a_2 = 0, \\[4pt]
\xi^{\,i_1 j_1 + i_1 i_2}, & \text{if } a_3 \neq 0 \text{ and } a_2 \neq 0,
\end{cases}
\end{equation}
where $(i_s,j_s) \in \IZ_N\times \IZ_N, a_s\in \IZ_2$, $s=1,2,3$. For example, when $N=2$, $H_{2\times 2^2}(+1) \cong \IC D_8$, while $\Rep(H_{2\times 2^2}(-1))\cong \Rep(H_8) \cong \CC((\IZ_2\times \IZ_2)\rtimes\IZ^s_2,\omega_{-1},\IZ^s_2,1)$, which reduce to the previous case. 
\item A family of Hopf algebras of dimension $N^3$ with odd $N$. \begin{equation}
    \Rep(H_{N^3}(\xi,\zeta)) = \CC((\IZ_N\times \IZ_N)\rtimes\IZ^a_N,\omega_{(\xi,\zeta)},\IZ^a_N,1)
\end{equation}
where \cite{shimizu2012some},
\begin{equation}
    \omega_{(\xi,\zeta)}\bigl((i_1,j_1,a_1),(i_2,j_2,a_2),(i_3,j_3,a_3)\bigr)
=
\left(\lambda_{j_1+j_2}\,\lambda_{j_1}^{-1}\,\lambda_{j_2}^{-1}\right)^{a_3}
\,\zeta^{\,a_3 j_1 i_2 + \binom{a_3}{2} j_1 j_2}
\end{equation}
where $\lambda$ is an $N$-th root of $\xi^{-1}$ and $\binom{m}{n}$ denotes the binomial coefficient. Therefore, one could construct the DQCP between SSB phases of unbroken $\IZ_N$ and $\IZ_N\times \IZ_N$ symmetry.

The dual of $H_{N^3}(\xi,\zeta)$ is constructed in \cite{castano2018cocycle}, which also admits group-theoretical construction. In particular, the representation category of $H_{N^3}(\xi,\zeta)^*$ is equivalent to $\CC(\mathrm{H}_3(\IZ_N), \widetilde\omega_{\xi,\zeta}, \IZ_N \times \IZ_N,1)$, where $\mathrm{H}_3(\IZ_N)$ is the Heisenberg group over finite field $\mathbb{Z}_N$, and the explicit 3-cocycle is given in \cite{castano2018cocycle}. 
\end{enumerate}

We note that such twisted bicrossed product is also called Knit product or matched pair when the twists are trivial. The semi-direct product is a special case of it. This is used to construct non-abelian group from abelian groups. When applying to the 2+1d topological order, such construction can result in non-abelian surface code from the abelian ones \cite{Tantivasadakarn:2022hgp,Manjunath:2026swm}.  

\section*{Acknowledgements}
We thank Arkya Chatterjee, Zhengdi Sun, Vibhu Ravindran, Luisa Eck, Zhenghan Wang and Bowen Yang for helpful discussions. 
S.L. acknowledges support from the Chinese Academy of Sciences (CAS) under Grant No. YSBR-150
and a startup fund from the Institute of Physics, CAS. D.C.L. is supported by the Simons Collaboration on Ultra-Quantum Matter, which is a grant from the Simons Foundation (grant No. 651440).
X.C. is supported by the Simons collaboration on `Ultra-Quantum Matter'' (grant number 651438), the Simons Investigator Award (award ID 828078), the Institute for Quantum Information and Matter, and the Walter Burke Institute for Theoretical Physics at Caltech.
\appendix

\section{Twisted Bicrossed Product Construction of Hopf Algebra}\label{app:twbiprod}
This appendix follows \cite{natale2003grouphopf}. Let's consider two finite groups $K,L$, with right action of $K$ on $L$, $\triangleleft:L\times K \rightarrow L$ and left action of $L$ on $K$, $\triangleright: L\times K \rightarrow K$,
\begin{align}
    \ell_i \triangleright (k_i k_j)
= (\ell_i \triangleright k_i)\bigl((\ell_i \triangleleft k_i) \triangleright k_j\bigr),\\
(\ell_i \ell_j) \triangleleft k_i
= \bigl(\ell_i \triangleleft (\ell_j \triangleright k_i)\bigr)
  (\ell_j \triangleleft k_i).
\end{align}
for all $k_i \in K,\ell_i \in L$. These groups and left/right actions give the a matched pair of groups. Moreover, $G = KL$ is an exact factorization of $G$, meaning that there are well-defined maps $G\xrightarrow{\pi} K,G\xrightarrow{p}L$, where,
\begin{equation}
    p(k\ell) = \ell,\, \pi(k\ell) = k,\quad  k\in K,\ell\in L
\end{equation}
and $\ell k = (\ell \triangleright k)(\ell \triangleleft k)$. Physically, we consider $K,L$ as two incompatible subgroups of $G$. 
Consider the left action of \(K\) on \(\kk^{L}\), where $\kk$ is a field, defined by
\[
k \cdot \phi(\ell) = \phi(\ell \triangleleft k),
\qquad \phi \in \kk^{L}.
\]
Let
\[
\sigma : K \times K \longrightarrow (\kk^L)^{\times}
\]
be a normalized \(2\)-cocycle. Writing
\[
\sigma = \sum_{\ell \in L} \sigma_{\ell}\,\delta_{\ell},
\]
we require that for all \(\ell \in L\) and \(k_i, k_j, k_m \in K\),
\begin{align}
\sigma_{\ell \triangleleft k_i}(k_j, k_m)\,
\sigma_{\ell}(k_i, k_j k_m)
&=
\sigma_{\ell}(k_i k_j, k_m)\,
\sigma_{\ell}(k_i, k_j), \\
\sigma_{\ell}(k_i, 1) &= 1 = \sigma_{\ell}(1, k_i). 
\end{align}

Dually, consider the right action of \(L\) on \(\kk^{K}\), given by
\[
(\psi\cdot \ell)(k) = \psi(\ell \triangleright k),
\qquad \psi \in \kk^{K}.
\]
Let
\[
\tau = \sum_{k \in K} \tau_k\,\delta_k
\]
be a normalized \(2\)-cocycle
\[
\tau : L \times L \longrightarrow (\kk^K)^{\times}.
\]
Equivalently, for all \(k \in K\) and \(\ell_i, \ell_j, \ell_m \in L\),
\begin{align}
\tau_{k}(\ell_i \ell_j, \ell_m)\,
\tau_{k \triangleright \ell_m}(\ell_i, \ell_j)
&=
\tau_{k}(\ell_j, \ell_m)\,
\tau_{k}(\ell_i, \ell_j \ell_m), \\
\tau_{k}(\ell_i, 1) &= 1 = \tau_{k}(1, \ell_i).
\end{align}

We further assume that \(\sigma\) and \(\tau\) satisfy the following compatibility conditions: for all \(k_i, k_j \in K\) and
\(\ell_i, \ell_j \in L\),
\begin{align}
\sigma_{\ell_i \ell_j}(k_i, k_j)\,
\tau_{k_i k_j}(\ell_i, \ell_j)
&=
\tau_{k_i}(\ell_i, \ell_j)\,
\tau_{k_j}\bigl(
\ell_i \triangleleft (\ell_j \triangleright k_i),
\ell_j \triangleleft k_i
\bigr) \notag \\
&\times \sigma_{\ell_i}\bigl(
\ell_j \triangleright k_i,
(\ell_j \triangleleft k_i) \triangleright k_j
\bigr)\,
\sigma_{\ell_j}(k_i, k_j), \\
\sigma_{1}(k_i, k_j) &= 1,
\qquad
\tau_{1}(\ell_i, \ell_j) = 1. 
\end{align}
Therefore the vector space
\[
A = \kk^{L} \otimes \kk K
\]
acquires the structure of a (semisimple) Hopf algebra, with algebra structure given by a crossed product and coalgebra structure given by a crossed coproduct. We shall denote this Hopf algebra by
\[
A = \kk^{L} \,{}^{\tau}\!\#_{\sigma}\, \kk K,
\]
and write \(\delta_{\ell} k\) for the elementary tensor \(\delta_{\ell} \otimes k \in A\).
The multiplication and comultiplication in \(A\) are then determined by
\begin{align}
(\delta_{\ell_i }k_i)(\delta_{\ell_j }k_j)
&=
\delta_{\ell_i \triangleleft k_i,\, \ell_j}\,
\sigma_{\ell_i}(k_i, k_j)\,
\delta_{\ell_i}k_i k_j, 
\\[6pt]
\Delta(\delta_{\ell_i}k_i)
&=
\sum_{\ell_m \ell_n = \ell_i}
\tau_{k_i}(\ell_m, \ell_n)\,
\delta_{\ell_m}(\ell_n \triangleright k_i)
\otimes
\delta_{\ell_n}k_i. 
\end{align}
for all \(\ell_i \in L\) and \(k_i \in K\).

There is an exact sequence of Hopf algebras
\[
1 \longrightarrow \kk^{L} \longrightarrow A \longrightarrow \kk K \longrightarrow 1,
\]
and conversely, every Hopf algebra \(A\) fitting into an exact sequence of this form
is isomorphic to \(\kk^{L} \,{}^{\tau}\!\#_{\sigma}\, \kk K\) for suitable actions
\(\triangleright, \triangleleft\) and cocycles \(\sigma\) and \(\tau\).

Finally, the Hopf algebra extension is classified by $\operatorname{Opext}(\kk^G,\kk F)$, whose elements $[\tau,\sigma]$ is mapped to the 3-cocycle of $G$,
\begin{equation}
    \omega(\tau,\sigma)(g_i, g_j, g_k)
=
\tau_{\pi(g_k)}\!\bigl(
p(g_i) \triangleleft \pi(g_j),\,
p(g_j)
\bigr)\,
\sigma_{p(g_i)}\!\bigl(
\pi(g_j),\,
p(g_j) \triangleright \pi(g_k)
\bigr),
\quad
g_i, g_j, g_k \in G.
\end{equation}
The 3-cocycle $\omega$ is trivial when restricting to the subgroups $K,L$. Recall that the Hopf algebra from the twisted bicrossed product is denoted as $A= \kk^{L} \,{}^{\tau}\!\#_{\sigma}\, \kk K$, and $G=KL$, then by \cite{natale2003grouphopf},
\begin{equation}
    \Rep(A) = \Rep(\kk^{L} \,{}^{\tau}\!\#_{\sigma}\, \kk K)  = \CC(G,\omega(\tau,\sigma),K,1)
\end{equation}
For instance, $G=L\rtimes K$ is the case where $\tau,\sigma$ are trivial, and $\triangleright: L\times K\rightarrow K$ is trivial, but $\triangleleft: L\times K\rightarrow L$ is non-trivial. An explicit example is $K=\IZ_2, L=\IZ_n$ and $\Rep(D_{2n}) = \CC(\IZ_{n}\rtimes \IZ_2,1,\IZ_2,1)$. There are more non-trivial examples of twisted bicrossed product in the main text.

We provide detailed construction of the examples in the main text as follows. 

\paragraph{A family of Hopf algebras of dimension $2N^2$}
According to \cite{masuoka1997calculations}, a family of Hopf algebras of dimension \(2N^{2}\) with $N>1$ can be constructed by choosing \(K=\mathbb{Z}_{2}\) and 
\(L=\mathbb{Z}_{N}\times \mathbb{Z}_{N}\). The nontrivial structure arises from an action of \(K\) on \(L\) given by
\[
(i,j)\triangleleft 0 = (i,j), \qquad (i,j)\triangleleft 1 = (j,i),
\]
where \((i,j)\in \mathbb{Z}_{N}\times \mathbb{Z}_{N}\), and \(0,1\in \mathbb{Z}_{2}\) denote the identity and 
nontrivial elements of \(K\), respectively. \cite{masuoka1997calculations} shows the extension is classified by $\IZ_N$. We denote $\xi$ as the $N$-th root of unit, and denote the Hopf algebra as $H_{2N^2}(\xi)$. The order of $\xi$ distinguish different Hopf algebras. Moreover,
\begin{equation}
    \Rep(H_{2N^2}(\xi)) = \CC((\IZ_N\times \IZ_N)\rtimes\IZ_2,\omega_\xi,\IZ_2,1)
\end{equation}
where \cite{shimizu2012some}
\begin{equation}
\omega_\xi\bigl((i_1,j_1,a_1),(i_2,j_2,a_2),(i_3,j_3,a_3)\bigr)
=
\begin{cases}
1, & \text{if } a_3 = 0, \\[4pt]
\xi^{\,j_1 i_2}, & \text{if } a_3 \neq 0 \text{ and } a_2 = 0, \\[4pt]
\xi^{\,i_1 j_1 + i_1 i_2}, & \text{if } a_3 \neq 0 \text{ and } a_2 \neq 0,
\end{cases}
\end{equation}
where $(i_s,j_s) \in \IZ_N\times \IZ_N, a_s\in \IZ_2$, $s=1,2,3$. For example, when $N=2$, $H_{2\times 2^2}(+1) \cong \IC D_8$, while $\Rep(H_{2\times 2^2}(-1))\cong \Rep(H_8) \cong \CC((\IZ_2\times \IZ_2)\rtimes\IZ^s_2,\omega_{-1},\IZ^s_2,1)$, which reduce to the previous case.

Generally, the DQCP characterized by the 3-cocycle $\omega_\xi$ describes a transition between symmetry-broken phases with unbroken $\IZ_2$ and unbroken $\IZ_N \times \IZ_N$, which is dual to the order-to-disorder transition of $\Rep(H_{2N^2}(\xi))$.

\paragraph{A family of Hopf algebras of dimension $N^3$} Consider odd integer $N$, \cite{masuoka1997calculations} constructed a family of Hopf algebras with dimension $N^3$. The right action of $K=\IZ_N$ on $L=\IZ_N\times \IZ_N$ is given by,
\begin{equation}
(i,j) \triangleleft a = (i + a j,\, j),
\qquad a,i,j \in \mathbb{Z}_N .
\end{equation}
Such extension is classified by $\IZ_N\times \IZ_N$, labeled by $(\xi,\zeta)$ which are $N$-th root of unit, then,
\begin{equation}
    \Rep(H_{N^3}(\xi,\zeta)) = \CC((\IZ_N\times \IZ_N)\rtimes\IZ^a_N,\omega_{(\xi,\zeta)},\IZ^a_N,1)
\end{equation}
where \cite{shimizu2012some},
\begin{equation}
    \omega_{(\xi,\zeta)}\bigl((i_1,j_1,a_1),(i_2,j_2,a_2),(i_3,j_3,a_3)\bigr)
=
\left(\lambda_{j_1+j_2}\,\lambda_{j_1}^{-1}\,\lambda_{j_2}^{-1}\right)^{a_3}
\,\zeta^{\,a_3 j_1 i_2 + \binom{a_3}{2} j_1 j_2}
\end{equation}
where $\lambda$ is an $N$-th root of $\xi^{-1}$ and $\binom{m}{n}$ denotes the binomial coefficient. Therefore, one could construct the DQCP between SSB phases of unbroken $\IZ_N$ and $\IZ_N\times \IZ_N$ symmetry.

The dual of $H_{N^3}(\xi,\zeta)$ is constructed in \cite{castano2018cocycle}, which also admits group-theoretical construction. In particular, the representation category of the Hopf algebra $H_{N^3}(\xi,\zeta)^*$ is equivalent to $\CC(\mathrm{H}_3(\IZ_N), \widetilde \omega_{\xi,\zeta}, \IZ_N \times \IZ_N,1)$, where $\mathrm{H}_3(\IZ_N)$ is the Heisenberg group over finite field $\mathbb{Z}_N$, and the explicit 3-cocycle is given in \cite{castano2018cocycle}.

\paragraph{List of Hopf algebras of small orders}
In the main text, we comment on when a deconfined quantum critical point with an ordinary invertible symmetry admits a dual description as an order-to-disorder transition of a non-invertible symmetry. Such a duality arises when the non-invertible symmetry is described by the representation category of a group-theoretical Hopf algebra, which is typically constructed via a twisted bicrossed product. In this case, the dual symmetry is an ordinary group endowed with a nontrivial $3$-cocycle ’t~Hooft anomaly. It is known that all semisimple Hopf algebras of dimension $< 36$ are group-theoretical. Among them, the Hopf algebras that are neither group algebras nor dual group algebras are:
\begin{itemize}
    \item \textbf{Order $8$.}  
    Up to isomorphism, the only semisimple Hopf algebra of dimension $8$ that is neither
    commutative nor cocommutative is the Kac--Paljutkin Hopf algebra $H_8$.
    Its representation category $\Rep(H_8)$ is group-theoretical, and its group-theoretical
    construction has been discussed in the previous sections \cite{wakui2018braided}.

    \item \textbf{Order $12$.}  
    The semisimple Hopf algebras of dimension $12$ that are neither commutative nor
    cocommutative are isomorphic to the Hopf algebras $A_\pm$ \cite{fukuda1997semisimple}.
    The group of grouplike elements satisfies $G(A_+) \cong \IZ_2 \times \IZ_2$, while
    $G(A_-) \cong \IZ_4$. The representation categories $\Rep(A_\pm)$ each contain four
    simple objects of quantum dimension $1$ and two simple objects of quantum dimension $2$.

    \item \textbf{Order $16$.}  
    Among semisimple Hopf algebras of dimension $16$ that are neither commutative nor
    cocommutative, their representation categories contain simple objects of one of the
    following two types \cite{kashina2000hopf16}:
    \begin{enumerate}
        \item eight simple objects of quantum dimension $1$ and two of quantum dimension $2$;
        \item four simple objects of quantum dimension $1$ and three of quantum dimension $2$.
    \end{enumerate}

    \item \textbf{Orders $18$ and $20$.}  
    Semisimple Hopf algebras of dimensions $18$ and $20$ are of the form $pq^2$ with distinct
    primes $p$ and $q$ and are group-theoretical \cite{natale1999semisimple,natale2004semisimple}.

    \item \textbf{Order $24$.}  
    Every semisimple Hopf algebra of dimension $24$ is group-theoretical and fits into an
    abelian exact sequence, possibly after a cocycle twist of the multiplication or the
    comultiplication \cite{natale2010hopf}.

    \item \textbf{Order $27$.}  
    Semisimple Hopf algebras of dimension $27 = p^3$ are group-theoretical and arise from
    cocycle deformations of group algebras
    \cite{masuoka1997calculations,castano2018cocycle}.
\end{itemize}
At dimension $36$, there exists a non-group-theoretical semisimple Hopf algebra constructed in \cite{nikshych2008non,Gelaki:2009blp,natale2010hopf,cuadra2017orders}. This example has recently been generalized to dimension $p^{2}q^{2}$, where $p$ and $q$ are prime numbers \cite{galindo2024integral}. From a physical perspective, such non-group-theoretical Hopf algebras give rise to anomaly-free non-invertible symmetries that are not dual to any ordinary (invertible) group symmetry. A corresponding lattice realization of symmetric gapped phases with such a symmetry has recently been constructed in \cite{Lu2025nongrp}.

Beyond dimension $36$, there exist infinite families of semisimple Hopf algebras that are group-theoretical, some of which have been discussed in the previous sections, e.g. $H_{2N^2}(\xi),H_{N^3}(\xi,\zeta)$. On the other hand, for certain dimensions there are no semisimple Hopf algebras that are simultaneously noncommutative and non-cocommutative. In particular:
\begin{itemize}
    \item \textbf{Prime order $p$.}  
    Every semisimple Hopf algebra of dimension $p$ is isomorphic to the group algebra
    $\IC[\IZ_p]$ \cite{zhu1994hopf}.

    \item \textbf{Order $p^2$.}  
    Every semisimple Hopf algebra of dimension $p^2$ is isomorphic to either
    $\IC[\IZ_{p^2}]$ or $\IC[\IZ_p \times \IZ_p]$ \cite{masuoka1996hopfpn}.

    \item \textbf{Order $pq$.}  
    Every semisimple Hopf algebra of dimension $pq$, where $p$ and $q$ are distinct primes,
    is isomorphic to either the group algebra $\IC[G]$ or the dual group algebra $\IC^G$
    for a group $G$ of order $pq$ \cite{etingof1998semisimple}.
    For example, when $pq=6$, the possible Hopf algebras are
    $\IC[S_3]$, $\IC[\IZ_6]$, and their dual Hopf algebras
    $\IC^{S_3}$ and $\IC^{\IZ_6} \cong \IC[\IZ_6]$.
\end{itemize}

\section{$\Rep(H_8)$ categorical data}\label{app:grpthryH8}
$\Rep (H_8)$ is equivalent to the Tambara-Yamagami fusion category of $\IZ_2\times \IZ_2$, which describes the self-duality under gauging the $\IZ_2\times \IZ_2$ symmetry. The simple lines are the invertible $\IZ_2\times\IZ_2$ lines $1,\eta^o,\eta^e,\eta^o\eta^e$ and non-invertible $\CN$, with the following fusion rules,
\begin{equation}
    g\times h = gh,\  g\times \CN=\CN\times g = \CN,\ \CN\times \CN = 1+\eta^o+\eta^e+\eta^o\eta^e
\end{equation}
The $F$-symbols which are not 1 are,
\begin{equation}
    F^{g \CN h}_\CN = F^{\CN g \CN}_h = \chi(g,h),\ F^{\CN\CN\CN}_{\CN,(g,h)}=\frac{1}{2}\chi(g,h)
\end{equation}
where $\chi(g,h)=(-1)^{g_1h_1+g_2 h_2}$ is the diagonal bicharacter. The diagonal bicharacter corresponds to identify the dual symmetry with the original symmetry without any automorphism actions \cite{Thorngren2019Fusion}. In particular it differs from $\Rep(D_8)$, where $\Rep(D_8)$ has the off-diagonal bicharacter which corresponds to identify the dual symmetry with the original symmetry after swapping, and $\chi_{\text{off-diagonal}}(g,h)=(-1)^{g_1h_2+g_2 h_1}$.

$\Rep(H_8)$ is a group theoretical fusion category, which is Morita equivalent to $(D_8,\gamma)$, where $\gamma \in H^3(D_8,U(1))$. In particular, $\Rep(H_8)\cong \CC(D_8,\gamma,\IZ_2^s,1)$. This means one can start with a theory with $D_8$ symmetry together with the anomaly $\gamma$ and gauge the $\IZ_2^s$ anomaly-free subgroup to get the $\Rep(H_8)$ fusion category. Here, we parameterize the $D_8$ group via,
\begin{equation}
    D_8=\langle r,s| r^4=s^2=1,srs=r^{-1} \rangle
\end{equation}
The anomaly $\gamma$ is given by,
\begin{equation}
    \gamma(r^{i_1}s^{j_1},r^{i_2}s^{j_2},r^{i_3}s^{j_3}) = \exp\left(\frac{4\pi i}{4^2}(-1)^{j_1}i_1(i_2+(-1)^{j_2}i_3-[i_2+(-1)^{j_2}i_3]_4)\right)
\end{equation}
where $[\cdots]_4 =\cdots \mod{4}$. Recall that $H^3(D_8,U(1))=\IZ_4 \times \IZ_2^2$, $\gamma$ corresponds to the order 2 one in the $\IZ_4$. Another useful presentation of $D_8$ is,
\begin{equation}\label{eq:d8abc}
    D_8=\langle a,b,c| a^2=b^2=c^2=1,ab=ba,ac=cb \rangle
\end{equation}
and the isomorphism between the two presentation is,
\begin{equation}
    a=s,\ b=r^2s,\ c=r s
\end{equation}

\subsection{Gapped phases}\label{app:gappedphase}
The gapped phases of a fusion category symmetry are in one-to-one correspondence with its indecomposable module categories, and for group-theoretical fusion categories these are classified in \cite{ostrik2002module,natale2017equivalence}. For $\Rep(H_8) \cong \CC(D_8, \gamma, \IZ_2^s, 1)$, each indecomposable module category $\CM_{K,\psi_K}$ is labeled by an anomaly-free subgroup $K \subseteq D_8$ together with a 2-cocycle $\psi_K \in H^2(K, U(1))$. Its simple objects, which correspond to the ground states of the associated gapped phase, are pairs $([g], \pi_g)$ consisting of
\begin{enumerate}
    \item a double coset $[g] \in \IZ_2^s \backslash D_8 / K$ with representative $g$;
    \item an irreducible representation\footnote{In general $\pi_g$ can be projective, with the projective phase given in Lemma 3.3 of \cite{natale2017equivalence}.} $\pi_g$ of the stabilizer $\IZ_2^s \cap {}^g K \le \IZ_2^s$.
\end{enumerate}
Two module categories are equivalent, and therefore describe the same gapped phase, when they are related by conjugation in $D_8$ \cite{natale2017equivalence}. A module category with a single simple object corresponds to a $\Rep(H_8)$-symmetric phase with a unique ground state, i.e. a $\Rep(H_8)$ symmetry-protected trivial phase.

The $D_8,\gamma$ group contains the following anomaly-free subgroups, up to conjugation, ($H_1 \sim g H_2g^{-1},\exists g\in D_8$),
\begin{align}
    \langle 1 \rangle,\ \langle s \rangle,\ \langle sr \rangle,\ \langle r^2 \rangle,\ \langle s,r^2 \rangle,\ \langle sr,r^2 \rangle,  
\end{align}
The number of gapped phases of gauging related fusion categories are the same but the ground state degeneracies may be different. Note that for the last two cases, although the subgroups $K=\langle s,r^2\rangle, \langle sr,r^2\rangle$ are $\IZ_2\times \IZ_2$ and admit non-trivial 2-cocycle, due to the mixed anomaly $\gamma$, adding the non-trivial 2-cocycle doesn't lead to physically distinct phases \cite{wang2024gauging}. Due to the non-trivial 3-cocycle $\gamma$, this is also consistent with the fact that $\Rep(H_8)\cong \CC(D_8,\gamma,\IZ_2^s,1)$ has less symmetric gapped phases (6 gapped phases) compared with $\Rep(D_8) \cong \CC(D_8,1,\IZ_2^s,1)$ (11 gapped phases).

The 3-cocycle $\gamma$ gives the mixed anomaly between the subgroup $\langle s\rangle$ and $\langle sr,r^2\rangle$, (or $\langle sr\rangle$ and $\langle s,r^2\rangle$). Therefore, the DQCP between unbroken $\langle s\rangle$ and $\langle sr,r^2\rangle$ (or $\langle sr\rangle$ and $\langle s,r^2\rangle$ resp.) is described by this mixed anomaly. Under the gauging of $\IZ_2^s$, the DQCP is dual to the SSB transition between $\Rep(H_8)$ SPT and $\Rep(H_8)$ SSB phase (or $H_{13}$ to $H_{12}$ transition resp.).
\begin{equation}
    \begin{array}{c|c|c|c}
    \hline\hline
         (K,\psi_K) & \Rep(H_8) \text{ Module Categories}  & (D_8,\gamma) & \text{Ham} \\\hline
         (\langle1\rangle,1)   & 4, \{1+\eta^o\eta^e,\eta^o+\eta^e,\CN,\CN\}&  8,\ \langle1\rangle\text{ unbroken} & H_{15} \\\hline
         (\langle s\rangle,1)   & 5, \{\Rep(H_8)\}&  4,\ \langle s\rangle\text{ unbroken} & H_{11} \\\hline
         (\langle sr\rangle,1)   & 2,\{ 1+\eta^o\eta^e+\CN,\eta^o+\eta^e+\CN\}&  4,\ \langle sr\rangle\text{ unbroken} & H_{13}\\\hline
         (\langle r^2\rangle,1)   & 2, \{(1+\eta^o+\eta^e+\eta^o\eta^e)^*,2\CN\} &  4,\ \langle r^2\rangle\text{ unbroken}& H_{16} \\\hline
         (\langle s,r^2\rangle,1)   & 4,\{1+\eta^o,\eta^e+\eta^o\eta^e,\CN,\CN\} &  2,\ \langle s,r^2\rangle\text{ unbroken} & H_{12} \\\hline
         (\langle sr,r^2\rangle,1)   &1, \{1+\eta^o+\eta^e+\eta^o\eta^e+2\CN\}&  2,\ \langle sr,r^2\rangle\text{ unbroken} & H_{14} \\
         \hline\hline
    \end{array}
\end{equation}
The first 2 columns are from \cite{wang2024gauging}. We match the gapped phases with the Hamiltonians in the last column. These are read from the action of lines on the objects in the module categories, namely the symmetry action on the ground states of the symmetric gapped phase. For example, $\{ 1+\eta^o\eta^e+\CN,\eta^o+\eta^e+\CN\}$ are two ground states, say $\ket{v_1},\ket{v_2}$, and the symmetry actions are given by,
\begin{equation}
 \CN \ket{v_i} = \ket{v_1} + \ket{v_2},\quad \eta^o \eta^e \ket{v_{i}} = \ket{v_{i}},\quad \eta^o /\eta^e \ket{v_{i}} = \ket{v_{\bar{i}}}
\end{equation}
the action is matched with \eqref{eq:H13act}.

\subsection{Gauging the subgroup}\label{app:gaugesub}
The $\Rep(H_8)$ fusion category is dual to $(D_8,\gamma)$ via gauging the $\IZ_2^s$ subgroup, to be specific, 
\begin{equation}
    \Rep(H_8)
\overset{\xleftarrow{\text{gauge }\IZ_2^s\hphantom{\,1+\eta^o\eta^e}}}
{\xrightarrow[\text{gauge }1+\eta^o\eta^e\hphantom{\,\IZ_2^s}]{}}
(D_8,\gamma)
\end{equation}
When using the presentation \eqref{eq:d8abc}, gauging $\IZ_2^a$ gets back to $\Rep(H_8)$.

In general, one can start with the original symmetry fusion category $\CC$ and gauge the algebraic object $A$, the dual symmetry fusion category is given by the ${}_A\CC_A$-bimodule category. For $\CC=\VEC_G^\omega$, one can gauge the anomaly free subgroup $L$ with a possible discrete torsion $\psi\in H^2(L,U(1))$, which corresponds to $A=\sum_{\ell\in L}\ell$ with the multiplication $m: \ell_1\times \ell_2 \rightarrow \psi(\ell_1,\ell_2) \ell_1\ell_2$. Following \cite{ostrik2002module,natale2017equivalence,bartsch2024non,bartsch2024non2}, the objects in the bimodule category are labeled by $([g],\pi_g)$, where
\begin{enumerate}
    \item A double coset $[g]\in L\setminus G/L$ with representative $g$.
    \item An irreducible (projective) representation $\pi_g$ of $L_g$ with 2-cocycle, 
    \begin{equation}\label{eq:bimodule_phase}
        c_g(\ell_1,\ell_2)=\frac{\psi(\ell_1^g,\ell_2^g)}{\psi(\ell_1,\ell_2)}\frac{\omega(\ell_1,\ell_2,g)\omega(g,\ell_1^g,\ell_2^g)}{\omega(\ell_1,g,\ell_2^g)}
    \end{equation}
\end{enumerate}
where the double coset is defined by $LgL=\{\ell_1 g \ell_2|\ell_1,\ell_2\in L\}$, and $L_g\equiv L\cap gLg^{-1}$, $\ell^g=g\ell g^{-1}$. $([g],\pi_g)$ labels the symmetry line operators in the dual symmetry category. The quantum dimension of the simple line operator is given by,
\begin{equation}\label{eq:qdimg}
    \dim ([g],\pi_g) = |HgH| / |H| \dim(\pi_g).
\end{equation}

\paragraph{$\Rep(H_8)$ from gauging $\IZ_2^s$ in $\VEC_{D_8}^\gamma$}
In our case, $G=D_8$ with 3-cocycle $\gamma$, and we gauge the $\IZ_2^s$ subgroup to get $\Rep(H_8)$. The double cosets are,
\begin{equation}
    [1]\equiv \{1,s\},\quad [r^2]\equiv \{r^2s,r^2\},\quad [r]\equiv \{r,r^3,rs,r^3s\}.
\end{equation}
whose $L_g$ are $\IZ_2^s,\IZ_2^s,\IZ_1$. Since the $\gamma$ is the mixed anomaly between $\langle s \rangle$ and $\langle sr,r^2\rangle$, the \eqref{eq:bimodule_phase} is trivial, so the $L_g$s do not have projective representation, the simple objects in the bimodule are,
\begin{equation}
    1=([1],+),\quad \eta^o\eta^e= ([1],-),\quad \eta^o= ([r^2],+),\quad \eta^e= ([r^2],-),\quad \CN=([r],1)
\end{equation}
where $\pm$ is the trivial/sign representation of $\IZ_2^s$. It is easy to check $\eta^i$ are invertible and $\CN$ has quantum dimension 2 according to \eqref{eq:qdimg}. The multiplication of cosets is given by forming the ring $\IZ[L\setminus G /L]$ and $[g]\rightarrow x_g=\sum_{g_i\in[g]} g_i$, then the multiplication is, $x_{g_1}\cdot x_{g_2}$. In particular,
\begin{equation}
    x_r\cdot x_r = 4(x_1+x_{r^2})
\end{equation}
which means the product of double cosets $[r^2]$ with itself decomposes into the direct sum of double cosets $[1]$ and $[r^2]$. Next, we need to determine the irreps of these outcoming lines. As the outcoming lines are grouped by $[1]$ and $[r^2]$, we first look at $[1]$ with the representative $1$. The vector space is 2-dimensional, since $v_1\equiv (r,r^3)$ and $v_2\equiv (r^3,r)$ will yield $1$, but the $\IZ_2^s$ action permutes them, i.e. 
\begin{equation}
    v_1=(r,r^3)\xrightarrow{\text{diagonal $s$ action}} (s r,r^3 s) = (r^3 s,sr)\sim (r^3,r) = v_2.
\end{equation}
So the $\IZ_2^s$ action is represented by $\left(\begin{smallmatrix}
    0&1 \\ 1&0
\end{smallmatrix}\right)$ and can be diagonalized as in the space spanned by $v_1\pm v_2$ with eigenvalues $\pm1$. Therefore, the vector space decomposed as $([1],1)\oplus ([1],\text{sign})$, similar for $[r^2]$. Hence, the fusion rule of $\CN=([r],1)$ is given by,
\begin{equation}
    ([r],1)\times ([r],1) = ([1],1)\oplus ([1],\text{sign})\oplus ([r^2],1)\oplus ([r^2],\text{sign})
\end{equation}
which recovers the fusion rule of the non-invertible line $\CN$ in $\Rep(H_8)$.

Note that the mixed anomaly doesn't contribute to a non-trivial projective representation, the simple objects and the fusion rules of the bimodule category $\Rep(H_8)$ are the same as the $\Rep(D_8)$ which is resulting from gauging $\IZ_2^s$ in $D_8$ without anomaly. The mixed anomaly here only changes the $F$-symbols of the bimodule category. This is different from the case of gauging $\IZ_2^a \times \IZ_2^b$ in $\IZ_2^a\times \IZ_2^b \times \IZ_2^c$ with $a\cup b\cup c$ anomaly, where the anomaly leads to non-trivial projective representation, therefore changes the fusion rule from $\IZ^3_2$ to $\Rep(D_8)$.

\paragraph{$\Rep(H_8)$ from gauging $\IZ_2^a\times \IZ_2^b$ in $\VEC_{D_8}^\gamma$}
Using the presentation \eqref{eq:d8abc},
\begin{equation}
    D_8=\langle a,b,c| a^2=b^2=c^2=1,ab=ba,ac=cb \rangle
\end{equation}
The $\gamma$ is the mixed anomaly between the $\IZ_2=\langle abc\rangle$ subgroup and the $\IZ_2^a\times \IZ^b_2=\langle a,b\rangle$ subgroup. One can gauge the $\IZ_2^a\times \IZ^b_2$ to get the $\Rep(H_8)$ fusion category. In general, one can add a discrete torsion $ H^2(\IZ_2^a\times \IZ^b_2,U(1))$ before gauging, but with or without discrete torsion lead to the identical result due to the absorption of the discrete torsion by the mixed anomaly $\gamma$.

To be specific, the double cosets are 
\begin{equation}
    [1,a,b,ab] , \quad [abc,c,ac,bc]
\end{equation}
The mixed anomaly $\gamma$ will activate a non-trivial projective representation for the second double coset, while trivial for the first one. Therefore, the simple lines are,
\begin{equation}
    1=([1],++),\ \eta^o=([1],-+),\ \eta^e=([1],+-),\ \eta^o\eta^e=([1],--),\ \CN=([abc],\pi)
\end{equation}
where $\pm\pm$ are linear representations of $\IZ_2^a\times \IZ^b_2$, while $\pi$ is the projective representation, $\pi(a)=\sigma^x, \pi(b)=\sigma^z$, $\pi\otimes \pi = ++ \oplus -+\oplus +-\oplus --$.

\section{Data of Drinfeld center of $\Rep(H_8)$}
\label{app:Drinfeld_center}
The identification of anyons in different descriptions up to anyon permutation symmetry is shown in \tabref{tab:anyon_id}. Where $\CZ(\TY)$ refers to $\CZ(\TY(\IZ_2^2,\chi_\text{diag},+1))$ and the anyons are obtained using the method from the study of center of graded fusion category \cite{Gelaki:2009blp}. $D(D_8)^\gamma$ refers to the twisted quantum double of $D_8$ with 3-cocycle twist $\gamma$. 
\begin{table}[h!]
    \centering
$\begin{array}{|c|c|c|c|c|}
\hline
\CZ\text{(TY)} 
& \CZ(\text{Ising}^2)/(1+\psi\bar\psi\psi\bar\psi) 
& D(D_8)^\gamma
& \text{spin} 
& \text{quantum dim.} \\
\hline
X_{1,1} & 1 & [1,\, \chi_{1}^{D_8}] & 1 & 1 \\ \hline
X_{2,1} & \bar{\psi}_1 & [r^2,\, \chi_{3}^{D_8}] & -1 & 1 \\ \hline
X_{3,1} & \bar{\psi}_2 & [r^2,\, \chi_{4}^{D_8}] & -1 & 1 \\ \hline
X_{4,1} & \psi_1 \bar{\psi}_2 & [1,\, \chi_{4}^{D_8}] & 1 & 1 \\ \hline
X_{1,-1} & \psi_1 \bar{\psi}_1 & [1,\, \chi_{3}^{D_8}] & 1 & 1 \\ \hline
X_{2,-1} & \psi_1 & [r^2,\, \chi_{1}^{D_8}] & -1 & 1 \\ \hline
X_{3,-1} & \psi_2 & [r^2,\, \chi_{2}^{D_8}] & -1 & 1 \\ \hline
X_{4,-1} & \psi_1 \psi_2 & [1,\, \chi_{2}^{D_8}] & 1 & 1 \\ \hline
Y_{1,2} & \sigma_1 \bar{\sigma}_1 & [s,\, \chi_{\{0,1\}}^{\mathbb{Z}_2\times\mathbb{Z}_2}] & 1 & 2 \\ \hline
Y_{1,3} & \sigma_2 \bar{\sigma}_2 & [s,\, \chi_{\{0,0\}}^{\mathbb{Z}_2\times\mathbb{Z}_2}] & 1 & 2 \\ \hline
Y_{1,4} & A & [1,\, \chi_{5}^{D_8}] & 1 & 2 \\ \hline
Y_{2,3} & B & [r^2,\, \chi_{5}^{D_8}] & 1 & 2 \\ \hline
Y_{2,4} & \sigma_2 \bar{\sigma}_2 \bar{\psi}_1 & [s,\, \chi_{\{1,0\}}^{\mathbb{Z}_2\times\mathbb{Z}_2}] & -1 & 2 \\ \hline
Y_{3,4} & \sigma_1 \bar{\sigma}_1 \bar{\psi}_2 & [s,\, \chi_{\{1,1\}}^{\mathbb{Z}_2\times\mathbb{Z}_2}] & -1 & 2 \\ \hline
Z_{1,1} & \bar{\sigma}_1 \bar{\sigma}_2 & [r,\, \chi_{1}^{\mathbb{Z}_4}] & e^{-i\pi /4} & 2 \\ \hline
Z_{2,1} & \sigma_1 \bar{\sigma}_2 & [rs,\, \chi_{\{1,0\}}^{\mathbb{Z}_2\times\mathbb{Z}_2}] & 1 & 2 \\ \hline
Z_{3,1} & \bar{\sigma}_1 \sigma_2 & [rs,\, \chi_{\{1,1\}}^{\mathbb{Z}_2\times\mathbb{Z}_2}] & 1 & 2 \\ \hline
Z_{4,1} & \sigma_1 \sigma_2 & [r,\, \chi_{2}^{\mathbb{Z}_4}] & e^{i \pi/4} & 2 \\ \hline
Z_{1,-1} & \bar{\sigma}_1 \bar{\sigma}_2 \psi_1 & [r,\, \chi_{3}^{\mathbb{Z}_4}] & e^{i 3\pi/4} & 2 \\ \hline
Z_{2,-1} & \sigma_1 \bar{\sigma}_2 \bar{\psi}_1 & [rs,\, \chi_{\{0,0\}}^{\mathbb{Z}_2\times\mathbb{Z}_2}] & -1 & 2 \\ \hline
Z_{3,-1} & \bar{\sigma}_1 \sigma_2 \psi_1 & [rs,\, \chi_{\{0,1\}}^{\mathbb{Z}_2\times\mathbb{Z}_2}] & -1 & 2 \\ \hline
Z_{4,-1} & \sigma_1 \sigma_2 \bar{\psi}_1 & [r,\, \chi_{0}^{\mathbb{Z}_4}] & e^{-i 3\pi/4} & 2 \\ 
\hline
\end{array} $
    \caption{Equivalence between the different notations labeling anyons of $\CZ(\Rep(H_8))$.}
    \label{tab:anyon_id}
\end{table}

\section{Lagrangian Algebra of $\mathcal{A}_1$ to $\mathcal{A}_6$}
\label{app:LagrangianAlgebra}

In this section, we give a `physical' derivation of the data describing the Lagrangian Algebra of $\mathcal{A}_1$ to $\mathcal{A}_6$. In particular, we are going to derive the $M$-$3j$ symbol defined graphically as
\begin{figure}[th]
\begin{center}
\includegraphics[width= 0.4\textwidth]{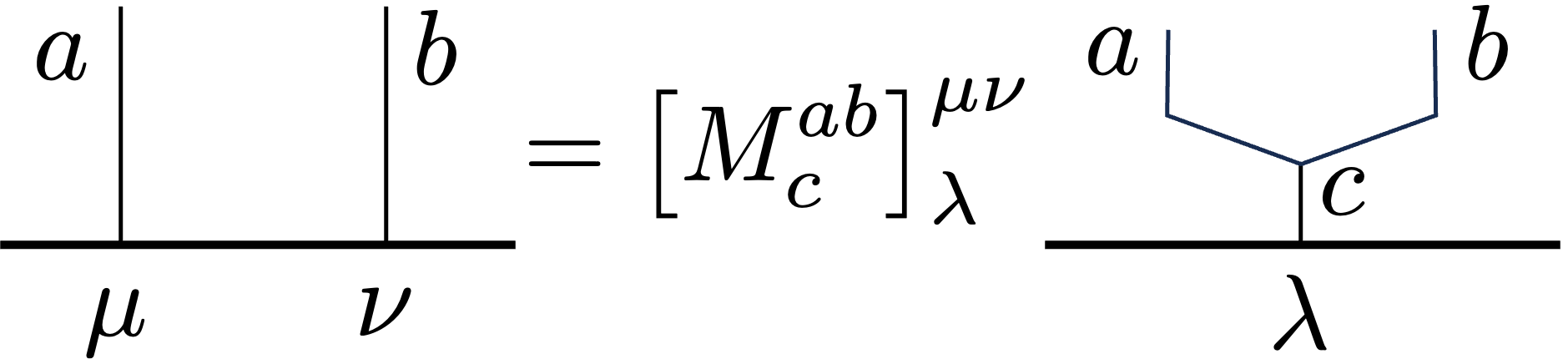}
\label{fig:Lagrange}
\end{center}
\end{figure}

where the thick bottom line indicates the boundary described by the Lagrangian Algebra $\mathcal{A} = \bigoplus_a n_a a$ and the vertical lines are labeled by objects in the Lagrangian algebra $a \in \mathcal{A}$. $\mu$, $\nu$ and $\lambda$ labels the internal dimensions of $a$, $b$ and $c$. $\mu = 0,1,...,n_a-1$; $\nu = 0,1,...,n_b-1$; $\lambda = 0,1,...,n_c-1$.

As discussed in section~\ref{sec:boundary_sym}, $\mathcal{A}_5$ and $\mathcal{A}_6$ leads to a Vec($D_8$) symmetry with anomaly. That means, the condensation in $\mathcal{A}_5$ and $\mathcal{A}_6$ are of the symmetry charges of the group $D_8$ (including four one-dimensional irrep and one two-dimensional irrep). The $M$-$3j$ symbol of $\mathcal{A}_5$ and $\mathcal{A}_6$ are hence the $3j$ symbol (the Clebsch-Gordan coefficient) of the irreps of $D_8$. Since this is standard data, we will not derive it here. We list part of the data for reference. 
$$
\begin{array}{l}
M^{1,A}_A = \begin{pmatrix} 1 & 0 \\ 0 & 1 \end{pmatrix}, \ M^{\psi_1\bar{\psi}_1,A}_A = \begin{pmatrix} 1 & 0 \\ 0 & -1 \end{pmatrix}, \ M^{\psi_1\bar{\psi}_2,A}_A = \begin{pmatrix} 0 & 1 \\ 1 & 0 \end{pmatrix}, \ M^{\psi_1\psi_2,A}_A = \begin{pmatrix} 0 & -1 \\ 1 & 0 \end{pmatrix},\\
M^{A,A}_1 = \frac{1}{\sqrt{2}} \begin{pmatrix} 1 & 0 \\ 0 & 1 \end{pmatrix}, \ M^{A,A}_{\psi_1\bar{\psi}_1} = \frac{1}{\sqrt{2}}\begin{pmatrix} 1 & 0 \\ 0 & -1 \end{pmatrix}, \ M^{A,A}_{\psi_1\bar{\psi}_2} = \frac{1}{\sqrt{2}}\begin{pmatrix} 0 & 1 \\ 1 & 0 \end{pmatrix}, \ M^{A,A}_{\psi_1\psi_2} = \frac{1}{\sqrt{2}}\begin{pmatrix} 0 & -1 \\ 1 & 0 \end{pmatrix}
\end{array}
$$
The row and column indices $\mu, \nu =0,1$ are the internal dimensions of $A$.

$\mathcal{A}_1$ to $\mathcal{A}_4$, on the other hand, all lead to the Rep($H_8$) symmetry. Therefore, their $M$-$3j$ symbols do not directly correspond to the $3j$ symbol of a group. Since the four Lagrangian Algebras all have the same structure, we will illustrate the derivation more carefully for $\mathcal{A}_1$ and comment on the difference of $\mathcal{A}_2$, $\mathcal{A}_3$ and $\mathcal{A}_4$. Recall that
$$
\mathcal{A}_1 = 1 \oplus \psi_1\bar{\psi}_1 \oplus \sigma_1\bar{\sigma}_1 \oplus \sigma_2\bar{\sigma}_2 \oplus A
$$

All condensed anyons in $\mathcal{A}_1$ condense with multiplicity $1$, indicating that only one dimension within their internal space is condensable. $\psi_1\bar{\psi}_1$ is an abelian anyon, therefore its condensation is straight-forward. Its fusion rule follows directly from the bulk
$$
\psi_1\bar{\psi}_1 \otimes \psi_1\bar{\psi}_1 = 1, \psi_1\bar{\psi}_1 \otimes \sigma_i\bar{\sigma}_i = \sigma_i\bar{\sigma}_i, \  \psi_1\bar{\psi}_1 \otimes A = A
$$
$\sigma_1\bar{\sigma}_1$, $\sigma_2\bar{\sigma}_2$, and $A$ are all two-dimensional anyons. Their condensable dimensions can be understood as follows. If only  $\psi_1\bar{\psi}_1$ is condensed, the bulk topological order  decouples into two copies of Toric Code (i.e. $Z_2$ gauge theory). $\sigma_1\bar{\sigma}_1$ splits into $e_1$ and $m_1$, the two bosonic anyons in the first Toric Code. $\sigma_2\bar{\sigma}_2$ splits into $e_2$ and $m_2$, the two bosonic anyons in the second Toric Code.
Similarly, $A$ splits into $e_1e_2$ and $m_1m_2$. To further condense $\sigma_1\bar{\sigma}_1$,$\sigma_2\bar{\sigma}_2$ and $A$ in $\mathcal Z(\Rep(H_8))$, only one anyon from each of the splittings may condense. WLOG, we can set the condensable dimension of these three anyons to be $|e_1\rangle$, $|e_2\rangle$, and $|e_1e_2\rangle$. From the condensable dimensions, we can derive their multiplication within $\mathcal{A}_1$ (which we denote with $\times$ in contrast to the fusion $\otimes$ in the bulk)
$$
\sigma_1\bar{\sigma}_1 \times \sigma_2\bar{\sigma}_2 = A, \ \sigma_1\bar{\sigma}_1 \times A =  \sigma_2\bar{\sigma}_2, \ \sigma_2\bar{\sigma}_2 \times A =  \sigma_1\bar{\sigma}_1
$$
The multiplication of a pair of $\sigma_1\bar{\sigma}_1$ ($\sigma_2\bar{\sigma}_2$, $A$), on the other hand, requires a bit more work. While within the two copies of Toric Code, $e_1$, $e_2$, $e_1e_2$ are all order two anyons, this is not true in the original topological order before the condensation of $\psi_1\bar{\psi}_1$. In fact, since $e_1$ is mapped to $m_1$ under $\sigma_1\sigma_2$, the correct multiplication rule can be found from the following calculation.
$$
|e_1\rangle |e_1\rangle = \frac{1}{\sqrt{2}} \left(\frac{1}{\sqrt{2}}|e_1\rangle |e_1\rangle + \frac{1}{\sqrt{2}}|m_1\rangle |m_1\rangle\right) + \frac{1}{\sqrt{2}}\left(\frac{1}{\sqrt{2}}|e_1\rangle |e_1\rangle - \frac{1}{\sqrt{2}}|m_1\rangle |m_1\rangle\right)
$$
The first term $\left(|e_1\rangle |e_1\rangle + |m_1\rangle |m_1\rangle\right)$ represents the true identity anyon, while the second term $\left(|e_1\rangle |e_1\rangle - |m_1\rangle |m_1\rangle\right)$ represents the $\psi_1\bar{\psi}_1$ anyon because it carries a $-1$ change under $\sigma_1\sigma_2$. Therefore, the fusion of two copies of the (condensable dimension) of $\sigma_1\bar{\sigma}_1$ (and similarly $\sigma_2\bar{\sigma}_2$ and $A$) within $\mathcal{A}_1$ is given by
$$
\sigma_1\bar{\sigma}_1 \times \sigma_1\bar{\sigma}_1 = \sigma_2\bar{\sigma}_2 \times \sigma_2\bar{\sigma}_2 = A \times A = \frac{1}{\sqrt{2}} \left(1 \oplus  \psi_1\bar{\psi}_1 \right)
$$

Written as $M$-$3j$ symbols, we have
$$
\begin{array}{l}
M^{1,x}_x = 1, \ M^{\psi_1\bar{\psi}_1, \psi_1\bar{\psi}_1}_1 = 1, \ M^{\psi_1\bar{\psi}_1, \sigma_i\bar{\sigma}_i}_{\sigma_i\bar{\sigma}_i} = 1, \ M^{\psi_1\bar{\psi}_1, A}_{A} = 1, \ M^{\sigma_i\bar{\sigma}_i, \sigma_{\bar{i}}\bar{\sigma}_{\bar{i}}}_A = 1, \ M^{\sigma_i\bar{\sigma}_i, A}_{\sigma_{\bar{i}}\bar{\sigma}_{\bar{i}}} = 1, \\
M^{\sigma_i\bar{\sigma}_i,\sigma_i\bar{\sigma}_i}_1 = M^{\sigma_i\bar{\sigma}_i,\sigma_i\bar{\sigma}_i}_{\psi_1\bar{\psi}_1} = M^{A,A}_1 = M^{A,A}_{\psi_1\bar{\psi}_1} = \frac{1}{\sqrt{2}}
\end{array}
$$

For $\mathcal{A}_2$, the $M$-$3j$ symbols are similar, with some important differences. $\mathcal{A}_2$ contains the same set of anyons as $\mathcal{A}_1$ except that $A$ is replace by $B$. $B$ differs from $A$ by a single $\psi$. WLOG, we can assume that the condensed dimension in $\sigma_1\bar{\sigma}_1$ changes from $e_1$ to $m_1$ while the condensed dimension in $\sigma_2\bar{\sigma}_2$ remains to be $e_2$. The condensed dimension in $B$ is then $m_1e_2$. The product among the three still takes the form
$$
\sigma_1\bar{\sigma}_1 \times \sigma_2\bar{\sigma}_2 = B, \ \sigma_1\bar{\sigma}_1 \times B =  \sigma_2\bar{\sigma}_2, \ \sigma_2\bar{\sigma}_2 \times B =  \sigma_1\bar{\sigma}_1
$$
The product of a pair of $\sigma_1\bar{\sigma}_1$ changes because
$$
|m_1\rangle|m_1\rangle = \frac{1}{\sqrt{2}} \left(\frac{1}{\sqrt{2}}|e_1\rangle|e_1\rangle + \frac{1}{\sqrt{2}} |m_1\rangle|m_1\rangle\right) - \frac{1}{\sqrt{2}}\left(\frac{1}{\sqrt{2}}|e_1\rangle|e_1\rangle - \frac{1}{\sqrt{2}}|m_1\rangle|m_1\rangle\right)
$$
Therefore,
$$
\sigma_1\bar{\sigma}_1 \times \sigma_1\bar{\sigma}_1 = \frac{1}{\sqrt{2}} \left( 1 - \psi_1\bar{\psi}_1 \right)
$$
while the fusion of a pair of $\sigma_2\bar{\sigma}_2$ remains the same
$$
\sigma_2\bar{\sigma}_2 \times \sigma_2\bar{\sigma}_2  = \frac{1}{\sqrt{2}} \left( 1 + \psi_1\bar{\psi}_1 \right)
$$
Moreover, since 
$$
\left(\psi_1\bar{\psi}_1 \times \sigma_1\bar{\sigma}_1 \right) \times \sigma_1\bar{\sigma}_1 = \psi_1\bar{\psi}_1 \times \left(\sigma_1\bar{\sigma}_1  \times \sigma_1\bar{\sigma}_1\right)  = \frac{1}{\sqrt{2}} \left( \psi_1\bar{\psi}_1-1 \right)
$$
we have
$$
\psi_1\bar{\psi}_1 \times \sigma_1\bar{\sigma}_1 = - \sigma_1\bar{\sigma}_1
$$
while the fusion of $\psi_1\bar{\psi}_1$ and $\sigma_2\bar{\sigma}_2$ remains the same
$$
\psi_1\bar{\psi}_1 \times \sigma_2\bar{\sigma}_2 = \sigma_2\bar{\sigma}_2
$$
The $M$-$3j$ symbols differ from those for $\CA_1$ in the following terms
$$
M^{\psi_1\bar{\psi}_1, \sigma_1\bar{\sigma}_1}_{\sigma_1\bar{\sigma}_1} = -1, \ M^{\sigma_1\bar{\sigma}_1,\sigma_1\bar{\sigma}_1}_{\psi_1\bar{\psi}_1} = -\frac{1}{\sqrt{2}}
$$

\section{Rigorous Derivation of the Dual Lattice Model Symmetry}\label{app:LatticeDualSymmetry}
In this section, we will derive the symmetry of the dual lattice model obtained from gauging $\mathbb{Z}_2^{oe}$ in a more rigorous way. 

The Kramers-Wannier transformation operator $\Noe$ that implements the gauging can be defined by the following two conditions: 
\begin{enumerate}
    \item $\Noe(X_n,Z_nZ_{n+1})=(Z_nZ_{n+1},X_{n+1})\Noe$, where periodic boundary condition is assumed on both sides of $\Noe$, i.e. $(X,Z)_{2N+n}\equiv (X,Z)_n$.  
    \item $\Noe[\ket{00\cdots 0}+\ket{11\cdots 1}]/2=\ket{++\cdots +}$. 
\end{enumerate}
In particular, the first condition implies $\Noe \prod_n X_n=\Noe$ and thus $\Noe$ annihilates all $\mathbb{Z}_2^{oe}$ odd states. 

Let $H$ be a $\mathbb{Z}_2^{oe}$ symmetric Hamiltonian and suppose $H$ is a sum of local operators. We claim there is a unique local Hamiltonian $\tilde H$ satisfying 
\begin{align}
    \tilde H\Noe =\Noe H,  
\end{align}
and this $\tilde H$ is the Hamiltonian after gauging. The existence of such an $\tilde H$ is relatively easy to see: Without loss of generality, we can assume that each local term in $H$ is $\IZ_2^{oe}$ symmetric (if not, symmetrize it). Such a local term can be locally generated by $X$ and $ZZ$ operators, and thus can be mapped to a local term of $\tilde H$ by the operator mapping rule of $\Noe$. To prove the uniqueness of $\tilde H$, suppose $\tilde H_1$ and $\tilde H_2$ both satisfy the above equation and are both sums of local terms. 
Let $\eta_X:=\prod_nX_n$. Since the image of $\Noe$ contains all $\eta_X$ even states, $\Delta \tilde H:= \tilde H_1-\tilde H_2$ must act trivially on all $\eta_X$ even states. Therefore, $\Delta \tilde H(1+\eta_X)/2=0$, or equivalently, $\Delta \tilde H=-\Delta \tilde H\eta_X$. This implies $\Delta \tilde H=0$ by the following lemma. 
\begin{lemma}\label{thm:LocalityLemma}
    \textbf{(Pauli String is not Compatible with Locality)} Let $\{ O_n \}$ and $\{ O'_n \}$ be sets of local operators such that $O_n$ and $O'_n$ are supported near the site $n$. If 
    \begin{align}
        \eta_X\sum_nO_n=\sum_nO'_n, 
        \label{eq:PauliLocalityLemmaCondition}
    \end{align}
    then $\sum_nO_n=\sum_nO'_n=0$. 

\end{lemma}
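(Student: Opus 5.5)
Expand both sides of Eq.~\eqref{eq:PauliLocalityLemmaCondition} in the Pauli-string basis and compare coefficients. Every operator on the $2N$ qubits is uniquely a linear combination of Pauli strings $P=\bigotimes_n\sigma_n$ with $\sigma_n\in\{I,X,Y,Z\}$. Multiplying by $\eta_X=\prod_n X_n$ permutes these strings bijectively, up to phases: $X_nP$ replaces $I\leftrightarrow X$ and $Y\leftrightarrow Z$ on site $n$. The key quantity is the support size $|P|$, the number of non-identity sites. Since $\eta_X$ acts at every site, a string $P$ of support $s$ is sent to a string whose support is at least $2N-s$: every site where $P$ acts as $I$ becomes $X$.

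Now fix a locality range $r$ for the $O_n$ and $O'_n$, and take $2N>2(2r+1)$. Then every Pauli string appearing in $\sum_n O'_n$ has support at most $2r+1<N$. Every string appearing in $\eta_X\sum_nO_n$ is $\eta_X P$ for some $P$ of support at most $2r+1$, so it has support at least $2N-2r-1>N$. The two sets of strings are therefore disjoint. By uniqueness of the Pauli expansion, the equation $\eta_X\sum_n O_n=\sum_n O'_n$ forces both sides to vanish, which gives $\sum_n O'_n=0$. Multiplying by $\eta_X^{-1}=\eta_X$ gives $\sum_nO_n=0$.

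The only subtle point is that individual $O_n$ need not vanish, since cancellations can occur between different $n$. The argument never requires that, because it works with the total sum: the coefficient of each string in the sum is whatever it is, and the disjointness of supports applies string by string. The step that needs care is therefore the hypothesis on system size. ``Local'' must mean a uniform range $r$ independent of $N$, and the lemma holds for $N$ larger than $2r+1$, i.e.\ in the thermodynamic regime relevant to the uniqueness argument above.
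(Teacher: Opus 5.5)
Your proof is correct, but it takes a different route from the paper's. The paper partially traces the hypothesis over two far-separated sites $j,j'$. The left-hand side vanishes because each $\eta_X O_n$ carries a bare $X$ on at least one of these sites, while the right-hand side retains the terms far from $j,j'$. From this the paper infers that $\sum_n O'_n$ commutes with every local operator, hence is a scalar (the commutant of the local algebra is trivial), and a second partial trace shows the scalar is zero. You instead note that the two sides lie in complementary subspaces of operator space, spanned by Pauli strings of support $<N$ and $>N$ respectively, so linear independence kills both sides in one step. Your route is more elementary and explicit. It needs no centrality or commutant step, and no implicit conversion of the partial-trace identity into commutation relations. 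It also gives a concrete threshold: you only need each $O_n,O'_n$ to act on fewer than $N$ of the $2N$ sites, with no geometric locality at all. The paper's partial-trace argument is basis-free and immediately covers any on-site product $\prod_n u_n$ with $\mathrm{Tr}\,u_n=0$, such as qudit shift operators. Your counting extends the same way if you replace Pauli strings by products of a local operator basis containing the identity, since multiplying an identity factor by a traceless $u_n$ yields a purely traceless factor.
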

\begin{proof}
    Consider partial tracing Eq.\,\ref{eq:PauliLocalityLemmaCondition} over two sites $j$ and $j'$ that are sufficiently far away. The left-hand side of the equation will vanish. This is because for each $n$, at least one of the two sites $j$ and $j'$ is away from the support of $O_n$, and tracing over that site kills $\eta_XO_n$. 
    On the right-hand side, however, all terms away from sites $j$ and $j'$ are unaffected. 
    This suffices to prove that $\sum_n  O_n'$ commutes with any local operator and hence must be a number. Moreover, this number must be zero since its partial trace vanishes. 
\end{proof}
\noindent Our proof for the vanishing of $\Delta \tilde H$ can be summarized as the following corollary. 
\begin{corollary}
    Let $K$ be a sum of local operators. $K\Noe=0$ implies $K=0$.
\end{corollary}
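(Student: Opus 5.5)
The plan is to reduce the corollary to Lemma~\ref{thm:LocalityLemma}, using two facts about $\Noe$ fixed by its defining conditions: its image is exactly the $\eta_X$-even subspace, and the conjugation rule $\Noe(X_n,Z_nZ_{n+1})=(Z_nZ_{n+1},X_{n+1})\Noe$ maps local symmetric operators to local operators.

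\textbf{Step 1: the image of $\Noe$.} First I will show that the image of $\Noe$ is the whole $\eta_X$-even subspace. From condition 1, $\eta_X\Noe=\prod_n Z_nZ_{n+1}\,\Noe=\Noe$, so the image lies in the even sector. For the converse, note that the even sector is spanned by states of the form $\prod_{n\in S}Z_n\ket{++\cdots+}$ with $|S|$ even. Each such product $\prod_{n\in S}Z_n$ can be written as a product of strings $Z_nZ_{n+1}\cdots Z_mZ_{m+1}$, i.e. of nearest-neighbor pairs $Z_kZ_{k+1}$. By condition 1, each pair $Z_kZ_{k+1}$ is the image $X_{k+1}\Noe=\Noe Z_kZ_{k+1}$, and more generally every product of pairs can be pulled through $\Noe$ as a product of $X$'s on the left-hand side. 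Combining this with condition 2, $\Noe\,(\cdots)\,(\ket{0\cdots0}+\ket{1\cdots1})/2$ realizes every basis state of the even sector. This is the step I expect to carry the only real bookkeeping, namely checking the periodic-boundary consistency of writing an even-size $Z$-string as a product of pairs. That is fine because $|S|$ is even.

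\textbf{Step 2: $K$ vanishes on the even sector.} Given Step 1, $K\Noe=0$ means that $K$ annihilates every $\eta_X$-even state. Hence $K(1+\eta_X)/2=0$, i.e. $K\eta_X=-K$.

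\textbf{Step 3: apply the lemma.} Since $\eta_X$ is a product of commuting single-site unitaries and $\eta_X^2=1$, multiplying on the left by $\eta_X$ gives $\eta_X K\eta_X=-\eta_XK$. The operator $\eta_X K\eta_X=\sum_n\eta_X O_n\eta_X$ is again a sum of local terms $O'_n$ supported near site $n$. Thus $\eta_X\sum_n(-O_n)=\sum_nO'_n$, which is exactly the hypothesis of Lemma~\ref{thm:LocalityLemma} with the local sets $\{-O_n\}$ and $\{O'_n\}$. The lemma then yields $\sum_nO_n=0$, i.e. $K=0$.
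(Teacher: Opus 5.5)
Your proof is correct and follows the paper's route: the image of $\Noe$ contains the $\eta_X$-even sector, so $K(1+\eta_X)/2=0$, and Lemma~\ref{thm:LocalityLemma} then forces $K=0$. You go a bit further than the paper by justifying the image claim, and your conjugation by $\eta_X$ in Step 3 correctly handles the fact that one gets $K\eta_X=-K$ rather than the paper's $K=-\eta_X K$. One small fix in Step 1: to pull the pairs through you need $Z_kZ_{k+1}\Noe=\Noe X_k$ (i.e.\ $\Noe X_k=Z_kZ_{k+1}\Noe$), not $X_{k+1}\Noe=\Noe Z_kZ_{k+1}$, though both relations are contained in condition 1.
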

\begin{proof}
    Since the image of $\Noe$ contains all $\eta_X$ even states, $K\Noe=0$ implies $K(1+\eta_X)/2=0$. Equivalently, $K=-\eta_XK$ and it follows from the above lemma that $K=0$. 
\end{proof}

Now let us consider the symmetry of $\tilde H$. From $\eta_X\Noe=\Noe$, we have 
\begin{align}
    (\eta_X\tilde H\eta_X-\tilde H)\Noe=0. 
\end{align}
Since $\eta_X\tilde H\eta_X-\tilde H$ is a sum of local operators, it must vanish by the above corollary. Therefore, $\tilde H$ has a $\mathbb{Z}_2^X$ symmetry generated by $\eta_X$. 

If $H$ has the lattice ${\rm Rep}(H_8)$ symmetry introduced in the main text, $\tilde H$ will also have some additional symmetries which we now determine. Let $\eta_Z:=\prod_n Z_n$. We have $\eta_Z\Noe=\Noe \eta_o$. It follows that $(\eta_Z \tilde H\eta_Z-\tilde H)\Noe=0$. By the above corollary, $\eta_Z \tilde H\eta_Z-\tilde H=0$ and hence $\tilde H$ has a $\mathbb{Z}_2^Z$ symmetry generated by $\eta_Z$. Considering $\eta_e$ will lead to the same result. 

Finally, let us consider the fate of the noninvertible symmetry generator $\mc N$. We define $V:=T\prod_n\mc H_n$ where $T$ is the translation operator mapping $(X_n,Z_n)$ to $(X_{n+1},Z_{n+1})$ and $\mc H_n$ is the Hadamard gate acting on the site $n$. One can verify
\begin{align}
    (1+\eta_X)V\Noe=\Noe\mc N
\end{align}
by checking their actions on all $X_n$ and on the state $\ket{00\cdots 0}$. Now apply both sides of the above equation to $H$. The left-hand side can be computed as 
\begin{align}
    (1+\eta_X)V\Noe H=V(1+\eta_Z)\Noe H=V\tilde H (1+\eta_Z)\Noe,  
\end{align}
while the right-hand side is 
\begin{align}
    \Noe\mc N H=\tilde H\Noe \mc N=\tilde H (1+\eta_X)V\Noe=\tilde HV(1+\eta_Z)\Noe. 
\end{align}
These imply 
\begin{align}
    (V^{-1}\tilde H V-\tilde H)(1+\eta_Z)(1+\eta_X)=0. 
\end{align}
By a proof similar to that for Lemma \ref{thm:LocalityLemma}, we must have $V^{-1}\tilde H V-\tilde H=0$. 

To summarize, we have proved that the dual Hamiltonian $\tilde H$ commutes with $\eta_X$, $\eta_Z$, and $V:=T\prod_n \mc H_n$.

\bibliography{Refs.bib}

\end{document}